\newcommand{\clu}[2][]{\todo[author=Lu, color=HotPink3!40,bordercolor=HotPink3, #1]{#2}}
\definecolor{pyC0}{HTML}{1f77b4}
\definecolor{pyC1}{HTML}{ff7f0e}
\definecolor{pyC2}{HTML}{2ca02c}
\newcommand{\ptt}{perpane}
\newcommand{\Call}[2]{#1($#2$)}
\theoremstyle{plain} 
\newtheorem{theorem}{Theorem}[section]
\newtheorem{lemma}[theorem]{Lemma}
\newtheorem{observation}[theorem]{Observation}
\newtheorem{proposition}[theorem]{Proposition}
\newtheorem*{definition}{Definition}
\newtheorem{conjecture}{Conjecture}
\newcommand\fixed[1]{}
\newcommand\myccnot[3]{
\yquant
["north east:\textcolor{#1}{$\{#2\}$}" {font=\protect\scriptsize, inner sep=0mm}, style={#1}]#3
}
\newcommand\mycnot[2]{
\yquant
["north east:$\{#1\}$" {font=\protect\scriptsize, inner sep=0mm}]#2
}
\let\oldket\ket
\renewcommand{\ket}[1]{\{#1\}}
\newcommand{\rowadd}[1]{\textsc{cnot}(#1)}
\tikzstyle{filled vertex}  = [{circle,draw=black,fill=purple!80,inner sep=1pt}]
\tikzstyle{empty vertex}  = [{circle, draw, fill = white, inner sep=1pt, minimum width=1.5pt}]
\definecolor{purplecolor}{rgb}{0.65,0.12,0.82} 
\title{Toward Minimum Graphic Parity Networks}
\author[2]{Yixin Cao}
\author[1,3]{Yiren Lu}
\author[1,3]{Junhong Nie}
\author[1,3]{Xiaoming Sun}
\author[1,3]{Guojing Tian}
\affil[1]{State Key Lab of Processors, Institute of Computing Technology, Chinese Academy of Sciences, Beijing, China}
\affil[2]{Department of Computing, Hong Kong Polytechnic University, Hong Kong, China}
\affil[3]{School of Computer Science and Technology, University of Chinese Academy of Sciences, Beijing, China}
\crefname{algocf}{alg.}{algs.}
\Crefname{algocf}{Algorithm}{Algorithms}
\date{}
\begin{document}
\maketitle

\listoftodos

\begin{abstract}
  Quantum circuits composed of CNOT and $R_z$ are fundamental building blocks of many quantum algorithms, so optimizing the synthesis of such quantum circuits is crucial. We address this problem from a theoretical perspective by studying the graphic parity network synthesis problem. A graphic parity network for a graph~$G$ is a quantum circuit composed solely of CNOT gates %
  where each edge of~$G$ is represented in the circuit, and the final state of the wires matches the original input. We aim to synthesize graphic parity networks with the minimum number of gates, specifically for quantum algorithms addressing combinatorial optimization problems with Ising formulations.  We demonstrate that a graphic parity network for a connected graph with~$n$ vertices and~$m$ edges requires at least~$m+n-1$ gates. This lower bound can be improved to $m+\Omega(m) = m+\Omega(n^{1.5})$ when the shortest cycle in the graph has a length of at least five.  We complement this result with a simple randomized algorithm that synthesizes a graphic parity network with expected~$m + O(n^{1.5}\sqrt{\log n})$ gates.

  Additionally, we begin exploring connected graphs that allow for graphic parity networks with exactly~$m+n-1$ gates. We conjecture that all such graphs belong to a newly defined graph class.  Furthermore, we present a linear-time algorithm for synthesizing minimum graphic parity networks for graphs within this class.  However, this graph class is not closed under taking induced subgraphs, and we show that recognizing it is \NP-complete, which is complemented with a fixed-parameter tractable algorithm parameterized by the treewidth.
\end{abstract}

\section{Introduction}\label{sec:intro}

Over the past decade, there has been significant progress in quantum computation, with advancements in both experimental and theoretical aspects~\cite{bravyiHighthresholdLowoverheadFaulttolerant2024,kallaugherExponentialQuantumSpace2023a,niBeatingBreakevenPoint2023,caoGenerationGenuineEntanglement2023}.  Compared to classical circuits, quantum circuits are more sensitive to noise, which is one of the bottlenecks limiting the scalability of quantum computers. Each quantum gate in a circuit may introduce some noise to the output quantum state, and as the circuit size increases, the accumulated noise can overwhelm any meaningful computational results.  Since we are still in the noisy intermediate-scale quantum (\textsc{nisq}) era, there exists a fundamental necessity to minimize the number of gates used in quantum circuits~\cite{Mottonen2004a,Vartiainen2004,10.5555/3381089.3381102,Wu2023}, especially for two-qubit gates, which are even more vulnerable to noise than single-qubit gates~\cite{caoGenerationGenuineEntanglement2023,Li2023}.

{%
Subcircuits consisting only of $\textsc{cnot}$ and $R_z$ gates appear in many quantum algorithms, such as quantum simulation \cite{namAutomatedOptimizationLarge2018}, quantum approximate optimization algorithm (\textsc{qaoa}) \cite{farhiQuantumApproximateOptimization2014a}, variational quantum eigensolver (\textsc{vqe})~\cite{peruzzo2014VQE}, etc. To optimize such quantum circuits with massive $\{\textsc{cnot}, R_z\}$-only components, a widely employed method is to independently resynthesize these components, and this problem has been extensively treated by \textit{parity network synthesis} \cite{amyControlledNOTComplexityControlledNOT2018,gheorghiuReducingCNOTCount2023}.
Parity network is a \textsc{cnot}-only quantum circuit. In a sense, \textsc{cnot} gates can be viewed as the quantum analog of classical \textsc{xor} gates.\footnote{To make our work accessible to a general audience, we will minimize the use of quantum computing jargon in the main text.  Consequently, our definitions and explanations might seem imprecise to quantum experts.  For a more detailed and rigorous treatment, please refer to \cref{app:quantum-intro}.}
A \textsc{cnot} gate applied to qubit wires~$i$ and~$j$, denoted by~$\rowadd{i,j}$, changes the value of wire~$j$ to~$a\oplus b$, where~$a,b\in \mathbb{F}_2$ are the values of wires~$i$ and~$j$, respectively. Here, wire~$i$ is the \textit{control wire}, and wire~$j$ is the \textit{target wire}.
A \textsc{cnot} circuit is called a \emph{parity network} for a set family~$S$ if every element of~$S$ appears in the circuit as a \textit{term} at some intermediate point, and the final values of the wires are the same as the original input, denoted by ${x_1,x_2,\dots,x_n \in \mathbb{F}_2}$~\cite{amyControlledNOTComplexityControlledNOT2018}, and we say that a wire has the \textit{term}~$A = \ket{a_{1}, \ldots, a_{|A|}}$ on it if the value of a wire evaluates to~$x_{a_{1}}\oplus \cdots \oplus x_{a_{|A|}}$.

The set family~$S$ defines a hypergraph.  In certain applications, it is actually a graph, namely all terms in $S$ are two-element sets. 
For example, the maximum cut problem asks for a bipartition of the vertex set of a graph such that the number of edges between the two parts is maximized.
The well-known quantum adiabatic algorithm (\textsc{qaa}) \cite{farhiQuantumComputationAdiabatic2000a} and \textsc{qaoa} \cite{farhiQuantumApproximateOptimization2014a} use the following formulation to solve the maximum cut problem on a graph~$G$:
\[
  \max_{x\in \{-1,1\}^{|V(G)|}}\sum_{(u,v)\in E(G)} \frac{1-x_ux_v}{2}.
\]
A principal component of the quantum circuit of these algorithms is a parity network with each term being a two-element set. A similar approach has been taken by many authors~\cite{wangQuantumApproximateOptimization2018, wurtzMaxCutQuantumApproximate2021, wurtzFixedangleConjecturesQuantum2021, zhouQuantumApproximateOptimization2020, bassoPerformanceLimitationsQAOA2022a, bassoQuantumApproximateOptimization2022a, anshuConcentrationBoundsQuantum2023c,shaydulinEvidenceScalingAdvantage2024}, and there have been preliminary physical demonstrations of these quantum algorithms on \textsc{nisq} devices \cite{paganoQuantumApproximateOptimization2020, harriganQuantumApproximateOptimization2021a}.
Recently, heuristic attempts have been made to synthesize small parity networks~\cite{amyControlledNOTComplexityControlledNOT2018,gheorghiuReducingCNOTCount2023, debrugiereFasterShorterSynthesis2024}.  As far as we know, however, none of them is accompanied with analysis of performance.
Such parity networks where all terms in $S$ are two-element sets are called \emph{graphic parity networks}, and see \cref{fig:pn-example} as examples. This paper is mainly concerned with the graphic parity network synthesis problem.

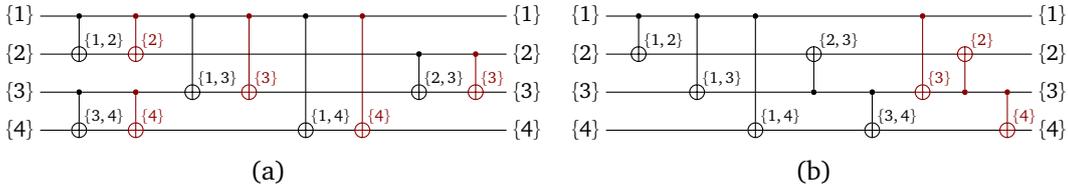
\begin{figure}[ht!]
  \centering

  \subcaptionbox{\label{fig:trivial-gpn}}{
    \centering
    \begin{tikzpicture}[scale=.75]
      \begin{yquant}[operator/separation=2.3mm]
        qubit {$\{1\}$} x[1];
        qubit {$\{2\}$} x[+1];
        qubit {$\{3\}$} x[+1];
        qubit {$\{4\}$} x[+1];

        \mycnot{\protect{1, 2}}{cnot x[1] | x[0]};

        \myccnot{Red4}{2}{cnot x[1] | x[0]};

        \mycnot{\protect{3, 4}}{cnot x[3] | x[2]};

        \myccnot{Red4}{4}{cnot x[3] | x[2]};

        \mycnot{\protect{1, 3}}{cnot x[2] | x[0]};
        \myccnot{Red4}{3}{cnot x[2] | x[0]};

        \mycnot{\protect{1, 4}}{cnot x[3] | x[0]};
        \myccnot{Red4}{4}{cnot x[3] | x[0]};

        \mycnot{\protect{2, 3}}{cnot x[2] | x[1]};
        \myccnot{Red4}{3}{cnot x[2] | x[1]};

        output {$\{1\}$} x[0];
        output {$\{2\}$} x[1];
        output {$\{3\}$} x[2];
        output {$\{4\}$} x[3];
      \end{yquant}
    \end{tikzpicture}
  }
  \subcaptionbox{\label{fig:optimal-gpn}}{
    \centering
    \begin{tikzpicture}[scale=.75]
      \begin{yquant}[operator/separation=1.2mm]
        qubit {$\{1\}$} x[1];
        qubit {$\{2\}$} x[+1];
        qubit {$\{3\}$} x[+1];
        qubit {$\{4\}$} x[+1];

        \mycnot{\protect{1, 2}}{cnot x[1] | x[0]};

        \mycnot{\protect{1, 3}}{cnot x[2] | x[0]};

        \mycnot{\protect{1, 4}}{cnot x[3] | x[0]};

        \mycnot{\protect{2, 3}}{cnot x[1] | x[2]};

        \mycnot{\protect{3, 4}}{cnot x[3] | x[2]};

        \myccnot{Red4}{3}{cnot x[2] | x[0]};

        \myccnot{Red4}{2}{cnot x[1] | x[2]};
        \myccnot{Red4}{4}{cnot x[3] | x[2]};

        output {$\{1\}$} x[0];
        output {$\{2\}$} x[1];
        output {$\{3\}$} x[2];
        output {$\{4\}$} x[3];
      \end{yquant}
    \end{tikzpicture}
  }
  \caption{\added{%
      Two graphic parity networks for the graph with edges $\{\{1, 2\}, \{1, 3\}, \{1, 4\}, \{2, 3\}, \{3, 4\}\}$, shown in \cref{fig:diamond}.} {Each line represents a qubit wire labeled at the left.}
    We use~$i \boldsymbol{\cdot}\!\!-\!\!\oplus j$ to denote~$\rowadd{i,j}$.
    Indicated at the above right corner of~$\oplus$ is the resulting term of this operation.  {The black terms correspond to the sets, and the red do not.} %
  }
  \label{fig:pn-example}
\end{figure}
}

\begin{figure}[ht!]
  \centering\small
  \subcaptionbox{\label{fig:diamond}}{
    \centering
    \begin{tikzpicture}[every node/.style={empty vertex}]
      \node (v1) at (3, 1) {$1$};
      \foreach \i in {2, 3, 4} {
          \node (v\i) at (\i, 0) {$\i$};
          \draw (v1) -- (v\i);
        }
      \draw (v2) -- (v3) -- (v4);
    \end{tikzpicture}
  }
  \hspace{1cm}
  \subcaptionbox{\label{fig:k2p}}{
    \centering
    \begin{tikzpicture}
      \foreach \i in {1, 2}
      \node[empty vertex] (v\i) at ({\i+1.5}, 1) {$\i$};
      \foreach[count=\i from 3] \x/\l in {1/3, 2/4, 3/5, 5/p} {
          \node[empty vertex] (u\i) at (\x, 0) {$\l$};
          \draw (v1) -- (u\i) -- (v2);
        }
      \node at (4, 0) {$\cdots$};
    \end{tikzpicture}
  }
  \caption{(a) A chordal graph, and (b)~$K_{2, p}$, which can be made chordal by adding the edge~$(1, 2)$.}
  \label{fig:example-graphs}
\end{figure}
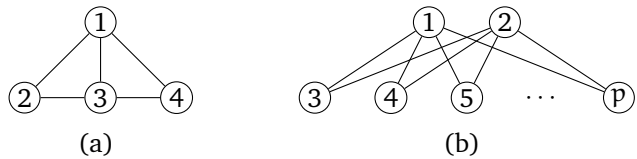

\paragraph*{Our contributions.}

The number of gates in a graphic parity network is called its \emph{size}.
Let~$G$ be the input graph, and let~$n$ and~$m$ denote the numbers of vertices and edges, respectively, in~$G$.
A trivial graphic parity network is in the fashion of~\Cref{fig:trivial-gpn}; i.e., generating an edge and cleaning it up immediately.  This leads to a trivial upper bound of~$2 m$, which is tight, as witnessed by graphs in which each component consists of one or two vertices.
On the other hand, the definition implies a trivial lower bound,~$m$, which, although tight, is uninteresting because it can only hold on edgeless graphs.

Our first result is a nontrivial lower bound of the size of graphic parity networks.

\begin{theorem}
  \label{thm:opt-pn-size-ctrb}
  A graphic parity network for a graph~$G$ contains at least~$m+n-c$ gates, where~$c$ is the number of components of~$G$.
\end{theorem}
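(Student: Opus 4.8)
The plan is to work in the standard linear-algebraic model: after any prefix of the circuit wire $i$ holds a vector $v_i\in\mathbb{F}_2^n$ (its current term, read as a subset of $\{x_1,\dots,x_n\}$), a gate $\textsc{cnot}(i,j)$ replaces $v_j$ by $v_i\oplus v_j$, the current $v_1,\dots,v_n$ always form a basis of $\mathbb{F}_2^n$, and both the first and the last basis are the standard basis $e_1,\dots,e_n$. Realizing an edge $\{u,v\}$ means that some wire holds $e_u\oplus e_v$ at some moment. The goal is to split the gates into $m$ gates that ``pay'' for the edges and a further $n-c$ gates that are forced by the return to the identity, in such a way that these two groups are disjoint; then $g\ge m+n-c$.

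The $m$ edge-gates are easy. Since $e_u\oplus e_v$ is absent in the final configuration, it has a first appearance, created by a unique gate $\gamma_{\{u,v\}}$ (the gate whose freshly written term equals $e_u\oplus e_v$). A gate writes exactly one new term on one wire, so distinct edges give distinct gates; call this set $A$, so $|A|=m$. (This already reproves the trivial bound $g\ge m$.) For the remaining $n-c$ gates I would introduce the multigraph $H$ on vertex set $[n]$ with one edge $\{i,j\}$ per gate $\textsc{cnot}(i,j)$. The key structural lemma is that the components of $H$ are never finer than those of $G$: inside a single component $K$ of $H$ no gate can have one endpoint in $K$ and one outside (that would fuse two $H$-components), so the wires of $K$ are only ever combined with one another and their supports stay inside $K$; hence a term $e_u\oplus e_v$ with $u,v$ in different $H$-components can never occur, which forces every edge — and hence every component — of $G$ to lie inside one $H$-component. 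Therefore $H$ has at most $c$ components, so $|E(H)|\ge n-c$, i.e. $g\ge n-c$ as well.

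The real work is to upgrade this to $g\ge m+n-c$ by showing that within each $H$-component $K$ the cyclomatic number $\mathrm{cyc}(H|_K)=|E(H|_K)|-|K|+1$ is at least $m_K$, the number of $G$-edges inside $K$: morally, creating the parity $e_u\oplus e_v$ on a wire and later erasing it — unavoidable, since the circuit ends at the identity — consumes an independent cycle of the gate-multigraph, and different edges consume independent cycles. Given that, summing $|E(H|_K)|\ge m_K+|K|-1$ over the $H$-components of the whole circuit yields $g=|E(H)|\ge m+n-(\text{number of }H\text{-components})\ge m+n-c$, which also takes care of disconnected $G$ at once.

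I expect the cyclomatic bound to be the main obstacle. The clean route is to extract from the circuit an $\mathbb{F}_2$-linear map from the cycle space of $H|_K$ onto an $m_K$-dimensional space recording which edge-parities a closed sub-walk of gates can transiently produce, and to prove this map well defined and surjective — and the surjectivity is essentially the full content of the theorem, so it needs care (in particular triangles make several edge-terms mutually adjacent, so one cannot just read off one fundamental cycle per edge). A partial fallback that settles many instances is the bound $g\ge m+d$, where $d$ is the number of wires that are ever dirtied: the last gate closing out each dirty wire writes a weight-one term and hence lies outside $A$, giving $d$ further distinct gates. The leftover difficulty is then exactly the regime of few dirty wires, where the few ``realizing'' wires are forced to visit edge-terms that are pairwise far apart in the CNOT-state graph and so run longer trajectories; making this quantitative uniformly over all graphs is the delicate point.
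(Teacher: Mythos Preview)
Your proposal identifies the right skeleton—the $m$ first-appearance gates for the edges, plus $n-c$ further gates disjoint from them—but leaves the core step open. You yourself flag the cyclomatic inequality $\mathrm{cyc}(H|_K)\ge m_K$ as ``the main obstacle'' and concede that the proposed surjectivity map is ``essentially the full content of the theorem.'' Nothing in the proposal actually establishes this bound, and your fallback $g\ge m+d$ (with $d$ the number of dirtied wires) is genuinely weaker: on the path $P_3$ one can realise both edges while dirtying only wire~$2$ (apply $\textsc{cnot}(1,2)$, $\textsc{cnot}(3,2)$, $\textsc{cnot}(1,2)$, $\textsc{cnot}(3,2)$), so $d=1<n-1=2$ yet $g=4=m+n-1$. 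The ``few dirty wires'' regime you defer is not a corner case—it is where the whole difficulty lives.

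The paper bypasses the cycle-space picture entirely. Its key lemma is that in any graphic parity network for a connected $G$, at least $n-1$ operations output a term that is \emph{not binary}. Since each operation writes exactly one term and the $m$ edge-realising operations all write binary terms, the two families are automatically disjoint and $g\ge m+(n-1)$; the disconnected case is identical with $n-c$. The proof of the lemma is short: the hypergraph of all terms seen so far starts with $n$ components and ends with one, forcing at least $n-1$ ``plus'' operations that add two disjoint terms. For each such operation, either its target wire already carries a non-binary term, or one follows that wire forward to its next non-plus gate, whose target is then necessarily non-binary (it was binary just before the plus, and only plus operations have intervened). This gives $n-1$ distinct gates whose target term is non-binary immediately \emph{before} the gate; passing to the inverse circuit flips ``before'' into ``after,'' yielding $n-1$ gates whose \emph{output} is non-binary. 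Partitioning gates by whether their output is binary is much more direct than extracting independent cycles from the gate multigraph, and the inverse-circuit trick is precisely the ingredient your dirty-wire fallback was missing.
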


Again, this bound is tight: it matches the upper bound~$2 m$ for all forests.
We say that a graphic parity network is \emph{perfect} if its size is precisely~$m+n-c$.
A graph admits a graphic parity network that is perfect or close to perfect only when there are a lot of operations from two edge terms to another edge term, e.g.,~$\ket{1, 3} \oplus\ket{1, 2} = \ket{2, 3}$ in \Cref{fig:optimal-gpn}.  The element~$1$ is canceled from the target wire of this operation, which is hence called a \emph{cancellation}.  A cancellation can only happen when the three edges form a triangle.
As we will see, for \emph{chordal graphs}---graphs in which every induced cycle is a triangle,---it is easy to synthesize perfect graphic parity networks.  This observation can be extended to graphs that can be turned into chordal graphs by adding a small number of edges, which contains many cycles of length three or four.  See \Cref{fig:example-graphs} for examples.  On the other hand, if the length of shortest cycles in a graph is five or more, we cannot do significantly better than the trivial construction of~$2 m$ gates.

\begin{theorem} %
  \label{thm:hel_nsqrt}
  For any positive integer~$n$, there exists a graph~$G$ with~$\Omega\left(n\sqrt n\right)$ edges such that the size of its minimum graphic parity networks is~$m + \Omega(m)$.
\end{theorem}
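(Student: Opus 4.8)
The plan is to prove the two halves of the statement separately: a construction of a dense graph of girth at least five, and a lower bound exploiting girth. For the construction I would take $G$ to be the incidence (Levi) graph of a finite projective plane of order $q$ --- its vertices are the $q^{2}+q+1$ points and $q^{2}+q+1$ lines, a point being adjacent to exactly the lines through it. This graph is bipartite, $(q+1)$-regular, and has girth exactly six (two points lie on a unique line, forbidding $C_{4}$, while a non-collinear point triple yields a $C_{6}$), so it has $(q+1)(q^{2}+q+1)=\Theta(q^{3})$ edges on $\Theta(q^{2})$ vertices. For a given $n$ I would pick the largest prime power $q$ with $2(q^{2}+q+1)\le n$ (so $q=\Theta(\sqrt n)$ by Bertrand's postulate), take this graph, and pad with isolated vertices; the result has $n$ vertices, girth at least $5$, and $m=\Omega(n^{3/2})$ edges, and in particular $n=o(m)$. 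Any $\{C_{3},C_{4}\}$-free graph of density $\Theta(n^{3/2})$ would serve equally well.

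For the lower bound I would argue by a slot-counting argument that refines \Cref{thm:opt-pn-size-ctrb}. Running any graphic parity network, follow each wire $w$ through the terms it holds, $\{w\}=S^{w}_{0}\to S^{w}_{1}\to\cdots\to S^{w}_{r_{w}}=\{w\}$, where $r_{w}$ is the number of gates targeting $w$, so that $\sum_{w}r_{w}$ is the gate count $t$. Call the $S^{w}_{i}$ with $1\le i\le r_{w}-1$ the \emph{interior slots}: there are $t$ minus the number of rewritten wires of them, and since the wire values always form a basis of $\mathbb{F}_{2}^{n}$ no term is ever empty, so only interior slots can equal a two-element edge term. Because the network must realize all $m$ edge terms and distinct edge terms occupy disjoint slots, at least $m$ interior slots hold edge terms; hence $t\ge m+(\text{number of interior slots holding a non-edge term})$, and it suffices to bound the latter below by $\Omega(m)$.

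To do that I would charge almost every edge to such a slot. For an edge $e=\{a,b\}$, examine the last slot that equals $\{a,b\}$, say on wire $w_{e}$, and the slot immediately following it on $w_{e}$, obtained by XOR-ing in the term $\mu_{e}$ carried by the control wire at that gate. If the following slot is $\{w_{e}\}$ itself and $\mu_{e}$ is a singleton, then $e$ has used up $w_{e}$'s last rewrite, and at most $n$ edges are of this type. Otherwise I can charge $e$ to an interior slot holding a non-edge term: the following slot itself if it already holds one, and otherwise --- when it holds another edge term $e''$, or equals $\{w_{e}\}$ with $\mu_{e}$ of weight three --- the slot of the control wire carrying $\mu_{e}$, which is a non-edge term in an interior slot because triangle-freeness rules out the only way $e\oplus e''$ could be a two-element \emph{edge} (a shared vertex $c$ of $e$ and $e''$ together with the edge on their other two endpoints would close a triangle). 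This charging is $O(1)$-to-one: each slot receives at most one ``direct'' charge (the unique edge term sitting there just before it), while the ``indirect'' charges to a slot carrying a term $\mu$ come from edges $e$ for which $\mu=e\oplus e''$ or $\mu=e\cup\{w\}$ with all relevant pairs being edges, and since no three vertices form a triangle and --- crucially --- no two vertices have two common neighbours ($C_{4}$-freeness), only $O(1)$ such $e$ exist. Thus at least $(m-n)/O(1)=\Omega(m)$ interior slots carry non-edge terms, and $t\ge m+\Omega(m)$.

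The main obstacle is exactly this last charging step: one must check that \emph{every} mechanism for destroying an edge term either pays with a private last rewrite (the $\le n$ trivial cases) or deposits a non-edge term into an interior slot that is not over-crowded, and extract a concrete bound on the crowding, dealing uniformly with $\mu_{e}$ of weight two, three, and four. Plain triangle-freeness is not enough --- $K_{2,p}$ is triangle-free yet, being nearly chordal, admits networks far shorter than $2m$ --- so the argument genuinely has to invoke $C_{3}$-freeness (to force the auxiliary merge terms $\mu_{e}$ to be non-edges) together with $C_{4}$-freeness (to cap how many destructions a single slot can serve); ruling out a ``conveyor belt'' in which one non-edge term is recycled through super-constantly many gates is where the real work lies. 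The constant obtained this way is far from $1$, but only $m+\Omega(m)$ is needed for the theorem.
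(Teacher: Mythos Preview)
Your approach is essentially the paper's: the projective-plane incidence graph construction is identical, and the girth-based lower bound has the same combinatorial core (charge each edge-destroying gate to a non-edge term; triangle-freeness forces the two-element merge term $\mu_e$ to be a non-edge, and $C_4$-freeness caps how many edges can share a given $\mu$). The paper organises the lower bound a bit more cleanly---it looks at the \emph{first} generating gate $f(e)$ and its immediate successor $c(e)$ on the same wire, and splits on whether $|\{c(e):e\in E\}\cap\{f(e):e\in E\}|\le m/2$, getting size $\ge 3m/2$ directly in one branch and $|R|>m/6$ via the girth argument in the other---but the substance is the same as your last-occurrence charging.

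One point that needs tightening: your case~(a) is not bounded by $n$ as written. The condition ``the following slot is $\{w_e\}$ and $\mu_e$ is a singleton'' does \emph{not} imply that $e$ has used up $w_e$'s last rewrite; a wire can return to $\{w_e\}$ many times mid-circuit (in the trivial $2m$-gate network \emph{every} edge satisfies your case-(a) condition). What you want is to declare case~(a) only when the following slot is the terminal slot $S^{w_e}_{r_{w_e}}$---that genuinely gives at most one edge per wire---and otherwise fold the situation into case~(b), charging to that interior singleton slot. With this fix the charging goes through and your constant is comparable to the paper's $m/6$.
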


Motivated by these observations, we propose a randomized algorithm for synthesizing graphic parity networks.
It processes the vertices in a random order, and for each vertex~$v$, generates all the edges between~$v$ and latter vertices.  The algorithm tries to minimize the size by exploiting the triangles and squares in the input graph.
The expected size of the synthesized \replaced{circuit}{curcuit} almost matches the bound in \Cref{thm:hel_nsqrt}.  In particular, it produces very good results for dense graphs.

\begin{theorem}
  \label{cor:simpl}
  There is a polynomial-time algorithm synthesizing a graphic parity network with expected size $m + O\left(n^{1.5}\sqrt{\log n}\right)$.
  Moreover, if all but a constant number of vertices have degrees~$\Omega(n)$, the synthesized graphic parity network has expected size~$m + O\left(n \log n\right)$.
\end{theorem}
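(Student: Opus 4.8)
The plan is to analyse the randomized algorithm described just above the statement: sample a uniformly random permutation $v_1,\dots,v_n$ of $V(G)$, and process the vertices in this order, where ``processing $v_k$'' means creating every edge $\{v_k,v_j\}$ with $j>k$ and then repairing the wires so that the next vertex can be handled. The point of the algorithm is that, instead of creating and immediately uncomputing each forward edge (which gives the trivial $2m$), it recycles the terms left on the wires by the previous vertices and uses triangles and $4$-cycles of $G$ to create most forward edges with a single gate. Accordingly I would write the total gate count as $m+\mathrm{OH}$, where the first summand is charged one gate per edge (each edge is created exactly once, by a gate dedicated to it) and $\mathrm{OH}=\sum_k O_k$ is the total \emph{overhead}: gates that do not create a new edge term. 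Note $\mathrm{OH}\le m$ unconditionally (this is the $2m$ construction), which already settles the regime $m=O(n^{1.5}\sqrt{\log n})$; the real work is in the dense regime.

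The deterministic step bounds $O_k$ structurally. I would show $O_k\le O(1+r_{v_k})$, where $r_v$ is a ``defect'' of the subgraph $G[N^+(v)]$ induced on the forward neighbourhood of $v$: roughly, the number of forward neighbours of $v$ whose incident edge cannot be routed through a term already available, plus the cost of a spanning repair of $G[N^+(v)]$. The chordality discussion preceding \cref{thm:hel_nsqrt} is precisely the statement that this defect vanishes when $G[N^+(v)]$ together with $v$ is chordal-like (every short cycle a triangle), so turning ``exploit triangles and squares'' into this clean per-phase accounting is the bookkeeping heart of the algorithm's correctness; I expect this step to be delicate but not deep.

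The probabilistic step controls $r_v$ for a random permutation. For a fixed vertex $v$ of degree $d_v$, the forward neighbourhood is a uniformly random subset of $N(v)$, of size concentrated around $d_v/2$ and, conditioned on its size, uniform. The plan is to show that for such a random set one has $r_v=O(\sqrt{d_v})$ in expectation: for a forward neighbour $u$, the probability that $\{v,u\}$ has no ``partner'' — a common neighbour of $v,u$, or an endpoint of a short cycle through $\{v,u\}$, placed earlier in the permutation than $u$ — averages, via convexity, to a sum dominated by $\sum_u 1/(1+t_{vu})$, with $t_{vu}$ counting the triangles (or short cycles) on $\{v,u\}$; when this sum is small so is the defect, and when it is not one falls back on $O_k\le 2d_{v_k}$. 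A bounded-differences inequality (swapping one element of the random set changes $r_v$ by $O(1)$) then gives $r_v=O(\sqrt{d_v\log n})$ with probability at least $1-n^{-3}$, and a union bound over the $n$ vertices yields $\mathrm{OH}\le\sum_k O(\sqrt{d_{v_k}\log n})$ with probability $1-o(1)$; by Cauchy--Schwarz $\sum_k\sqrt{d_{v_k}}\le\sqrt{n\sum_k d_{v_k}}=\sqrt{2mn}\le n^{1.5}$. On the complementary event the circuit has size at most $2m\le n^2$, contributing at most $n\cdot n^{-3}\cdot n^2=o(1)$ to the expectation, so the expected size is $m+O(n^{1.5}\sqrt{\log n})$. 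For the ``moreover'' part, when $d_v=\Omega(n)$ a random linear-size subset of a linear-size neighbourhood induces a subgraph dense enough to be handled up to a defect of only $O(\log n)$ (it contains a near-spanning path with high probability), so $O_k=O(\log n)$ for all but the $O(1)$ low-degree vertices and $\mathrm{OH}=O(n\log n)$. Polynomial running time is immediate, since each phase only computes matchings, spanning structures, and similar objects in $G[N^+(v)]$.

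The main obstacle is this probabilistic defect estimate: one must argue, uniformly over all input graphs, that a random forward neighbourhood interacts with the triangle and short-cycle structure of $G$ around $v$ well enough to leave only an $O(\sqrt{d_v\log n})$-sized defect (and an $O(\log n)$-sized one when $v$ has linear degree), and that the per-phase recycling of alive terms never creates hidden conflicts. The concentration part is routine bounded differences; the subtlety is that the overheads $O_k$ are not independent across $k$, which the union bound circumvents at the price of the $\sqrt{\log n}$ factor surfacing in the final bound.
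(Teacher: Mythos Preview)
Your per-vertex accounting via a ``defect'' $r_v$ of the induced subgraph $G[N^+(v)]$ is the wrong invariant, and the bound $r_v=O(\sqrt{d_v\log n})$ cannot hold uniformly. Take $G=K_{n/2,n/2}$: for every $v$, the forward neighbourhood $N^+(v)$ lies entirely on one side of the bipartition, so $G[N^+(v)]$ is edgeless and any defect defined through spanning structure of $G[N^+(v)]$ is $\Theta(|N^+(v)|)=\Theta(n)$, not $O(\sqrt{n\log n})$; your Cauchy--Schwarz step then gives nothing. Yet $K_{n/2,n/2}$ is trivial for the actual algorithm, with overhead $O(1)$ per vertex, because the quantity that matters is not $G[N^+(v)]$ but the partition of $N^+(v)$ according to ``which earlier vertex last wrote to this wire'', and in $K_{n/2,n/2}$ that partition has a single block (the last same-side vertex before $v$). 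The same objection kills your ``moreover'' argument: a linear-degree vertex with an independent-set neighbourhood has no near-spanning path in $G[N^+(v)]$, so the $O(\log n)$ defect claim fails. Separately, bounded differences only gives concentration around $\mathbb E[r_v]$; you never bound that mean, and when $t_{vu}=0$ everywhere the mean is $d_v$, which your ad hoc fallback $O_k\le 2d_{v_k}$ does not reconcile with the $\sum_k\sqrt{d_{v_k}}$ calculation.

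The paper's analysis is global, not per-vertex, and uses no concentration. The overhead of the algorithm equals (up to constants) the number of ordered pairs $(u,v)$ with $\pi(u)<\pi(v)$ such that some common neighbour $w$ has $u$ as its \emph{last} neighbour before $v$ in $\pi$. One exchanges the order of summation: condition on $\pi^{-1}(1)=u$ and count the distinct ``second neighbours in $\pi$'' among all $w\in N(u)$. For $w$ of degree $d(w)$ the second neighbour lands at position at most $n\log n/(d(w)-1)$ with probability $1-O(1/n)$. Thresholding the sorted degree sequence at $d_t$ --- neighbours $w$ with $d(w)>d_t$ together occupy at most $n\log n/d_t$ positions, those with $d(w)\le d_t$ contribute at most one each --- and summing over $u$ gives the explicit bound $4n+\min_t\bigl\{2\sum_{i<t}d_i+(n-t)n\log n/d_t\bigr\}$ on the expected overhead. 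Optimising over $t$ (equivalently, over the threshold degree) yields $O(n^{1.5}\sqrt{\log n})$ in general, and taking $d_t=\Theta(n)$ when all but $O(1)$ vertices have degree $\Omega(n)$ gives $O(n\log n)$. The answer depends only on the degree sequence, which is why your local-structure approach misses it.
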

\clu[inline]{\textit{Reviewer C: The statement of theorem 1.2 goes well beyond the point where implicitly using m and n to refer to graph parameters makes sense, given that both of those are separately bound in the theorem statement.} What does he mean?}

A natural question is to characterize graphs admitting PERfect graphic PArity NEtworks, which we call \emph{\ptt{}}.
An immediate consequence of Theorem~\ref{thm:opt-pn-size-ctrb} is that in a perfect graphic parity network, there always exists a wire that is not a target of any operation.  We noticed that for all \ptt{} graphs we have discovered, we can synthesize a perfect graphic parity network in which there exists a wire that is not a \replaced{control }{source } of any operation, e.g., wires~$2$ and~$4$ in \Cref{fig:optimal-gpn}.  All edges involving the vertex~$v$ corresponding to this wire have to be generated along it, and the removal of this wire leads to a reduced graphic parity network for the subgraph~$G - v$.  Thus, there must be precisely~$d + 1$ operations targeting this wire, where~$d$ is the number of neighbors of~$v$ in~$G$.  Except for the first and the last, each operation on this wire makes a new edge term by cancellation.
This motivates us to define \emph{perfect cancellation orderings} and \emph{perfect cancellation graphs}.  The formal definition is technical and hence deferred to \Cref{sec:perfect-cancellation}.
All chordal graphs are perfect cancellation graphs: all perfect elimination orderings~\cite{roseTriangulatedGraphsElimination1970} are perfect cancellation orderings, but not the other way.
As we will see, a perfect cancellation ordering can guide us in synthesizing a perfect graphic parity network for~$G$ in linear time.

\begin{theorem}
  \label{thm:chordal-ideal}
  All perfect cancellation graphs are \ptt.  Given a perfect cancellation graph~$G$ and a perfect cancellation ordering of~$G$, we can synthesize a perfect graphic parity network for~$G$ in linear time.
\end{theorem}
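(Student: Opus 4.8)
The plan is to prove both assertions simultaneously by induction on~$n$, reading a perfect cancellation ordering $\sigma=(v_1,\dots,v_n)$ as a recipe for a recursive construction; since a perfect cancellation graph is by definition one that admits such an ordering, the algorithmic statement subsumes the first sentence. If $n=1$, the empty circuit is a perfect graphic parity network, of size $0=m+n-c$. For $n\ge 2$, write~$v$ for~$v_1$. The defining properties of a perfect cancellation ordering guarantee that $G-v$ is again a perfect cancellation graph for which $(v_2,\dots,v_n)$ is a perfect cancellation ordering, so induction yields, in linear time, a perfect graphic parity network~$\mathcal C'$ for $G-v$ on the wires $V(G)\setminus\{v\}$; the ordering moreover exhibits an enumeration $u_1,\dots,u_d$ of the neighbours of~$v$ (its \emph{cancellation chain}) for which each $\{u_{j-1},u_j\}$ is an edge of~$G-v$ and for which, in~$\mathcal C'$, the terms $x_{u_1},\ket{u_1,u_2},\dots,\ket{u_{d-1},u_d},x_{u_d}$ become available on wires in turn (see the scheduling claim below).

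I would then build~$\mathcal C$ from~$\mathcal C'$ by splicing in $d+1$ new gates, each with target~$v$: first a \textsc{cnot} controlled by~$u_1$ at a moment where wire~$u_1$ still carries~$x_{u_1}$, which writes $\ket{u_1,v}$ on wire~$v$; then, for $j=2,\dots,d$, a \textsc{cnot} controlled by the wire that at the chosen moment carries~$\ket{u_{j-1},u_j}$, turning $\ket{u_{j-1},v}$ into $\ket{u_j,v}$ by cancelling~$u_{j-1}$; and finally a \textsc{cnot} controlled by a wire carrying~$x_{u_d}$, which restores wire~$v$ to~$x_v$. Since every inserted gate targets~$v$ and merely reads the other wires, $\mathcal C'$ still generates all edges of $G-v$ and returns those wires to their inputs, while wire~$v$ successively exhibits $\ket{u_1,v},\dots,\ket{u_d,v}$ and ends at~$x_v$; hence~$\mathcal C$ is a graphic parity network for~$G$. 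Counting gates, a vertex none of whose neighbours follow it in~$\sigma$ contributes nothing, and every other vertex contributes one more gate than its number of neighbours that come after it, so unwinding the recursion gives $|\mathcal C|=m+(n-s)$, where~$s$ is the number of vertices all of whose neighbours precede them in~$\sigma$. Every component of~$G$ contains at least one such vertex, whence $s\ge c$; on the other hand~$\mathcal C$ is a graphic parity network for~$G$, so \Cref{thm:opt-pn-size-ctrb} forces $|\mathcal C|\ge m+n-c$ and thus $s\le c$. Therefore $s=c$ and $|\mathcal C|=m+n-c$: the circuit is perfect, and in particular $G$ is \ptt{}. For the running time, the recursion touches each vertex once and inserts $\deg(v)+1$ gates there; keeping~$\mathcal C'$ as a doubly linked list of gates together with, for each wire, the constantly many intervals (per processed vertex) during which it carries a chain-relevant term lets us locate the $d+1$ insertion points and perform the splices in $O(d+1)$ time, for $O\!\left(\sum_{w}(\deg(w)+1)\right)=O(m+n)$ overall.

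The crux --- and the only place where the full, deferred definition of a perfect cancellation ordering is used --- is the scheduling claim: the windows in~$\mathcal C'$ during which wires carry $x_{u_1},\ket{u_1,u_2},\dots,\ket{u_{d-1},u_d},x_{u_d}$ admit an increasing selection of times $t_0<t_1<\dots<t_d$, so that the $d+1$ gates for~$v$ can be inserted without collision or reordering. I expect establishing this to require threading a structural invariant through the induction that pins down, for each wire, the exact sequence of terms it holds in the circuit built for the current subgraph and shows that these per-wire ``timelines'' nest compatibly along every cancellation chain --- the perfect cancellation ordering being precisely the combinatorial certificate that such nested timelines exist. A secondary point needing care is making the bookkeeping above light enough to keep the synthesis in genuine linear, rather than merely polynomial, time.
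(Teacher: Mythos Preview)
Your recursive plan is close in spirit to the paper's algorithm --- both generate the edges incident to a vertex $v$ on wire~$v$ by successive cancellations along a chain of its later neighbours --- but there is a genuine gap in your treatment of non-biconnected graphs, and it is not just a matter of missing details.

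You assert that the ordering ``exhibits an enumeration $u_1,\dots,u_d$ of the neighbours of~$v$ for which each $\{u_{j-1},u_j\}$ is an edge of~$G-v$''. The definition of a perfect cancellation ordering does \emph{not} give you this: it only says $N^+_\sigma(v)\cap C$ is $\sigma$-linked for each component~$C$ of~$G-v$, not that all of $N^+_\sigma(v)$ is $\sigma$-linked. Take the path $b\text{--}a\text{--}c$ with $\sigma=(a,b,c)$: this is a perfect cancellation ordering (each $N^+_\sigma(v)\cap C$ is a singleton or empty), yet $v_1=a$ has $N^+_\sigma(a)=\{b,c\}$ with $bc\notin E$, so no single chain exists. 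Your gate count would then give $m+(n-s)=2+(3-2)=3$, strictly below the lower bound $m+n-c=4$ from \Cref{thm:opt-pn-size-ctrb}; the construction simply does not produce a graphic parity network here, so the appeal to \Cref{thm:opt-pn-size-ctrb} to force $s\le c$ collapses. The paper avoids this by first decomposing $G$ into biconnected components (using that $\sigma$ restricted to each biconnected component remains a perfect cancellation ordering), handles each component by the chain argument --- where $G-v$ being connected makes the single chain legitimate --- and then concatenates the resulting circuits.

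On the scheduling claim: this is indeed the heart of the argument, and your proposed ``splice into a recursively built $\mathcal C'$'' framing makes it harder than necessary. The paper runs the construction iteratively, processing $i=2,\dots,n$ forward and maintaining the invariant that after step~$i$ each wire~$j<i$ carries either $\{j\}$ or $\{j,k\}$ where $k$ is the largest later-neighbour of $v_j$ already processed. With this invariant the control term $\{u_{j-1},u_j\}$ is sitting on wire~$u_{j-1}$ exactly when you need it, so no separate scheduling argument (and no doubly linked lists, interval bookkeeping, or timeline nesting) is required --- the linear-time bound falls out of iterating over adjacency lists.
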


We conjecture that the two graph classes are equivalent: a graph is \ptt{} if and only if it is a perfect cancellation graph.  %
We leave it to the reader to verify that a simple cycle on four vertices is not a perfect cancellation graph, while it becomes one after adding a universal vertex, i.e., a vertex that is adjacent to all the vertices on the cycle.  Thus, the class of perfect cancellation graphs is not \emph{hereditary}, i.e., closed under taking induced subgraphs.  The same holds for \ptt{} graphs.
This suggests that both classes are not easy to \replaced{handle}{deal} algorithmically.  Indeed, finding a perfect cancellation ordering is computationally hard.

\begin{theorem}\label{thm:hardness-perfect-cancellation}
  It is \NP-complete to decide whether a graph is a perfect cancellation graph.
\end{theorem}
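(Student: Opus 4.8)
The plan is to establish \cref{thm:hardness-perfect-cancellation} in the two standard steps, membership in \NP{} and \NP-hardness. \emph{Membership.} By definition, a perfect cancellation graph is one admitting a perfect cancellation ordering, and such an ordering is merely a permutation of~$V(G)$, hence a polynomial-size certificate. It remains to verify in polynomial time that a proposed ordering~$\pi$ is valid. The definition is inductive, imposing at the vertex~$v$ eliminated in each step a condition on its neighborhood at that moment --- roughly, that~$N(v)$ supports the path-like cancellation chain sketched in the discussion preceding \cref{thm:chordal-ideal}, together with the bookkeeping needed for the final cleanup operation on~$v$'s wire --- and each such condition depends only on a bounded-radius part of the graph, so it is checkable efficiently. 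Equivalently, one may run the linear-time synthesis of \cref{thm:chordal-ideal} guided by~$\pi$ and accept iff the resulting circuit attains the perfect size~$m+n-c$ of \cref{thm:opt-pn-size-ctrb}. Either way, verification is polynomial, so the problem lies in \NP.

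\emph{Reduction.} For hardness I would reduce from an \NP-complete problem with a ``local binary choice'' structure; \textsc{3-Sat} is the natural candidate (the \textsc{Hamiltonian Path} problem is another option, since eliminating a vertex~$v$ essentially forces a Hamiltonian-path-like ordering of~$N(v)$, already a source of \NP-hardness at a single step). Given a formula~$\varphi$ with variables~$x_1,\dots,x_N$ and clauses~$C_1,\dots,C_M$, build a graph~$G_\varphi$ out of three kinds of gadgets. First, a \emph{variable gadget}~$X_i$ whose only admissible ``shapes'' under a perfect cancellation ordering are two, corresponding to~$x_i$ being true or false; this can be arranged by exploiting the twofold freedom in orienting the cancellation chain of an eliminated vertex, or the two essentially different opening moves in a four-cycle equipped with a universal vertex (the small nonchordal perfect cancellation graph highlighted in the excerpt). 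Second, a \emph{clause gadget}~$Y_j$, attached to the variable gadgets of the three literals of~$C_j$, designed to be eliminable within a perfect cancellation ordering exactly when at least one of those literals has been made true. Third, \emph{rigid padding/connector} pieces --- for instance subgraphs of girth five in the spirit of \cref{thm:hel_nsqrt}, which admit no perfect cancellation ordering on their own --- used to glue everything together so that once~$X_i$ commits to a value, every clause gadget referring to~$x_i$ sees the same value, and so that essentially no ordering of~$G_\varphi$ can bypass the intended gadget skeleton.

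\emph{Correctness, and the main obstacle.} The forward direction is the easy one: from a satisfying assignment of~$\varphi$ I build a perfect cancellation ordering by eliminating each~$X_i$ in the shape dictated by the assignment, then each~$Y_j$ (now eliminable since~$C_j$ is satisfied), then the padding, checking the inductive condition at every step. For the converse, any perfect cancellation ordering of~$G_\varphi$ restricts on each~$X_i$ to one of its two canonical shapes, hence reads off an assignment, and the fact that every~$Y_j$ was successfully eliminated forces that assignment to satisfy all clauses. The delicate point, as usual for elimination-ordering recognition problems, is \emph{soundness}: since the class is not hereditary and ``$G_\varphi$ is a perfect cancellation graph'' unwinds as an existential recursion over all possible first vertices~$v$ with~$G_\varphi-v$ again a perfect cancellation graph, I must rule out unintended routes --- orderings crossing gadget boundaries, or partial eliminations of a gadget's neighbors leaving a residual graph that accidentally becomes a perfect cancellation graph by a ``wrong'' route. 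Designing the gadgets to be rigid enough to withstand such partial eliminations, and proving that this recursion can bottom out only when~$\varphi$ is satisfiable, is where the bulk of the work will lie.
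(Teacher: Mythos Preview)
Your membership argument is fine in spirit: a perfect cancellation ordering is a permutation, and the defining condition --- that for every vertex~$v$ and every component~$C$ of~$G-v$ the set~$N_\sigma^+(v)\cap C$ is $\sigma$-linked --- is checkable in polynomial time. (Note, though, that the definition is \emph{not} inductive in the elimination sense you describe; it is a single global condition on~$\sigma$, not a recursive one on~$G-v$. Your suggestion to ``run the synthesis of \cref{thm:chordal-ideal} and check the size'' does not work as stated either, since that algorithm assumes a perfect cancellation ordering is already given and need not produce a perfect network otherwise.)

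The hardness part, however, is not a proof but a wish list. You name \textsc{3-Sat} and sketch variable/clause/padding gadgets in the abstract, but give no concrete construction, and you explicitly defer the soundness argument (``where the bulk of the work will lie''). Several of the proposed ingredients are in fact problematic: girth-five subgraphs are exactly the graphs that \emph{cannot} be perfect cancellation graphs in any useful sense (short cycles are what enable cancellation), so using them as ``rigid padding'' would simply make~$G_\varphi$ fail regardless of~$\varphi$. And the ``two shapes of a $C_4$ plus universal vertex'' idea does not obviously survive attaching it to the rest of the graph, since the class is not hereditary and extra adjacencies can create new $\sigma$-linked sets.

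The paper takes a quite different and much cleaner route: it reduces from \textsc{Betweenness} rather than \textsc{3-Sat}. The single structural lemma doing all the work is that if~$I$ is a set of false twins with~$|N(I)|\le|I|-1$, then~$N(I)$ must be $\sigma$-linked in \emph{every} perfect cancellation ordering~$\sigma$. Attaching such a twin set to an induced path~$v_0v_1v_2$ therefore forces either~$v_0<_\sigma v_1<_\sigma v_2$ or the reverse --- precisely a betweenness constraint. Three overlapping such gadgets per triple, together with a clique of universal vertices to make the ``yes'' direction go through, give the full reduction. Your proposal is missing any analogue of this forcing lemma, which is the actual idea.
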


Finally, we present a fixed-parameter tractable algorithm for recognizing perfect cancellation graphs, using the the treewidth of the input graph as the parameter.
Similar to most algorithms using a tree decomposition, we use dynamic programming bottom-up.  The main challenge is that a subgraph may need vertices from without to make a good ordering.  As said, the class of perfect cancellation graphs is not hereditary.

\begin{theorem}
  \label{thm:recognition-fpt}
  The recognition of perfect cancellation graphs is fixed-parameter tractable parameterized by the treewidth of the input graph.
\end{theorem}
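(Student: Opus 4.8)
The plan is to run a bottom-up dynamic program over a tree decomposition, driven by the combinatorial characterization of perfect cancellation orderings from \Cref{sec:perfect-cancellation}. First I would obtain a tree decomposition of width $w=O(\mathrm{tw}(G))$ in time $f(\mathrm{tw}(G))\cdot n$---either exactly, since computing treewidth is itself fixed-parameter tractable, or through a constant-factor approximation---and refine it into a nice tree decomposition with $O(wn)$ nodes; write $k:=w+1$ for the bound on the bag size. Recall that a perfect cancellation ordering is, roughly, an elimination-type order in which, when a vertex $v$ is eliminated, its still-present neighbours must admit a \emph{path} along which consecutive edge terms $\{v,u_i\}$ and $\{v,u_{i+1}\}$ differ by an edge $u_iu_{i+1}$ of $G$, \emph{and} the edge term $\{u_i,u_{i+1}\}$ realizing that cancellation must be available at the right moment of the synthesis. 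The difficulty flagged in the introduction is that this last, ``availability'' part is not a condition on $G[N[v]]$ alone: the vertex $u_{i+1}$ and the wire carrying $\{u_i,u_{i+1}\}$ may come later in the order, and when one reasons about a subgraph $G[V_t]$ sitting below a node $t$, a cancellation serving a vertex of the bag $X_t$ may legitimately be witnessed from outside $V_t$. Hence the class is not hereditary and we cannot simply recurse on $G[V_t]$.

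To get around this I would solve, for each node $t$, an \emph{annotated} subproblem on $G[V_t]$: besides $G[V_t]$ we are handed (i) a linear preorder of $X_t$; (ii) for each $v\in X_t$, a flag stating whether $v$'s elimination obligations have been fully discharged inside $V_t$ or whether some are \emph{deferred}, together with which later-neighbour pairs of $v$ still lack an available cancellation witness; and (iii) a bounded record of which edge terms incident to $X_t$ the subtree \emph{promises to produce} for the outside, and which it \emph{expects to receive}. What keeps the table finite is the separator property: everything outside $V_t$ meets $V_t$ only through $X_t$, so a deferred obligation or a cross-boundary triangle/square witness can involve only vertices of $X_t$; a vertex in $V_t\setminus X_t$, by contrast, has all its neighbours---hence all its potential witnesses---inside $V_t$, so its obligations are settled entirely within the subtree. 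Consequently the annotation ranges over a set of size bounded by a function of $k$ (at most $k!$ preorders times $2^{\mathrm{poly}(k)}$ flag-and-promise patterns), and the table $D[t]$ records exactly the feasible annotations at $t$.

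The transitions follow the standard leaf/introduce/forget/join recipe. Introduce nodes guess the position of the new vertex in the preorder and initialise its obligation flags; leaf nodes are trivial. The \textbf{forget} node, where $v$ leaves the bag, is the crucial one: since every neighbour of $v$ has by then appeared somewhere in $V_t$, this is the last opportunity to check $v$'s obligations against $V_t$, so we verify the path condition on $v$'s later-neighbours, discharge every cancellation witnessed inside $V_t$, and keep the remaining ones as deferred obligations \emph{only if} they are of a form that can still be met from $X_t\setminus\{v\}$; this both updates and prunes the annotation so it stays bounded. At a \textbf{join} node the two child annotations are merged by matching one side's promises against the other's expectations over a common preorder of $X_t$. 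The instance is accepted iff the root table contains an annotation with no deferred obligations and no outstanding expectations. Correctness hinges on a gluing lemma in both directions: restricting a genuine perfect cancellation ordering of $G$ to each $V_t$ yields a consistent family of feasible annotations; conversely, a consistent family stitches into a single ordering because a deferred obligation passed upward through a separator $X_t$ is, by construction, honoured within the sibling or ancestor part.

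I expect the main obstacle to be precisely the design of the annotation---isolating the minimal fragment of the global ``edge-term availability'' (timing) condition that must be carried across a separator so that it is simultaneously bounded by $f(k)$ and strong enough to certify a full ordering---and, relatedly, getting the join transition right where the order constraints and the availability constraints interact. Once the annotation and the gluing lemma are in place, the running time is $f(k)\cdot n$: there are $O(wn)$ nodes and each is processed in time depending only on $k$, which establishes fixed-parameter tractability in the treewidth.
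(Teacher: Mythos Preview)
Your high-level plan---dynamic programming on a nice tree decomposition, storing the relative order on the bag together with, for each bag vertex~$v$, the still-unresolved gaps in the $\sigma$-linkedness of~$N_\sigma^+(v)$, fully discharging a vertex at its forget node, and merging at joins---is exactly the paper's approach; the paper's ``canonical representation'' is precisely your item~(ii), encoded as the list of maximal $\sigma$-linked segments of~$N_\sigma^+(v)\cap V_t$ whose break-points must lie in~$X_t$.

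Where you drift from the paper is in your reading of the definition. A perfect cancellation ordering is a purely graph-theoretic object: for each~$v$, consecutive vertices of~$\sigma|_{N_\sigma^+(v)}$ must be adjacent in~$G$, nothing more. There is no ``timing'' or ``availability'' constraint on the edge term~$\{u_i,u_{i+1}\}$; whether that edge exists is a static fact about~$G$, and \Cref{alg:sythchordal} then takes care of producing the right terms in the right order automatically. Consequently your annotation~(iii)---edge terms the subtree ``promises to produce'' or ``expects to receive''---is superfluous, and the obstacle you single out (``the minimal fragment of the global edge-term availability condition'') is not actually present. The genuine subtlety, which the paper isolates as condition~(V2) of its ``valid order'', is simply that two consecutive later-neighbours of a bag vertex~$v$ may currently be non-adjacent but can still be separated by a future neighbour of~$v$, and this is possible only when both lie in~$X_t$. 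Dropping~(iii) and sharpening~(ii) to exactly this segment data gives you the paper's algorithm; the rest of your plan (introduce/forget/join transitions, gluing argument, $f(k)\cdot n$ running time) goes through unchanged.
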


\paragraph*{Other related work.}

Two other important measures for quantum circuits or reversible circuits are the depth and the number of ancillae.
The \emph{depth} of a quantum circuit is the count of time steps needed to execute all the gates in the circuit in parallel; e.g., the first two gates in \Cref{fig:trivial-gpn} contribute  one to the depth.
An \emph{ancilla bit} is a qubit whose input is the particular state $\ket{0}$ and can be utilized as auxiliary space throughout the computation but must be recovered to $\ket{0}$ at the end of computation.
The circuit depth characterizes the running time of a quantum circuit, while the number of ancillae characterizes the extra space required by a quantum circuit.
Any $n$-qubit \textsc{cnot} circuit can be represented by an invertible matrix $\mathbb{M} \in \mathbb{F}_2^{n \times n}$, and the synthesis of \textsc{cnot} circuit is equivalent to transforming $\mathbb{M}$ to identity by Gaussian elimination; see more details in the appendix.
The main trick in minimizing circuit depth is to employ more ancillary qubits to eliminate multi-columns rather than one-column simultaneously~\cite{10.5555/3381089.3381102,Maslov2022,Goubault2024}.  Interestingly, this ultimately reduced to the parallel Gaussian elimination, which is inherently related to chordal graphs, also known as perfect elimination graphs.
It is easy to see that any parity network for a connected graph has depth $\Omega(\log n)$.  In fact, there is a trivial method synthesizing parity network for any graph in depth $O(\log n)$, as long as enough ancillae are given.
This line of work is orthogonal to ours because it usually increases the size.

\section{Graphic parity networks and \ptt{} graphs}

All graphs discussed in this paper are finite and simple. The vertex set and edge set of graph~$G$ are denoted by, respectively,~$V(G)$ and~$E(G)$.  Throughout the paper we use~$n = |V(G)|$ and~$m = |E(G)|$. %
An~$n$-qubit circuit over \textsc{cnot} gates is a \emph{graphic parity network} for a graph~$G$ if for every edge~$(u,v)$ of~$G$, 
the term $\{u,v\}$ appears in the annotated circuit and the final state of the wires is the same as the original state.
\deleted{The number of gates in a graphic parity network is called its \emph{size}.
  A graphic parity network can be alternatively viewed as a sequence of operation.  We denote the operation that adds the $i$th wire to the $j$th by $\rowadd{i,j}$, of which wire~$j$ is the \textit{target}.}
A term is \emph{singleton} or \emph{binary} if its cardinality is one or two, respectively.
For our purpose, this definition suffices, and we refer to \Cref{app:quantum-intro} for a definition requiring more quantum background.

We start with proving lower bounds announced in Theorems~\ref{thm:opt-pn-size-ctrb} and~\ref{thm:hel_nsqrt}.  These proofs have two implications.  First, we define a novel graph class that contains all chordal graphs, and show that all the graphs in this class admit the smallest possible graphic parity networks.  Second, we propose a randomized algorithm for synthesizing graphic parity networks for general graphs.

\subsection{Lower bounds}\label{sec:lower-bound}

\clu{All reverse is replaced with inverse. In quantum computing, simply read the circuit from right to left without relabeling is usually referred to as inverse.}
A parity network~$C$ can also be read from right to left, which defines another quantum circuit, called the \emph{inverse} of~$C$\deleted{, where each wire is labeled by the final output of this wire in~$C$.  As shown in Cref{fig:gpn-reversal} in the appendix, the final output of a wire may or may not be the same as the label of this wire}.
Consider a wire with~$\ell$ operations in~$C$, which defines~$\ell + 1$ terms.  It turns out that \replaced{the same wire}{the wire with the same label} in the inverse of~$C$ has the same number of terms, and they appear in exactly the inversed order as in~$C$.
This observation is formalized as the following proposition, which does not use any special properties of graphs and holds for all parity networks.  For the sake of completeness, we provide a proof in \Cref{app:lem-rev}.

\begin{proposition}[Folklore]
  \label{lem:reverse}
  The inverse of a parity network for a set is a parity network for the same set.  For each qubit wire, the wire in the inversed parity network has exactly the same terms, and they appear in the reversed order.
\end{proposition}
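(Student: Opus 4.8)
I would prove this through the linear-algebraic description of \textsc{cnot} circuits. Write $C$ as a sequence of operations $g_1,g_2,\dots,g_L$ with $g_t=\rowadd{i_t,j_t}$; as a map on $\mathbb{F}_2^n$, $g_t$ is the elementary matrix $G_t=I+e_{j_t}e_{i_t}^{\top}$ that adds coordinate~$i_t$ to coordinate~$j_t$, and $G_t^2=I$. Because each gate is thus an involution, reading $C$ from right to left — applying $g_L,g_{L-1},\dots,g_1$ in this order — realizes the inverse linear map, and this is exactly the inverse circuit~$C^{-1}$. Let $M_t=G_tG_{t-1}\cdots G_1$ be the transformation computed by the first $t$ gates of $C$, so $M_0=I$; the requirement that the final wire values of a parity network equal the input says precisely $M_L=I$. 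With $x=(x_1,\dots,x_n)^{\top}$ the formal input, the term on wire~$k$ after $t$ gates of $C$ is the subset of $\{1,\dots,n\}$ indexing the support of $(M_tx)_k$.

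The key step is the ``mirror'' identity: after $s$ gates of $C^{-1}$ the wire contents are $M_{L-s}\,x$, i.e. precisely the contents of $C$ after $L-s$ gates. Indeed, the first $s$ gates of $C^{-1}$ compute $G_{L-s+1}G_{L-s+2}\cdots G_L$; since $G_L\cdots G_1=M_L=I$ we get $G_L\cdots G_{L-s+1}=(G_{L-s}\cdots G_1)^{-1}=M_{L-s}^{-1}$, and inverting both sides while using $G_t^2=I$ yields $G_{L-s+1}\cdots G_L=M_{L-s}$; applying this to $x$ gives the claim. Taking $s=L$ gives contents $M_0x=x$, so $C^{-1}$ returns the input; and for every $s$ the collection of terms present on all wires of $C^{-1}$ after $s$ gates coincides with that of $C$ after $L-s$ gates, so $C^{-1}$ realizes every term (in particular every edge term) that $C$ does and is a parity network for the same set.

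Finally I would translate this snapshot correspondence into the per-wire statement. Fixing wire~$k$, its content sequence $(M_0x)_k,(M_1x)_k,\dots,(M_Lx)_k$ changes from step $t-1$ to step $t$ only when $j_t=k$, so the ordered list of distinct terms that wire~$k$ carries in $C$ is obtained from this sequence by collapsing consecutive repetitions, and its length is one plus the number of gates of $C$ targeting~$k$. The same holds for $C^{-1}$, whose $s$-th gate $g_{L-s+1}$ targets~$k$ iff $j_{L-s+1}=k$; hence the gates targeting wire~$k$ in $C$ and in $C^{-1}$ are in an order-reversing bijection $t\leftrightarrow L-t+1$, and together with $(M_{L-s}x)_k$ being the content of wire~$k$ in $C^{-1}$ after $s$ gates this shows the list of terms on wire~$k$ in $C^{-1}$ is the reverse of the list in $C$. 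I do not expect a genuine obstacle — the content is bookkeeping — but the one place needing care is the passage from ``operations on a wire'' to ``terms on a wire'': a term persists on a wire through every gate that does not target it, so one must reason about the de-duplicated list rather than the raw length-$(L+1)$ sequence, and verify that an all-identity stretch on wire~$k$ in $C$ maps under $t\mapsto L-t+1$ to an all-identity stretch in $C^{-1}$, so that the term counts agree on both sides.
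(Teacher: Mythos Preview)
Your proof is correct and establishes exactly the same core fact as the paper's: the wire contents after $s$ gates of $C^{-1}$ coincide with those after $L-s$ gates of $C$. The paper proves this invariant by a bare-hands induction on $s$, arguing directly with terms (if wire~$i$ carries $A$ and wire~$j$ carries $B$ after $\ell-k$ gates of $C$, then after the matching gate of $C^{-1}$ wire~$j$ carries $A\oplus B$, etc.), without ever introducing the matrix language. You instead package each gate as the elementary involution $G_t=I+e_{j_t}e_{i_t}^{\top}$ and derive the mirror identity $G_{L-s+1}\cdots G_L=M_{L-s}$ in one algebraic line from $M_L=I$. The content is the same; your route trades a short inductive argument for a slightly heavier setup (matrices, the identity $G_t^2=I$) but then gets the snapshot correspondence and the per-wire reversal almost for free, and makes the role of the parity-network condition $M_L=I$ completely transparent.
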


The following bounds the number of non-binary terms generated by a graphic parity network.

\begin{lemma}
  \label{lem:stronger}
  In any graphic parity network for a connected graph~$G$, there must be at least~$n-1$ operations whose outcomes are not binary.
\end{lemma}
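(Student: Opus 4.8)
The plan is to track a monotone ``co-occurrence'' partition of the vertex set and to argue that the operations merging two of its blocks, together with the operations that later restore the wires those merges disturbed, already supply $n-1$ non-binary outcomes.

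Run the network, of size $L$, and write $\tau_k(t)\subseteq V(G)$ for the term on wire $k$ after $t$ operations, so that $\tau_k(0)=\tau_k(L)=\{k\}$ and the $t$-th operation $\rowadd{i,j}$ replaces $\tau_j$ by $v_t:=\tau_i(t-1)\oplus\tau_j(t-1)$, which is exactly the \emph{outcome} of that operation; each term is non-empty because the associated matrix in $\mathbb F_2^{n\times n}$ is invertible. For each $t$ let $H_t$ be the graph on $V(G)$ with a clique on every term that has occurred on some wire up to time $t$, and let $\mathcal Q_t$ be its partition into connected components. Then $H_0$ is edgeless, so $|\mathcal Q_0|=n$; since $G$ is connected and each of its edges occurs as a binary term, $H_L\supseteq G$, so $|\mathcal Q_L|=1$; and $|\mathcal Q_t|$ never increases. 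The first step is the estimate $|\mathcal Q_{t-1}|-|\mathcal Q_t|\le 1$: the graph $H_t$ is obtained from $H_{t-1}$ by adding at most the clique on $v_t\subseteq\tau_i(t-1)\cup\tau_j(t-1)$, and all of $\tau_i(t-1)$ lies in a single block of $\mathcal Q_{t-1}$ and all of $\tau_j(t-1)$ in a single block, so that clique joins at most two blocks. Summing over $t$, exactly $n-1$ operations are \emph{merging}, meaning they strictly decrease $|\mathcal Q|$.

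Now classify the merging operations. At such an operation the two blocks joined are distinct, hence $\tau_i(t-1)$ and $\tau_j(t-1)$ are disjoint and $v_t=\tau_i(t-1)\cup\tau_j(t-1)$ has at least two elements; thus its outcome is never a singleton, and it has exactly two elements precisely when both $\tau_i(t-1)$ and $\tau_j(t-1)$ are singletons, in which case the target wire passes from a singleton term to a binary term. Let $P$ be the number of merging operations of this ``two-singletons'' type; the remaining $n-1-P$ merging operations produce terms of size at least $3$, so their outcomes are non-binary. To account for the $P$ remaining units, let $f_t$ be the number of wires holding a non-singleton term. Since $f_0=f_L=0$ and each operation changes $f$ by at most one, the number of operations that raise $f$ equals the number that lower it. Each ``two-singletons'' merging operation raises $f$, so at least $P$ operations lower $f$, and every operation that lowers $f$ has a singleton outcome, hence is non-binary. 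These $\ge P$ singleton-outcome operations are distinct from the $n-1-P$ merging operations with size-$\ge 3$ outcomes, and therefore the network performs at least $(n-1-P)+P=n-1$ non-binary operations, as claimed.

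I expect the main obstacle to be the realization that a non-binary outcome comes in two guises --- a singleton or a term of size $\ge 3$ --- and that \emph{neither guise alone delivers $n-1$}: an entire bundle of edges can be produced by fusing two singleton terms into a binary one, which is itself a binary operation, so its price is only paid afterwards, when the non-singleton wires it created have to be brought back down to singletons. The two facts that make the count exact are (i) the clean inequality $|\mathcal Q_{t-1}|-|\mathcal Q_t|\le 1$, which simultaneously pins the number of merging operations at $n-1$ and lets each of them be classified from its two operand terms alone, and (ii) the global parity constraint $f_0=f_L=0$, which trades the ``two-singletons'' merges for an equal number of later singleton-producing operations. After that the arithmetic is routine; the only small point requiring care is the invertibility remark that terms never become empty, so that ``non-singleton'' and ``size $\ge 2$'' agree.
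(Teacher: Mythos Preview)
Your proof is correct, and its first half---tracking the partition into components of the ``terms seen so far'' graph and showing that exactly $n-1$ operations merge two blocks---coincides with the paper's opening move (what the paper calls \emph{plus-operations}). The second half, however, is genuinely different.

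The paper does not separate merging operations into two types. Instead, for each merging operation it locates a (possibly later) operation on the same target wire whose term \emph{before} the operation is non-binary, and then invokes the inverse-circuit proposition (their \texttt{lem:reverse}) to flip ``term before'' into ``outcome''. Your argument replaces all of that by the single global potential $f_t=\#\{\text{wires currently non-singleton}\}$: the $P$ two-singletons merges raise $f$, so at least $P$ other operations must lower it, and those have singleton (hence non-binary) outcome, disjoint from the $n-1-P$ merges with outcome of size $\ge 3$. This avoids the inverse machinery entirely and gives a more self-contained count; the price is that you must observe terms are never empty (to exclude the degenerate $|v_t|=1$ case at a merge), which you correctly justify via invertibility. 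Overall your route is shorter and uses less infrastructure, while the paper's route has the advantage of exhibiting, for each merging operation, a specific witness operation on the same wire.
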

\begin{proof}
  Let~$\ell$ be the size of the graphic parity network.  For~$i = 0, 1, \ldots, \ell$, we define a hypergraph~$H_{i}$ whose vertex set is~$V(G)$ and whose edge set consists of all terms generated by the first~$i$ operations, and let~$c(H_{i})$ denote the number of components of~$H_{i}$.
  By definition,
  \[
    n = c(H_{0}) \ge c(H_{1}) \ge \cdots \ge c(H_{\ell}) = 1,
  \]
  where~$c(H_{0}) = n$ because~$H_{0}$ is edgeless and~$c(H_{\ell}) = 1$ because~$G$ is connected by assumption.
  In particular,~$c(H_{i})$ is either~$c(H_{i-1})$ or~$c(H_{i-1}) - 1$, and the second case can only happen when the $i$th operation is applied to two terms that are disjoint.
  Such an operation is called a plus-operation, and by the discussion above, there are at least~$n - 1$ plus-operations.

  We take the first~$n - 1$ plus-operations of the graphic parity network, and denote them as~$C_{1}$,~$C_{2}$,~$\ldots$~$C_{n-1}$.
  For~$i = 1, \ldots, n - 1$, we select a distinct operation of which the term on the target wire \emph{before} the operation is non-binary.
  We take~$C_{i}$ if it satisfies our condition; otherwise, we take the next non-plus operation~$C'_{i}$ with the same target wire as~$C_{i}$.
  Note that it exist because the final state of this wire is singleton.
  Since the term on the target wire is binary before~$C_{i}$, and all operations between~$C_{i}$ and~$C'_{i}$ are plus-operations, the term on the target wire before~$C'_{i}$ is non-binary.
  All the~$n-1$ selected operations are distinct by the selection.
  \replaced{In the inverse of this graphic parity network, these $n-1$ non-binary terms appears \textit{after} the operations, namely, there are $n-1$ operations generating non-binary terms. And by \cref{lem:reverse}, if a term is generated in the inverse, then it must be generated in the origin, }{By proposition~\ref{lem:reverse}, the~$n-1$ non-binary terms are outcomes of operations in the inverse of this graphic parity network, } and this concludes the proof.
\end{proof}

Since there are at least~$m$ operations generating binary terms, \Cref{thm:opt-pn-size-ctrb} follows from Lemma~\ref{lem:stronger} as a corollary.
We say that a graphic parity network is \emph{perfect} if its size is precisely~$m+n-c$.
By Lemma~\ref{lem:stronger}, there is a one-to-one mapping between~$E(G)$ and the binary terms of a perfect graphic parity network.  Moreover, if~$m \gg n$, most of the terms for edges in~$G$ are generated by cancellation.
Indeed, all chordal graphs admit perfect graphic parity networks.  This can be extended to graphs close to chordal, e.g., the graph in \Cref{fig:k2p}, where~$p = n - 2$.  This graph can be turned into a chordal graph by adding a single edge, namely, $(1, 2)$, and hence it admits a graphic parity network of size~$m + n = 3p + 3$.  Note that all the induced cycles in~$K_{2, p}$ have length four.
If the girth of a graph is greater than four, the bound in \Cref{thm:opt-pn-size-ctrb} can be greatly improved.

\begin{lemma}\label{lem:upper-bound}
  Let~$G$ be a graph of girth at least five.  The minimum size of graphic parity networks for~$G$ is~$m + \Omega(m)$.
\end{lemma}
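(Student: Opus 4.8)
The plan is to prove the stronger statement that \emph{every} graphic parity network for a graph of girth at least five has at least $\tfrac32 m$ gates; since the trivial construction needs only $2m$ gates, this shows the minimum size is $m+\Theta(m)$, and in particular $m+\Omega(m)$.

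First I set up a counting skeleton that does not yet invoke the girth. Fix a graphic parity network $C$ of size $\ell$ for $G$, and for an edge $e=\{u,v\}$ write $\hat e$ for the binary term $\ket{u,v}$. Since no wire starts in the state $\hat e$, there is a first operation whose outcome is $\hat e$; call it $\mathrm{first}(e)$. Since every wire ends in a singleton, there is a last operation whose target wire holds $\hat e$ just before it; call it $\mathrm{last}(e)$. Distinct edges yield distinct operations within each family, and for every edge $f$ the operation $\mathrm{first}(f)$ strictly precedes $\mathrm{last}(f)$ --- the wire carrying $\hat f$ immediately before $\mathrm{last}(f)$ acquired it from an operation no earlier than $\mathrm{first}(f)$. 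Hence $\ell\ge\bigl|\{\mathrm{first}(e):e\in E(G)\}\cup\{\mathrm{last}(e):e\in E(G)\}\bigr|=2m-y$, where $y$ is the number of \emph{merging} operations, i.e.\ those $o$ with $o=\mathrm{first}(e)$ and $o=\mathrm{last}(e')$ for some (necessarily distinct) edges $e,e'$. Such an $o$ rewrites $\hat{e'}$ into $\hat e$ on its target wire, so its control wire holds $\kappa(o):=\hat e\oplus\hat{e'}$ at that moment, a set of size $2$ or $4$.

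The girth enters when bounding $y$. Observe first that $\kappa(o)$ is never an edge of $G$: automatic if $|\kappa(o)|=4$, and if $\kappa(o)=\{b,c\}$ with $e=\{a,c\}$, $e'=\{a,b\}$, then $bc\in E(G)$ would make $\{a,b,c\}$ a triangle. As $|\kappa(o)|\ge 2$, the term $\kappa(o)$ is produced by some operation, one with a non-edge outcome. The crux is that $o\mapsto\kappa(o)$ is injective on merging operations. Suppose $\kappa(o_1)=\kappa(o_2)=H$ with $o_1\ne o_2$ merging, all four edge terms being genuine edges. If $|H|=2$, say $H=\{b,c\}$: the only edges $f$ with $f\oplus H$ again an edge are $\{b,z\}$ and $\{c,z\}$ for a common neighbour $z$ of $b$ and $c$, and $4$-cycle-freeness forces $z$ unique; hence each of $o_1,o_2$ rewrites $\{b,z\}$ into $\{c,z\}$ or vice versa, and (else they would be the same $\mathrm{last}(\cdot)$) they do so in opposite directions. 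But then $o_1=\mathrm{first}(\{c,z\})$ and $o_2=\mathrm{last}(\{c,z\})$ put $o_1$ before $o_2$, while $o_2=\mathrm{first}(\{b,z\})$ and $o_1=\mathrm{last}(\{b,z\})$ put $o_2$ before $o_1$, a contradiction. The case $|H|=4$ is identical: since two distinct perfect matchings of $H$ all of whose parts are edges of $G$ would contain a $4$-cycle, at most one such matching exists, and the same timing argument applies.

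With injectivity in hand, the $y$ merging operations give rise to $y$ pairwise distinct non-edge terms $\kappa(o)$, each the outcome of a distinct operation; these operations are disjoint from the at least $m$ operations whose outcome is an edge of $G$, so $\ell\ge m+y$. Together with $\ell\ge 2m-y$ this yields $2\ell\ge 3m$, hence $\ell\ge\tfrac32 m=m+\Omega(m)$. The genuinely delicate step is the injectivity claim, and inside it the timing argument that forbids two oppositely-oriented rewrites between the same pair of edge terms; this is exactly where girth at least five is used, through both triangle-freeness and $4$-cycle-freeness. Everything else is bookkeeping.
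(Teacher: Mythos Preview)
Your proof is correct and follows the same skeleton as the paper's: count the operations that first create an edge term and those that destroy one, bound their overlap $y$ by showing the control term of each ``merging'' operation is a non-edge, and conclude $\ell\ge 2m-y$ together with $\ell\ge m+y$. The paper's versions of your $\mathrm{first}(e)$ and $\mathrm{last}(e)$ are $f(e)$ (first creation) and $c(e)$ (the \emph{very next} operation on the same wire after $f(e)$), and its case analysis of the control term is the same split into sizes $2$ and~$4$.

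Where you genuinely diverge is in the injectivity step. The paper, in the size-$4$ case, only observes that at most three merging operations can share a given $4$-element control term (one per perfect matching), yielding $|R|\ge |F\cap C|/3$ and ultimately the bound $\ell\ge \tfrac{7}{6}m$. You instead exploit $4$-cycle-freeness a second time to show that at most \emph{one} perfect matching of a $4$-set can lie in $E(G)$, and then use the global $\mathrm{first}/\mathrm{last}$ definitions to run a timing argument ruling out two oppositely-oriented rewrites between the same pair of edge terms. This gives full injectivity of $o\mapsto\kappa(o)$ and hence the sharper bound $\ell\ge\tfrac{3}{2}m$. The choice of $\mathrm{last}(e)$ as the globally last destruction (rather than the first one after $f(e)$) is exactly what makes the timing argument go through; with the paper's $c(e)$ that argument would not be available. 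So your route is a strict refinement: same architecture, tighter use of the girth hypothesis, better constant.
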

\begin{proof}
  Let us fix a graphic parity network for~$G$.
  For each edge~$e\in E(G)$, let~$f(e)$ denote the operation that generates the term corresponding to~$e$, and~$c(e)$ the immediately next operation targeting the same wire as~$f(e)$; note that~$c(e)$ exists because the final state of this wire is a singleton term.  Let~$F = \{f(e)\mid e\in E(G)\}$ and~$C = \{c(e)\mid e\in E(G)\}$.  Moreover, let~$R$ denote all the operations not in~$F$.
  Note that $|F| = |C| = m$, and the size of the network is
  \[
    |F| + |R| \ge |F| + |C\setminus F| = |F| + |C| - |F \cap C| =2 m -|F\cap C|.
  \]
  We are done if~$|F\cap C| \le m/2$.  Hence, we assume that~$|F\cap C| > m/2$.  By definition, each operation in~$F\cap C$ is~$c(e_{1}) = f(e_{2})$ for two different edges~$e_{1}$ and~$e_{2}$.
  There are two cases,
  \[
    \begin{cases}
      \ket{v, w} \boldsymbol{\cdot}\!\!-\!\!\oplus \ket{u, v}       & e_{1} = (u, v), e_{2} = (u, w),
      \\
      \ket{u, v, w, x} \boldsymbol{\cdot}\!\!-\!\!\oplus \ket{u, v} & e_{1} = (u, v), e_{2} = (w, x),
    \end{cases}
  \]
  where the three or four vertices involved in the operation are all distinct.  We use~$F_{1}$ and~$F_{2}$ to denote the sets of these two types of operations.  Note that~$F\cap C = F_{1}\cup F_{2}$.

  Case 1.  Since~$G$ does not contain any triangles,~$(v, w)$ is not an edge.  In other words,~$(v, w)$ is generated by an operation in~$R$.
  Moreover,
  if there is another operation~$c(e'_{1}) = f(e'_{2})$ in~$F_{1}$ using~$(v, w)$, then~$e'_{1} = (x, v)$ and~$e'_{2} = (x, w)$ for some vertex~$x\ne u$.  But then~$u v x w$ is a cycle of length four, violating the assumption.  Therefore, each operation in~$F_{1}$ corresponds to a distinct operation in~$R$.

  Case 2.  The term~$\ket{u, v, w, x}$ is generated by an operation in~$R$.  There are at most three operations in~$F_{2}$ using the term~$\ket{u, v, w, x}$.  Therefore, there are at least~$|F_{2}|/3$ such $4$-terms generated by~$R$.

  In summary,
  \[
    |R| \ge |F_{1}| + \frac{|F_{2}|}{3} \ge
    \frac{|F_{1}| + |F_{2}|}{3} = \frac{|F\cap C|}{3} > \frac{m}{6}.
  \]
  This concludes the proof.
\end{proof}

Theorem~\ref{thm:hel_nsqrt} follows from Lemma~\ref{lem:upper-bound} and the following lemma.  The proof of \Cref{lem:girth-5}, which is based on a classic result from extremal combinatorics, is deferred to the appendix.

\begin{lemma}\label{lem:girth-5}
  For any positive integer~$n$, there exists a graph~$G$ that has~$\Omega(n\sqrt{n})$ edges and whose girth is at least five.
\end{lemma}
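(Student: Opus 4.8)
The plan is to exhibit $G$ as (essentially) the incidence graph of a finite projective plane, a classical dense graph of large girth from extremal combinatorics, and to cover all values of $n$ by padding such a graph with isolated vertices. The upper bound $\mathrm{ex}(n;\{C_3,C_4\})=O(n^{3/2})$ (Kővári--Sós--Turán / Reiman) shows this construction is essentially optimal, but for the lemma only the lower-bound construction is needed.

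For a prime power $q$, fix the projective plane $PG(2,q)$: it has $q^{2}+q+1$ points and $q^{2}+q+1$ lines, every line carries $q+1$ points, every point lies on $q+1$ lines, any two distinct points lie on a unique common line, and any two distinct lines meet in a unique common point. Let $B_{q}$ be the bipartite graph whose colour classes are the points and the lines of $PG(2,q)$, with a point joined to a line precisely when it lies on that line. Then $B_{q}$ has $N_{q}:=2(q^{2}+q+1)$ vertices and $M_{q}:=(q+1)(q^{2}+q+1)=\Theta(q^{3})$ edges, so $M_{q}=\Theta(N_{q}^{3/2})$. I claim the girth of $B_{q}$ is at least five; indeed it is exactly six. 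Since $B_{q}$ is bipartite it has no odd cycle, which rules out $C_{3}$ and $C_{5}$; and a $4$-cycle in $B_{q}$ would consist of two distinct points both incident to two distinct common lines, contradicting the axiom that two distinct points determine a unique line. Hence $B_{q}$ has girth $6\ge 5$.

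It remains to realise an arbitrary (large) $n$. By Bertrand's postulate there is a prime $q$ with $q=\Theta(\sqrt{n})$ and $N_{q}=2(q^{2}+q+1)\le n$; take $G$ to be the disjoint union of $B_{q}$ with $n-N_{q}$ isolated vertices. Adding isolated vertices creates no new cycle, so the girth of $G$ equals that of $B_{q}$, still at least five, while $G$ has $n$ vertices and $M_{q}=\Theta(q^{3})=\Theta(n^{3/2})=\Omega(n\sqrt{n})$ edges, as required. (For the finitely many small $n$ not covered this way, the $\Omega$-claim is vacuous.)

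There is no real obstacle here beyond bookkeeping: the one point needing care is that the construction be available densely enough in the parameter $q$, and restricting to prime orders together with Bertrand's postulate already pins $q$ to within a constant factor of $\sqrt{n}$ for every $n$, after which the isolated-vertex padding is free and costs nothing in the girth. One could alternatively quote the extremal estimate $\mathrm{ex}(n;\{C_{3},C_{4}\})=\Theta(n^{3/2})$ verbatim, but making the projective-plane construction explicit keeps the proof self-contained.
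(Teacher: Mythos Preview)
Your proof is correct and follows essentially the same approach as the paper: both construct the point--line incidence graph of a finite projective plane, observe that bipartiteness forbids $C_3$ and the unique-line-through-two-points axiom forbids $C_4$, invoke Bertrand's postulate to find a prime order $q$ within a constant factor of $\sqrt{n}$, and pad with isolated vertices to reach exactly $n$ vertices. The only cosmetic difference is that the paper isolates the Bertrand step as a separate lemma, whereas you fold it directly into the argument.
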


A graphic parity network of size~$m+n-c$ is called \emph{perfect}, and a graph is called \emph{\ptt{}} if it admits a PERfect graphic PArity NEtwork.

\subsection{Perfect cancellation graphs}\label{sec:perfect-cancellation}

Let~$N(v)$ denote the neighborhood of~$v$, and~$N(U) = \bigcup_{v\in U} N(v) \setminus U$ for a vertex set~$U\subseteq V(G)$.
Let~$\sigma:V(G)\mapsto [n]$ be an ordering of the vertices of~$G$, where~$[n] = \{1, 2, \ldots, n\}$.
A subset~$U \subseteq V(G)$ is \emph{$\sigma$-linked} if every two consecutive vertices in~$\sigma|_{U}$, the sub-ordering of $\sigma$ induced by $U$, are adjacent in~$G$.
We use~$x <_\sigma y$ (resp.,~$x \le_\sigma y$) to denote~$\sigma(x) < \sigma(y)$ (resp.,~$\sigma(x) \le \sigma(y)$).
For each vertex~$v\in V(G)$, we denote
\[
  N^+_{\sigma}(v) = \{u\in N(v) \mid \sigma(u)>\sigma(v)\}.
\]

\begin{definition}
  An ordering~$\sigma:V(G)\mapsto[n]$ is a \emph{perfect cancellation ordering} of~$G$ if for all vertices~$v\in V(G)$ and for all components~$C$ of~$G - v$, the set~$N^+_{\sigma}(v)\cap C$ is~$\sigma$-linked.
  A graph $G$ is a \emph{perfect cancellation graph} if it has a perfect cancellation ordering.
\end{definition}

The vertex set of a chordal graph can be ordered such that, each vertex~$v$ and its neighbors that occur after $v$ in the order form a clique; such an order is called a \emph{perfect elimination ordering}~\cite{roseTriangulatedGraphsElimination1970}.
Since a clique is~$\sigma$-linked for any ordering~$\sigma$, a perfect elimination ordering is a perfect cancellation ordering \cite{roseTriangulatedGraphsElimination1970}.  In other words, all chordal graphs are perfect cancellation graphs.
It is worth noting that a perfect cancellation ordering of a chordal graph is not necessarily a perfect elimination ordering.
For example, both~$(1, 2, 3, 4)$ and~$(4, 3, 2, 1)$ are perfect cancellation orderings, but only the second is a perfect elimination ordering.
With a perfect cancellation ordering of~$G$ given, we can synthesize a perfect graphic parity network for~$G$ in linear time.  Indeed, the circuit in \Cref{fig:optimal-gpn} was generated by \Cref{alg:sythchordal}.
For the convenience of presentation, we may start with biconnected graphs, for which the condition is simplified to~$N^+_{\sigma}(v)$ being~$\sigma$-linked for all~$v$.

\SetCommentSty{textit}
\SetKwComment{trcmt}{$\vartriangleright$
}{}
\begin{algorithm}[h]
  \clu[inline]{Added detail on how to restore}
  \caption{A synthesizing algorithm for perfect cancellation graphs}
  \label{alg:sythchordal}
  \SetAlgoLined
  \For(\trcmt*[f]{$[v_1,\dots,v_n]$ is a perfect cancellation ordering of $G$}){$i \gets 2, 3, \dots, n$}{
    \For{$j \gets i-1, i-2, \dots, 1$}{
      \If(\trcmt*[f]{$j \in \mathrm{term}(j)$ and $|\mathrm{term}(j)| \leq 2$}){$(v_i, v_j) \in E(G)$}{
        \If{$\mathrm{term}(j) = \{j\}$}{
          $\rowadd{i,j}$;
        }
        \Else{
          $\{j, k\} \gets \mathrm{term}(j)$\trcmt*{$k > j$ and $\mathrm{term}(k) = \{i, k\}$}
          $\rowadd{k, j}$;
        }
      }
    }
  }
  \For{$i\gets n-1, n-2, \dots, 1$}{
    $\{i,j\}\gets \mathrm{term}(i)$\trcmt*{$j > i$ and $\mathrm{term}(j) = \{j\}$}
    $\rowadd{j,i}$\;
  }
\end{algorithm}

\begin{lemma}
  \label{lem:comp-cond}
  Let~$G$ be a biconnected graph.
  Given a perfect cancellation ordering of a graph~$G$, we can synthesize a perfect graphic parity network for~$G$ in~$O(m + n)$ time.
\end{lemma}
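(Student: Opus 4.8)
The plan is to reason directly about \Cref{alg:sythchordal}. Let $\sigma=(v_1,\dots,v_n)$ be the given perfect cancellation ordering. Since $G$ is biconnected we have $c=1$, and $G-v$ is connected for every $v$, so the perfect cancellation condition collapses to the statement that $N^+_\sigma(v)$ is $\sigma$-linked for every $v\in V(G)$. For a vertex $v_j$ with $N^+_\sigma(v_j)\neq\emptyset$, list its later neighbors as $v_{i_1},\dots,v_{i_d}$ with $i_1<\dots<i_d$ and set $g(j)=i_d$; write $N^-_\sigma(v)=N(v)\setminus N^+_\sigma(v)$ for the earlier neighbors. The target size is $m+n-1$.

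The core of the argument is a loop invariant for the first double loop (the iterations $i=2,\dots,n$): \emph{after iteration $i$, wire $j$ carries the singleton $\{j\}$ if $v_j$ has no later neighbor of index at most $i$, and otherwise carries $\{j,i_t\}$, where $v_{i_t}$ is the later neighbor of $v_j$ of largest index not exceeding $i$.} I would prove this by induction on $i$, with a nested induction over the inner loop (which runs $j=i-1,\dots,1$). The only interesting case of the step is when the inner loop reaches an earlier neighbor $v_j$ of $v_i$ for which $v_i$ is not $v_j$'s first later neighbor, i.e. $v_i=v_{i_s}$ with $s\ge 2$: by the outer hypothesis wire $j$ then holds $\{j,i_{s-1}\}$, and the algorithm executes $\rowadd{i_{s-1},j}$. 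This produces the desired cancellation $\{j,i_{s-1}\}\oplus\{i_{s-1},i\}=\{j,i\}$ \emph{exactly when} wire $i_{s-1}$ currently holds $\{i_{s-1},i\}$, and this is where the perfect cancellation ordering is used: since $v_{i_{s-1}}$ and $v_{i_s}=v_i$ are consecutive in $\sigma|_{N^+_\sigma(v_j)}$, $\sigma$-linkedness forces $v_{i_{s-1}}\sim v_i$; because $j<i_{s-1}<i$, the inner loop has already visited $v_{i_{s-1}}$ during iteration $i$, and by the nested induction that visit left $\{i_{s-1},i\}$ on wire $i_{s-1}$. The base of the nested induction — the largest-indexed earlier neighbor $v_j$ of $v_i$ — needs a small extra observation: $v_i$ must be the \emph{first} later neighbor of that particular $v_j$, since otherwise $\sigma$-linkedness (applied to $v_j$) would yield a neighbor of $v_i$ of index strictly between $j$ and $i$, contradicting the maximality of $j$; hence wire $j$ holds $\{j\}$ there and $\rowadd{i,j}$ suffices.

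From this invariant I would extract three facts. (i)~For each edge $v_av_b$ with $a<b$, iteration $b$ visits $v_a$ and thereafter wire $a$ carries $\{a,b\}$, so every edge term appears at some intermediate point. (ii)~The first double loop performs exactly one gate for each pair (iteration $i$, earlier neighbor $v_j$ of $v_i$), hence $\sum_i|N^-_\sigma(v_i)|=m$ gates in total. (iii)~After the first double loop, wire $i$ carries $\{i,g(i)\}$ whenever $v_i$ has a later neighbor, and carries $\{i\}$ otherwise (in particular wire $n$ carries $\{n\}$). For the restoration loop ($i=n-1,\dots,1$) I would argue by reverse induction: when $i$ is reached, $g(i)>i$ has already been processed, so wire $g(i)$ carries $\{g(i)\}$; thus $\rowadd{g(i),i}$ turns $\{i,g(i)\}$ into $\{i\}$ while leaving wire $g(i)$ unchanged. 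Consequently every wire ends with its input term, so the output is a graphic parity network, of size at most $m+(n-1)$.

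Two points remain. First, the restoration loop as written needs $|\mathrm{term}(i)|=2$ for every $i<n$, i.e.\ every $v_i$ with $i<n$ must have a later neighbor. This can be secured either directly — a short argument combining biconnectedness with $\sigma$-linkedness at the highest-indexed neighbor of a hypothetical counterexample shows that $v_n$ is the unique vertex with empty $N^+_\sigma$ — or a posteriori: the construction above already is a valid graphic parity network of size $m+|\{i<n:N^+_\sigma(v_i)\neq\emptyset\}|\le m+n-1$, which by \Cref{thm:opt-pn-size-ctrb} must be at least $m+n-1$, forcing equality (hence the claim), so the synthesized network has size exactly $m+n-1$ and is perfect. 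Second, for the time bound: although the inner loop is stated over all $j<i$, it does work only at edges, so presorting every adjacency list by $\sigma$ (bucket sort, $O(m+n)$ in total) and keeping $\mathrm{term}(\cdot)$ in an array makes iteration $i$ cost $O(1+|N^-_\sigma(v_i)|)$, and the restoration loop costs $O(n)$, for $O(m+n)$ overall. I expect the nested induction certifying the cancellation invariant — in particular verifying, via $\sigma$-linkedness, that the $\rowadd{i_{s-1},j}$ step always finds $\{i_{s-1},i\}$ already on wire $i_{s-1}$ — to be the principal obstacle; the rest is bookkeeping.
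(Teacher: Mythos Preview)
Your proof is correct and follows essentially the same route as the paper's: both maintain the invariant that wire $j$ carries $\{j,k\}$ with $v_k$ the most recently processed later neighbor of $v_j$ (the paper's invariants (I1)--(I3)), both invoke $\sigma$-linkedness to ensure the control wire already holds the needed term at the cancellation step, and both derive the size and time bounds identically. Your explicit nested induction and your handling of the unique-sink issue (the a~posteriori route via \Cref{thm:opt-pn-size-ctrb} is the cleaner of your two options) are in fact more careful than the paper's somewhat terse treatment of these same points.
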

\begin{proof}
  Let~$\sigma$ be the perfect cancellation ordering.
  We may number the vertices such that~$\sigma(v_{i}) = i$, and use Algorithm~\ref{alg:sythchordal}.  It generates all the binary terms in the main loop (lines~1--8) before restoring the singleton terms in line~9.
  Initially, $\mathrm{term}(j) = \{j\}$ for all~$j = 1, \ldots, n$.
  The algorithm maintains the following invariants.  Before the execution of the~$i$th iteration, for all~$j = 1, \ldots, n$,
  \begin{enumerate}[label={(I\arabic*)},left=10pt]
    \item\label{inv:1} $j\in \mathrm{term}(j)$ and~$|\mathrm{term}(j)| \le 2$;
    \item\label{inv:2} $\mathrm{term}(j) = \ket{j}$ if~$j \ge i$; and
    \item\label{inv:3} if $\mathrm{term}(j) = \ket{j, k}$, then~$j < k < i$, $(v_j, v_k)\in E(G)$, and~$(v_j, v_{k'})\not\in E(G)$ for all~$k'$ with~$k < k' < i$.
  \end{enumerate}
  Now we show that the invariants are maintained.
  In the iteration from the inner loop (lines~2 to~8), only~$\mathrm{term}(j)$ is modified.
  By invariant~\ref{inv:2},~$\mathrm{term}(i) = \ket{i}$, and it remains true during the~$i$th iteration of the for the main loop because wire~$i$ is never the target.
  Moreover, for each~$j < i$, wire~$j$ is the target in and only in the $j$th iteration of the inner loop.
  If~$(v_i, v_j)\not\in E(G)$, then~$\mathrm{term}(j)$ is not changed, and all the invariants remain true.
  Hence, assume~$(v_i, v_j)\in E(G)$, and we argue that~$\mathrm{term}(j) =\ket{i, j}$ after this iteration, and then all invariants remain satisfied afterward.
  It is straightforward when~$\mathrm{term}(j) =\ket{j}$ (line~5), and we focus on the else branch (line~6).
  Line~7 is correct by~\ref{inv:1} and the fact that the condition in line~4 is not satisfied.
  By invariant~\ref{inv:3}, $k > j$, and~$(v_{k'}, v_j)\not\in E(G)$ for all~$k'$ with~$k < k' < i$.
  Since~$\sigma$ is a perfect cancellation ordering,~$k$ and~$i$ are adjacent.
  By invariant~\ref{inv:3},~$\mathrm{term}(k) =\ket{k, i}$, and thus line~8 sets~$\mathrm{term}(j)$ to~$\ket{k, i}$.
  Thus, the algorithm correctly produces a graphic parity network for~$G$.

  We now verify that the synthesized circuit is perfect.  For each edge, the algorithm introduces precisely one gate.  Moreover, since~$G$ is connected, by invariant~\ref{inv:3},~$|\mathrm{term}(j)| = 2$ for all~$j = 1, \ldots, n-1$ when the algorithm reaches line~9.
  By invariants~\ref{inv:1} and~\ref{inv:3}, we can restore every wire to be a singleton term by one gate.  Thus, the total size is precisely~$m + n - 1$.

  Let us briefly explain the implementation.
  We assume the graph is stored as adjacency lists.  It is pedestrian to reconstruct the lists such that each list is sorted.  Thus, the number of iterations of the loop of line~3 can be the number of neighbors of~$v$.  The total time is thus~$O(m + n)$
\end{proof}

\clu{A possibly simpler way of working with non-biconnected graphs}
If G is not biconnected, we %
synthesize a graphic parity network for $G$ by independently handling its biconnected components. The detailed description as well as the formal proof of \cref{thm:chordal-ideal} is deferred to \cref{app:prof-chordal-ideal}.

\subsection{A randomized synthesizing algorithm}
\label{sec:randomized-algo}

Let~$G$ be an arbitrary graph.  We present a random algorithm to synthesize a graphic parity network for~$G$.
We process the vertices in a random order, and for each vertex~$i$, we generate all the edges between this vertex and latter vertices in this order, before resetting this wire to~$\{i\}$.
Similar to \Cref{alg:sythchordal}, we never introduce a term with more than two elements, and the item~$i$ never leaves wire~$i$.
It has three phases.

The first phase only applies when the term on wire~$i$ is binary, i.e., it has an earlier neighbor in the order.  Let~$j$ be the other number in this term.  For all unprocessed neighbors~$v_k$ of~$v_i$, if the term on wire~$k$ is~$\{j, k\}$, we can generate~$\{i, k\}$ by adding~$\{i, j\}$ to it.  The three vertices form a triangle.

In the second phase, we deal with neighbors~$v_j$ of~$v_i$ such that the term on wire~$j$ is binary.  We group them according to the other item in their terms, and process each group by cancellation.  Here we attempt to add a non-edge term~$\ket{i, j}$ to facilitate dealing with edges between~$v_i$ and common neighbors of~$v_i$ and~$v_j$; see the discussion about~$K_{2, p}$ above for motivation.

Finally, we deal with other neighbors of~$v_i$ individually.
We summarize it as Algorithm~\ref{alg:sythrandom}.

\begin{algorithm}[h]
  \caption{A randomized algorithm for graphic parity network synthesis.}
  \label{alg:sythrandom}
  \SetAlgoLined
  renumber the vertices such that~$\pi(v_{i}) = i$\trcmt*[r]{$\pi$ is a random permutation of~$[n]$.}
  \For(\trcmt*[f]{The size of each term is at most two.}){$i \gets 1, 2, \dots, n$}{
    \If{$\mathrm{term}(i)$ is binary}{
      $\{i, j\} \gets \mathrm{term}(i)$\trcmt*[r]{$j < i$.}
      \For(\trcmt*[f]{$k > i$.}){$v_k \in N(v_i)$ \textbf{such that} $\mathrm{term}(k) = \{j, k\}$}{
        \Call{\textsc{cnot}}{i, k}\;
        remove edge $(v_i, v_k)$\;
      }
      \Call{\textsc{cnot}}{j, i}\;
    }
    \For{$j \gets 1, 2, \dots, i-1$}{
      $K \gets \{v_k \in N(v_i) \mid \mathrm{term}(k) = \{j, k\}\}$\;
      \If(\trcmt*[f]{$v_{i} v_{j}\not\in E(G)$ or is already generated.}){$K \neq \emptyset$}{
        \Call{\textsc{cnot}}{j, i}\;
        \For(\trcmt*[f]{$k > i$.}){$v_k \in K$}{
          \Call{\textsc{cnot}}{i, k}\;
          remove edge $(v_i, v_k)$\;
        }
        \Call{\textsc{cnot}}{j, i}\trcmt*[f]{Clean up.}
      }
    }
    \For(\trcmt*[f]{$k > i$.}){$v_k \in N(v_i)$}{
      \Call{\textsc{cnot}}{i, k}\;
      remove edge $(v_i, v_k)$\;
    }
  }
\end{algorithm}

\begin{lemma}\label{lem:random-correctness}
  Algorithm~\ref{alg:sythrandom} synthesizes a graphic parity network for~$G$ in polynomial time.
\end{lemma}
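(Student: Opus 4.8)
The plan is to verify, by induction on~$i$, that Algorithm~\ref{alg:sythrandom} maintains the same kind of invariants as those used in the proof of Lemma~\ref{lem:comp-cond}: namely that after the $i$th outer iteration, (i) $\mathrm{term}(k) \ni k$ and $|\mathrm{term}(k)| \le 2$ for all~$k$; (ii) $\mathrm{term}(k) = \{k\}$ for all~$k \le i$; and (iii) if $\mathrm{term}(k) = \{j,k\}$ with $j < k$ then $v_j v_k$ is a \emph{remaining} edge, i.e., an edge of~$G$ that has not yet been removed. Crucially I also want the invariant that the set of \emph{generated} binary terms equals the set of \emph{removed} edges, and that once wire~$i$ has been processed no future \textsc{cnot} ever targets or controls wire~$i$ except to read off the (already singleton) value — so the item~$i$ never leaves wire~$i$, exactly as claimed in the text. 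Granting these, correctness is immediate: every edge incident to~$v_i$ among later vertices is removed during iteration~$i$ (it falls into the phase-1, phase-2, or phase-3 loop depending on the term on its other endpoint), so at the end every edge has been removed, hence every binary term $\{u,v\}$ with $(u,v)\in E(G)$ has appeared; and by invariant (ii) with $i = n$, every wire is back to its singleton, so the final state matches the input.

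The first concrete step is to check that each of the three phases only ever issues \textsc{cnot}s that are legal given the current terms and that it leaves the invariants intact. Phase~1 (lines~3--8): if $\mathrm{term}(i) = \{i,j\}$, then for each later neighbor~$v_k$ with $\mathrm{term}(k) = \{j,k\}$, $\Call{\textsc{cnot}}{i,k}$ sets $\mathrm{term}(k) = \{i,j\}\oplus\{j,k\} = \{i,k\}$, a cancellation; here I must note that $v_i v_k$ is an edge, that $v_j v_k$ was a remaining edge by (iii) and $v_i v_j$ a remaining edge, so the three form a triangle, and I must be careful that removing edge $(v_i,v_k)$ keeps (iii) true for wire~$k$ because $v_i v_k$ is still present at the moment of the assignment — order the bookkeeping so the removal is recorded \emph{after} we note the term. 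Then line~8, $\Call{\textsc{cnot}}{j,i}$, restores $\mathrm{term}(i) = \{i\}$; here I must check $v_i v_j$ is recorded as removed (it was generated in an earlier iteration $j < i$), so no edge is "double counted." Phase~2 (lines~9--17): for each earlier index~$j$, if the group $K$ of later neighbors of~$v_i$ whose term is $\{j,k\}$ is nonempty, we push $\{i,j\}$ onto wire~$i$ (line~12) — note $v_i v_j$ may or may not be an edge; if it is an edge it has already been generated in iteration~$j$, if not, $\{i,j\}$ is a non-edge "helper" term that is legitimate since a parity network only requires edge terms to appear, not to exclude others — then cancel as in phase~1 for each $v_k \in K$, then pop $\{i,j\}$ off again (line~17). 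Phase~3 (lines~18--21): any remaining later neighbor is handled by the trivial $\Call{\textsc{cnot}}{i,k}$, which sets $\mathrm{term}(k) = \{i,k\}$ from $\{k\}$ (these are exactly the $k$ not caught by phases~1--2, so $\mathrm{term}(k) = \{k\}$ still holds), and removing the edge.

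The one subtlety I expect to be the main obstacle is showing that the three phases together are \emph{exhaustive} for the neighbors of~$v_i$ among later vertices, and \emph{disjoint}, so that each such edge is removed exactly once and the resulting term on each wire is as recorded. Concretely: when iteration~$i$ begins, consider a later neighbor $v_k$ of $v_i$. By invariants (ii) and (iii), $\mathrm{term}(k)$ is either $\{k\}$ or $\{j,k\}$ for some $j < k$; if $j < i$ it is caught in phase~1 (if $j$ happens to be the partner on wire~$i$) or phase~2 (it lands in the group for that~$j$); if $\mathrm{term}(k) = \{k\}$ it survives to phase~3. I need to confirm that phase~1's edges are removed before phase~2 scans the groups (so they are not processed twice) and that after phase~1 the term on wire~$i$ is back to $\{i\}$, so phase~2's line~12 starts from a singleton — hence each "push/cancel/pop" block is self-contained and the net effect on wire~$i$ is nil. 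Finally, polynomial running time: the outer loop runs~$n$ times, each inner loop over~$j$ runs $O(n)$ times, and computing each group~$K$ and scanning neighbor lists costs $O(n + \deg v_i)$ with suitable adjacency structures, so the whole algorithm runs in $O(n^2 + nm) = O(nm)$ time, which is polynomial. I would close by remarking that, as with Algorithm~\ref{alg:sythchordal}, a sorted-adjacency-list implementation makes the per-iteration cost proportional to $\deg v_i$ plus the number of distinct partner-indices, but since only polynomiality is claimed here the crude bound suffices.
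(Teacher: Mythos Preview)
Your approach mirrors the paper's: check that wire~$i$ is restored to~$\{i\}$ at the end of iteration~$i$, and that every edge $(v_i, v_k)$ with $k > i$ yields the term~$\{i,k\}$ during that iteration. The plan is sound and considerably more detailed than the paper's terse argument, but invariant~(iii) as stated is false. After iteration~$i$ processes a later neighbor~$v_k$, the algorithm sets $\mathrm{term}(k) = \{i,k\}$ and then removes the edge $(v_i, v_k)$, so at the end of the iteration the binary term on wire~$k$ corresponds to an edge that is \emph{not} remaining. You even note, two sentences later, that the edge $v_i v_j$ witnessing $\mathrm{term}(i) = \{i,j\}$ ``was generated in an earlier iteration $j < i$'' and hence already removed --- which directly contradicts~(iii) applied to wire~$i$. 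The bookkeeping fix cannot help, since the invariant must hold after the full iteration.

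Fortunately~(iii) is inessential: the triangle remark is only commentary, and for the exhaustiveness case split what you actually need is the weaker fact that if $\mathrm{term}(k) = \{j,k\}$ at the start of iteration~$i$ then $j < i$. This is precisely the case you leave open (you write ``if $j < i$'' as a hypothesis rather than a conclusion). It follows because every \textsc{cnot} targeting wire~$k$ in an earlier iteration~$i'$ leaves $\mathrm{term}(k) = \{i', k\}$ with $i' < i$. Replace~(iii) by that statement and your three-phase partition is genuinely exhaustive; the rest of the argument, including the polynomial-time bound, then goes through as written.
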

\begin{proof}%
  The algorithm starts with renumbering the vertices such that~$\pi(v_{i}) = i$.
  Initially,~$\mathrm{term}(i) = \{i\}$ for all~$i = 1, \ldots, n$.
  In the~$i$th iteration of the main loop (lines 2--19), the algorithm generates all the terms for edges~$(v_{i}, v_{k})$ with~$k > i$, before reseting~$\mathrm{term}(i) = \{i\}$.
  If~$\mathrm{term}(i)$ is binary at the beginning, line~8 resets it.
  The only operations targeting wire~$i$ are lines 12 and 16.  If line 12 changes wire~$i$, line 16 duly resets it.
  It remains to verify that all edges are generated.
  Let~$v_{k}$ be a neighbor of~$v_{i}$ with~$k > i$.
  If~$\mathrm{term}(k)$ is not binary before the~$i$th main loop, it is added by either line 6 or line 15, depending upon whether~$\mathrm{term}(i)$ and~$\mathrm{term}(k)$ have a item.
  Thus, the algorithm correctly produces a graphic parity network for~$G$.  The algorithm clearly runs in polynomial time.
\end{proof} 
In \Cref{alg:sythrandom}, except for lines 12 and 16, each other operation either generates a new item or clears up a wire.  Therefore, to bound the \deleted{number of }size of the synthesized circuit, it suffices to bound the number of them being executed.  Since they are always executed in pair, it suffices to count line 12.
For a set~$X$, we use~${\mathfrak {S}}_{X}$ to denote the set of all permutations of~$X$, and we use~${\mathfrak {S}}_{n}$ as an shorthand for~${\mathfrak {S}}_{[n]}$.
The degree of vertex~$v$ is~$d(v)$.

\begin{lemma}\label{lem:analysis}
  The expected number of line~12 being executed is upper bounded by
  \[
    4n+ \min_{1\le t\le n}\left\{2\sum_{i< t}d_i+\frac{(n-t)n\log n}{d_{t}}\right\},
  \]
  where $d_1,d_2,\ldots,d_n$ are the degrees of the vertices sorted in ascending order.
\end{lemma}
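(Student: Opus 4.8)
The first step is to pin down exactly when line~12 fires. One shows, by induction on~$i$, that at the start of the $i$-th iteration the term~$\mathrm{term}(k)$ on wire~$k$ (for~$k>i$) equals~$\{\,j^*(v_k),k\,\}$, where~$j^*(v_k)$ is the index of the \emph{latest neighbour of~$v_k$ among~$v_1,\dots,v_{i-1}$} (and it is the singleton~$\{k\}$ if no such neighbour exists). The reason is that every edge is generated in the iteration indexed by its smaller endpoint, so wire~$k$ is touched in an iteration~$i'<i$ precisely when~$v_{i'}v_k\in E(G)$, and each touch relabels the wire by the current iteration index. Consequently, in iteration~$i$ line~12 fires for~$j$ if and only if~$j\in J_i:=\{\,j^*(v_k):v_k\in N(v_i),\ k>i\,\}$, except that the single value~$j^*(v_i)$, if it exists, has already been disposed of in the first phase; hence the number of executions of line~12 is at most~$\sum_{u\in V(G)}|J_u|$, writing~$J_u$ for this set at the moment~$u$ is processed. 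It is convenient to rephrase~$w\in J_u$ as follows: there is a \emph{witness}~$z\in N(u)$ appearing \emph{after}~$u$ such that~$w$ is the neighbour of~$z$ immediately preceding~$u$ in the random order. Since for a fixed~$z$ and~$u$ this~$w$ is unique, the map sending~$w$ to its witness is injective on~$J_u$, so~$|J_u|\le d(u)$ and~$\sum_u|J_u|\le 2m$ deterministically.

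Now fix~$t$ and set~$\tau:=d_t$, and partition~$J_u=J_u^{\mathrm{lo}}\cup J_u^{\mathrm{hi}}$ according to whether all witnesses of~$w$ have degree~$<\tau$, or some witness has degree~$\ge\tau$. A vertex~$z$ of degree~$<\tau$ can be the witness for at most~$d(z)-1$ pairs~$(w,u)$ in any fixed order---one for each consecutive pair of neighbours of~$z$---so
\[
  \sum_{u}\bigl|J_u^{\mathrm{lo}}\bigr|\ \le\ \sum_{z\,:\,d(z)<\tau}\bigl(d(z)-1\bigr)\ \le\ \sum_{i<t}d_i,
\]
\emph{deterministically}. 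The whole weight of the argument therefore rests on bounding~$\mathbb{E}\bigl[\sum_u|J_u^{\mathrm{hi}}|\bigr]$; the target is~$O(n)+O\!\bigl((n-t)\,n\log n/\tau\bigr)$.

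For this I would use a high-probability structural event. Fix~$z$ with~$d(z)\ge\tau$; a Chernoff estimate followed by a union bound over the~$\le d(z)$ gaps between consecutive neighbours of~$z$ shows that, except with probability~$n^{-2}$, no two consecutive neighbours of~$z$ in the random order are farther than~$g:=O\!\bigl(n\log n/\tau\bigr)$ positions apart; a further union bound over all~$z$ of degree~$\ge\tau$ produces an event~$\mathcal{G}$ with~$\Pr[\mathcal{G}]\ge 1-1/n$. On~$\mathcal{G}$, every~$w\in J_u^{\mathrm{hi}}$ lies within~$g$ positions before~$u$, so~$|J_u^{\mathrm{hi}}|\le\min\{g,B_u\}$, where~$B_u$ is the number of neighbours of~$u$ of degree~$\ge\tau$ that appear after~$u$ and have a neighbour before~$u$. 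One has~$\sum_u B_u=\sum_{z\,:\,d(z)\ge\tau}\max\{p_z-1,0\}$, where~$p_z$ is the number of neighbours of~$z$ preceding~$z$, with~$\mathbb{E}[p_z]=d(z)/2$, and~$\sum_{z\,:\,d(z)\ge\tau}d(z)\le(n-t+1)(n-1)$. Combining the two per-vertex estimates for~$|J_u^{\mathrm{hi}}|$, and adding the trivial bound~$2m$ on the complement of~$\mathcal{G}$ (which contributes~$\le 2m/n\le n$ in expectation), yields the claimed bound on~$\mathbb{E}\bigl[\sum_u|J_u^{\mathrm{hi}}|\bigr]$. Adding the two parts and minimising over~$t$ finishes the proof.

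The delicate point---and where a naive argument breaks---is extracting the factor~$n-t$, the number of high-degree vertices, rather than~$n$, in the high part: on~$\mathcal{G}$ the crude estimate~$\sum_u g=ng=\Theta(n^2\log n/\tau)$ is too large when there are few high-degree vertices. One must genuinely interleave the closeness bound~$|J_u^{\mathrm{hi}}|\le g$ with the injectivity bound~$|J_u^{\mathrm{hi}}|\le B_u$, and carry out a short case analysis according to whether~$g$ exceeds~$n$ and according to the shape of the sorted degree sequence---using that many vertices of degree~$\ge\tau$ force the remaining vertices to have substantial degree, so that the slack in~$2\sum_{i<t}d_i$ absorbs the surplus---while always keeping the deterministic bound~$2m$ in reserve. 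Ties in the degree sequence are a minor nuisance, absorbed by the additive~$4n$.
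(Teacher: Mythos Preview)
Your characterization of when line~12 fires is correct, and the high-probability event $\mathcal{G}$ (all gaps between consecutive neighbours of every vertex of degree $\ge\tau$ are at most $g=O(n\log n/\tau)$) is a legitimate alternative to the paper's per-witness tail estimate. The gap is precisely where you flag it. Your route through $B_u$ and a case analysis ``according to whether $g$ exceeds $n$ and the shape of the sorted degree sequence'' does not work as written: the claim that many vertices of degree $\ge\tau$ force the remaining vertices to have substantial degree is false (consider $K_{n-t}$ together with $t$ isolated vertices), and neither $\sum_u\min\{g,B_u\}\le ng$ nor $\mathbb{E}\bigl[\sum_u B_u\bigr]\le\tfrac12\sum_{z:\,d(z)\ge\tau}d(z)$ isolates the factor $n-t$.

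The fix is simple and you are one move away from it: split additionally by the degree of the \emph{processed} vertex $u$, not only by the degree of the witness. For the vertices $u$ with $d(u)<\tau$ (at most $t-1$ of them) use your injectivity bound $|J_u|\le d(u)$, costing at most $\sum_{i<t}d_i$. For the remaining vertices (at most $n-t+1$ of them), on $\mathcal{G}$ you have $|J_u^{\mathrm{hi}}|\le g$, contributing $(n-t+1)g$, while your witness-counting argument still gives $\sum_u|J_u^{\mathrm{lo}}|\le\sum_{z:\,d(z)<\tau}(d(z)-1)\le\sum_{i<t}d_i$. This yields $2\sum_{i<t}d_i+(n-t+1)g$ on $\mathcal{G}$, and the complement contributes $O(n)$ as you note; the whole $B_u$ apparatus can be discarded.

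For comparison, the paper avoids the global event $\mathcal{G}$ altogether. It first relaxes the firing condition to a cleaner combinatorial one on ordered pairs $(u,v)$ with a witness $w$, then uses a monotonicity/restriction trick to bound the total count by $n$ copies of the count conditioned on the processed vertex occupying position~$1$ in~$\pi$; this is what delivers the per-vertex viewpoint without a union bound. Conditioning on which vertex $u$ sits at position~$1$ gives the very same degree split: if $d(u)\le d_t$ the count is at most $d(u)$; otherwise, a direct geometric-type tail bound shows that for each witness $w$ with $d(w)>d_t$ the second neighbour of $w$ lands beyond $n\log n/d_t$ with probability $<4/n$, so high-degree witnesses together contribute at most $n\log n/d_t+4$ in expectation and low-degree witnesses at most one each. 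Summing over $u$ yields the bound directly, with no good/bad event to reconcile.
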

\begin{proof}
  Line~12 is only executed when~$K\ne \emptyset$ and~$i\not\in \mathrm{term}(j)$ at the beginning of the~$i$th iteration.  With permutation~$\pi$, the number of line~12 being executed is the number of pairs~$(u, v)$ such that there exists an extra vertex~$w$ satisfying
  \begin{enumerate}[label = (\roman*),left=10pt]
    \item $u w, v w\in E(G)$;
    \item $u$ is not the last neighbor of~$v$ in~$\pi$;
    \item $\pi(u) < \pi(v) < \pi(w)$; and
    \item $w x\not\in E(G)$ for all vertices~$x$ with $\pi(u) < \pi(x) < \pi(v)$.
  \end{enumerate}
  Therefore, it cannot be more than the number of pairs~$(u, v)$ such that there exists an extra vertex~$w$ merely satisfying (iv) and
  \begin{enumerate}[label = (\roman*),left=10pt]
    \setcounter{enumi}{3}
    \item $u w, v w\in E(G)$ and $\pi(u) < \pi(v)$.
  \end{enumerate}
  Now we obtain an upper bound on the number of such pairs.

  For any permutation~$\pi \in {\mathfrak {S}}_{n}$,
  let~$P(\pi)$ denote all such pairs when the algorithm is executed using permutation $\pi$, and let~$P(\pi, j)$ denote the subset of~$P(\pi)$ in which the first vertex is fixed by~$\pi(u) = j$.
  Then the expected number of such pairs is
  \begin{align}
    \notag
    \mathop{\mathbb{E}}_{\pi \in {\mathfrak {S}}_{n}} |P(\pi)| = & \mathop{\mathbb{E}}_{\pi \in {\mathfrak {S}}_{n}} \sum_{j = 1}^{n} |P(\pi, j)|
    \\      \notag
    =                                                            & \sum_{j = 1}^{n} \mathop{\mathbb{E}}_{\pi \in {\mathfrak {S}}_{n}}  |P(\pi, j)|
    \\      \notag
    =                                                            & \sum_{j=1}^n
    \mathop{\mathbb{E}}_{X\in {[n]\choose j}}
    \mathop{\mathbb{E}}_{\substack{{\pi \in {\mathfrak {S}}_{X}}                                                                                           \\  \forall x\in X. \pi(x) > n - j}}
    |P(\pi, n - j + 1)|
    \\
    \le                                                          & \sum_{j=1}^n \mathop{\mathbb{E}}_{\pi \in {\mathfrak {S}}_{n}} |P(\pi, 1)|.\label{eq:4}
  \end{align}
  Therefore, it reduces to bounding the expected size of~$P(\pi, 1)$.
  Consider any fixed~$t\in [n]$.
  If~$d(\pi^{-1}(1)) \le d_{t}$, then
  \begin{equation}
    \label{eq:5}
    |P(\pi, 1)| \le d_{t}.
  \end{equation}
  We now consider the nontrivial case, where~$d(\pi^{-1}(1)) > d_{t}$.
  Note that~$|P(\pi, 1)|$ is the number of vertices~$v$ such that there exists another vertex~$w$ whose first two neighbors in~$\pi$ are~$\pi^{-1}(1)$ and~$v$; we say that it is \emph{witnessed by~$w$}.
  For each vertex~$w\in N(\pi^{-1}(1))$, let~$X_{w}$ be the index of the second neighbor of~$x$ in~$\pi$.  Then
  \[
    \mathbb{P}\left( X_{w}=i\right) = \left(1-\frac {d(w)-1}{n-2}\right)\cdots \left(1-\frac {d(w)-1}{n-(i-1)}\right) \frac{d(w)-1}{n-i} >                                                   \left(1- \frac {d(w)-1}{n} \right)^{i - 2} \frac{d(w)-1}{n}.
  \]
  Letting~$\alpha = 1-\frac {d(w)-1}{n}$, we have
  \begin{align}
    \notag
      & \mathbb{P}\left(X_{w}\le {\frac {n \log n}{d(w)-1}} \right)
    \\
    \notag
    = & \mathbb{P}\left(X_{w}=2  \right) + \cdots +
    \mathbb{P}\left(X_{w}=\left\lfloor {\frac {n \log n}{d(w)-1}} \right\rfloor \right)
    \notag
    \\
    > & \frac{d(w)-1}{n} + \cdots + \alpha^{\left\lfloor {\frac {n \log n}{d(w)-1}} \right\rfloor - 2} \left(\frac{d(w)-1}{n}\right)
    \notag
    \\
    = & \left(\frac{1 - \alpha^{\left\lfloor {\frac {n \log n}{d(w)-1}} \right\rfloor - 1}}{1 - \alpha}\right) \left(\frac{d(w)-1}{n}\right)
    \notag
    \\
    = & {1 - \alpha^{\left\lfloor {\frac {n \log n}{d(w)-1}} \right\rfloor - 1}}.
    \notag
    \label{eq:2}
  \end{align}
  If~$d(w) > d_t$, then~$\frac {n\log n}{d(w)-1} \le \frac {n\log n}{d_t}$, and
  \begin{align}
    \mathbb{P}\left(X_{w}>\frac {n\log n}{d_t}\right)< \alpha^{\left\lfloor {\frac {n \log n}{d(w)-1}} \right\rfloor - 1} =  \left(1-\frac {d(w)-1}{n}\right)^{\left\lfloor {\frac {n \log n}{d(w)-1}} \right\rfloor - 1} < \frac{4}{n}.
  \end{align}
  Since the number of such vertices~$w$ is less than $n-1$ , then all such vertices witness at most~$\frac {n\log n}{d_t} + 4$ vertices.
  On the other hand, each vertex~$w$ with~$d(w)\le d_t$ witnesses at most one vertex.
  In summary, for each~$u\in V(G)$,
  \begin{equation}
    \label{eq:3}
    \mathop{\mathbb{E}}_{\substack{\pi \in {\mathfrak {S}_{n}}\\ \pi(u) = 1}} |P(\pi, 1)| <
    \frac {n\log n}{d_t} + 4 +
    \sum_{\substack{{w\in N(u)}\\ {d(w)\le d_t}}}1.
  \end{equation}

  Combining~\eqref{eq:4},~\eqref{eq:5}, and~\eqref{eq:3}, we conclude that the expected number of line~12 being executed is less than
  \begin{equation}
    \label{eq:6}
    \sum_{d_i\le t} d_i + \sum_{\substack{u \in V(G)\\ d(u) > d_{t}}} \left(\frac {n\log n}{d_t} + 4 +
    \sum_{\substack{{w\in N(u)}\\ {d(w)\le d_t}}}1\right)
    = 2\sum_{d_i\le t} d_i + \sum_{d_i>t} \frac {n\log n}{d_t} + 4n.
  \end{equation}
  The statement follows because~\eqref{eq:6} holds for all~$t\in [n]$.
\end{proof}

{\Cref{cor:simpl} is thus a direct consequence of \Cref{lem:random-correctness} and \Cref{lem:analysis}, whose proof is defered to \cref{app:proof-cor:simpl}.}

\section{Recognition of perfect cancellation graphs}

For any graph class, the first algorithmic question is its \emph{recognition}: to decide whether a given graph is in this class.
In this section, we investigate the complexity and algorithms of recognizing perfect cancellation graphs.

\subsection{The NP-completeness of recognition}

\replaced{We now prove \cref{thm:hardness-perfect-cancellation}, showing }{We show} that the recognition of perfect cancellation graphs is \NP-compete, by a reduction from the following problem, which is shown to be \NP-complete by Opatrny~\cite{opatrnyTotalOrderingProblem1979}.
\begin{definition}[Betweeness]
  Given a finite set~$S$ and a set of ordered triples~$T\subseteq S\times S\times S$, the betweenness problem asks to determine whether there exists a total ordering~$\pi$ of S such that for every triple~$(x,y,z)$ in~$T$, either~$x <_\pi y <_\pi z$ or~$z <_\pi y <_\pi x$.
\end{definition}

The key observation of our reduction is to use a set of false twins to force the order on a triple of vertices with two edges among them.  A set of vertices with the same neighborhood is called a \emph{false twins}.  Note that by definition, there cannot be any edge among false twins.

\begin{lemma}
  \label{lem:force-edge-orientation}
  Let~$G$ be a perfect cancellation graph and~$I$ a set of false twins of~$G$.  If~$|N(I)|\le |I| - 1$, then~$N(I)$ is~$\sigma$-linked in any perfect cancellation ordering~$\sigma$ of~$G$,
\end{lemma}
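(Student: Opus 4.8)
The plan is to fix a perfect cancellation ordering $\sigma$ of $G$ and show that the false twins in $I$ are ``spread out'' enough to sandwich all of $N(I)$ inside a single component-slice where the $\sigma$-linked condition applies. Concretely, let $v$ be the vertex of $I$ that comes \emph{first} in $\sigma$ among all vertices of $I$. Since $I$ is a set of false twins, every vertex of $I$ has neighborhood exactly $N(I)$; in particular $N_\sigma^+(v) \supseteq N(I) \cap \{x : x >_\sigma v\}$. The idea is that the other $|I|-1$ vertices of $I$, all lying after $v$ in $\sigma$, together with $N(I)$, force the relevant structure.

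\textbf{Key steps.} First I would argue that $N(I)$ lies entirely in one component $C$ of $G - v$: every $w \in N(I)$ is adjacent to some vertex $u \in I \setminus \{v\}$ (since $w \in N(u)$ for all $u \in I$), and all such $u$ are themselves in $G - v$ and adjacent to $w$, so $w$ and all of $I \setminus \{v\}$ lie in the same component of $G - v$; since $|I \setminus \{v\}| = |I| - 1 \ge |N(I)|$ and the $u$'s are nonadjacent to each other but each adjacent to the whole of $N(I)$, the set $N(I) \cup (I \setminus \{v\})$ is connected in $G - v$, hence contained in a single component $C$. Second, note that $N_\sigma^+(v) = N(I) \cap \{x : x >_\sigma v\} = N(I)$, because $v$ is $\sigma$-first in $I$ but this does not immediately give $v$ is $\sigma$-first among $N(I)$ — here I need the counting hypothesis. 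This is where $|N(I)| \le |I| - 1$ enters: if some $w \in N(I)$ satisfied $w <_\sigma v$, then... actually the cleaner route is to observe that \emph{all} of $N(I) \subseteq N_\sigma^+(v)$ is not needed; instead, apply the perfect cancellation condition at the vertex $v$ and the component $C$: the definition says $N_\sigma^+(v) \cap C$ is $\sigma$-linked. So I must show $N(I) \subseteq N_\sigma^+(v)$, i.e., that $v$ precedes every vertex of $N(I)$ in $\sigma$.

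To get $v <_\sigma w$ for all $w \in N(I)$: suppose not, and let $w \in N(I)$ be $\sigma$-minimal in $N(I)$ with $w <_\sigma v$. Consider the vertex $w$ and a component $C'$ of $G - w$ containing $I$ (all of $I$ is nonadjacent, hence $I \setminus (\text{nothing})$; actually $I \subseteq N(w)$ so $I$ avoids $w$, and $|I| \ge |N(I)| + 1 \ge 2$, and the $u \in I$ are connected to each other through any second vertex of $N(I)$, or if $|N(I)| = 1$ then $N(w) = \{$that single vertex$\}$... this degenerate case needs separate handling but is easy since then $G$ near $I$ is a star). Apply the $\sigma$-linked condition at $w$: $N_\sigma^+(w) \cap C'$ must be $\sigma$-linked, and it contains at least $|I| - (\text{number of } u \in I \text{ with } u <_\sigma w)$ vertices of $I$, which are pairwise nonadjacent — forcing $\sigma$-linkedness to fail unless at most one such $u$ lies after $w$. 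Pushing this counting with $|N(I)| \le |I| - 1$ across all of $N(I)$ yields a contradiction, establishing $v <_\sigma w$ for all $w \in N(I)$, hence $N(I) = N_\sigma^+(v) \cap C$ (the last equality using step one), which is $\sigma$-linked by the perfect cancellation property.

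\textbf{Main obstacle.} The delicate point is the counting argument showing $v$ precedes all of $N(I)$: I must carefully track how many vertices of the false-twin set $I$ can appear before a given element of $N(I)$ and combine the constraints from applying the $\sigma$-linked condition at several vertices simultaneously, using $|N(I)| \le |I| - 1$ exactly once with no slack. Handling the degenerate case $|N(I)| \le 1$ (where $I$ is an independent set hanging off at most one vertex) separately, and making sure the ``single component $C$'' claim is airtight when $|N(I)|$ is small, are the fiddly bits; the core topological observation (that false twins plus their small neighborhood live in one component of $G - v$) is routine.
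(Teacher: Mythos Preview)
Your overall strategy is on the right track and close to the paper's, but the key counting step contains a genuine gap. You write that having several pairwise-nonadjacent vertices of $I$ inside $N_\sigma^+(w)\cap C'$ ``forc[es] $\sigma$-linkedness to fail unless at most one such $u$ lies after $w$.'' That is not what $\sigma$-linked means: the condition only requires \emph{consecutive} vertices in the sub-ordering to be adjacent, so arbitrarily many pairwise-nonadjacent $I$-vertices can sit inside a $\sigma$-linked set provided they are separated by other vertices. The missing idea is to count those separators. Between any two $I$-vertices that are consecutive in $\sigma|_{N_\sigma^+(w)\cap C'}$ there must be at least one non-$I$ vertex, and by $\sigma$-linkedness that vertex is adjacent to an $I$-vertex, hence lies in $N(I)$. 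With all $|I|$ false twins present (they all come after $w$, since $w<_\sigma v$ and $v$ is $\sigma$-first in $I$), you need $|I|-1$ distinct separators drawn from $N(I)\setminus\{w\}$, but $|N(I)\setminus\{w\}|\le|I|-2$. This single application at $w$ already yields the contradiction; there is no need to ``push the counting across all of $N(I)$'' or to combine constraints from several vertices simultaneously, as your Main Obstacle paragraph anticipates.

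The paper's proof short-circuits your two-step plan by choosing $v$ to be the $\sigma$-first vertex of $I\cup N(I)$ rather than of $I$ alone. If $v\in I$ one is immediately done, since then $N_\sigma^+(v)=N(v)=N(I)$ and $v$ is not a cut vertex; if $v\in N(I)$ the separator count above gives the contradiction directly. Your detour through ``show $v$ precedes every $w\in N(I)$'' is not wrong in spirit, but once you pick the $\sigma$-minimal such $w$ you have simply re-derived the paper's starting vertex anyway, so the cleaner formulation is to take the minimum over $I\cup N(I)$ from the outset.
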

\begin{proof}
  The statement holds vacuously if~$I$ comprises a single vertex.  Hence, we assume that~$|I| \ge 2$, and hence no vertex in~$I$ is a cut vertex.
  Let~$v$ be the first vertex in~$\sigma$ from~$I\cup N(I)$.
  It suffices to show that~$v\in I$: note that~$N^{+}_{\sigma}(v) = N(v) = N(I)$.
  Suppose for contradiction that~$v\not\in I$, i.e., $v\in N(I)$.  We may number the vertices in~$N^{+}_{\sigma}(v)$ as~$u_{1}, \ldots, u_{\ell}$, where~$\ell = |N^{+}_{\sigma}(v)|$, such that~$u_{i} <_\sigma u_{i+1}$ for all~$i = 1, \ldots, \ell - 1$.
  Since there is no edge among vertices in~$I$, between any two of them there is another vertex, which has to be from~$N(I)$.
  This is nevertheless impossible because there are at most~$|N(I)\setminus \{v\}| \le |I| - 2$ such vertices.
\end{proof}

In Figure~\ref{fig:gadget}, for example, in any perfect cancellation ordering~$\sigma$, either~$v_1 <_\sigma v_2 <_\sigma v_3$ or~$v_3 <_\sigma v_2 <_\sigma v_1$.

\begin{figure}[ht]
  \centering \small
  \begin{tikzpicture}
    \def\n{3}
    \def\m{4}

    \foreach \i in {1,...,\m}{
        \node[empty vertex, "$z_\i$" above] (t\i) at (\i - 1, 1) {};
      }
    \foreach \i in {1,...,\n}{
        \node[filled vertex, "$v_\i$" below] (a\i) at ({1.5*(\i - 1)}, 0) {};
        \foreach \j in {-1, 0, 1}
        \draw (a\i) -- ++(\j/9, -.15);
      }
    \foreach \i in {1,...,\n}{
        \foreach \j in {1,...,\m}{
            \draw (t\j) -- (a\i);
          }
      }
    \foreach \i in {1,...,\inteval{\n-1}}{
        \draw (a\i) -- (a\inteval{\i+1});
      }
  \end{tikzpicture}.11 w
  \caption{Illustration for \Cref{lem:force-edge-orientation}, where~$I = \{z_{1}, \ldots, z_{4}\}$ and~$v_{1} v_{2} v_{3}$ is an induced path.  Note that $v_{1}, v_{2}, v_{3}$ might have other neighbors.}
  \label{fig:gadget}
\end{figure}
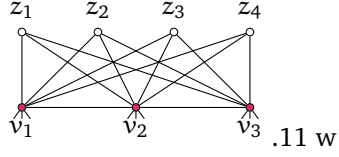

We are now ready to describe the reduction from an instance~$(S, T)$ of the betweenness problem to the recognition of perfect cancellation graphs.
Let~$p = |S|$ and~$q = |T|$.
We construct a graph~$G$ on~$2 p + 14 q$ vertices as follows.
First, for each element~$x$ in~$S$, introduce a vertex; abusing notation, we use~$x$ to denote both the element and the corresponding vertex, and use~$S$ to denote this set of vertices.
Second, we introduce a vertex set
\[
  C = \{v_{1}^{i}, v_{3}^{i} \mid 1\le i \le q\}.
\]
We add an edge between each pair of vertices in~$C$ unless they have the same superscript (note that the complement of the subgraph induced by~$C$ is an induced matching), and we add all the~$2 p q$ edges between~$S$ and~$C$.
Third, we add a set~$U$ of~$p$ vertices, and make them universal in the subgraph induced by~$S\cup C\cup U$.

Finally, we add the sets of false twins to enforce the desired order.
For convenience, we use~$({v^{i}_{0}}, {v^{i}_{2}}, {v^{i}_{4}})$ to denote the three vertices in~$S$ corresponding to the three elements in the~$i$th triple in~$T$ in order.
For each~$i = 1, \ldots, q$, we introduce 12 vertices~$z_{1}^{i}, \ldots, z_{12}^{i}$.  For each~$j = 0, 1, 2$, let
\[
  Z^{i}_{j} = \{z_{4 j + 1}^{i}, \ldots, z_{4 j + 4}^{i}\},
\]
and add all the 12 edges between~$Z^{i}_{j}$ and~$\{v_{j}^{i}, v_{j+1}^{i}, v_{j+2}^{i}\}$.
See Figure~\ref{fig:reduction} for an illustration.

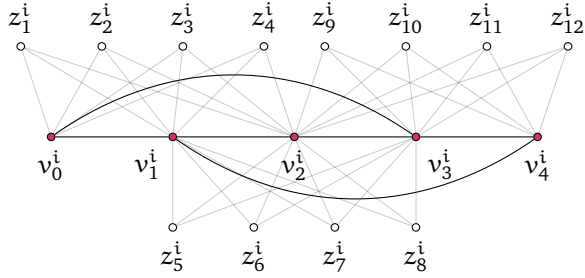
\begin{figure}[ht]
  \centering \small
  \begin{tikzpicture}[scale=.8]
    \def\n{5}
    \def\m{7}
    \def\k{4}
    \node[filled vertex, "${v^{i}_{0}}$" below] (a1) at (2,0) {};
    \node[filled vertex, "${v^{i}_{1}}$" below left] (a2) at (4,0) {};
    \node[filled vertex, "${v^{i}_{2}}$" below] (a3) at (6,0) {};
    \node[filled vertex, "${v^{i}_{3}}$" below right] (a4) at (8,0) {};
    \node[filled vertex, "${v^{i}_{4}}$" below] (a5) at (10,0) {};

    \path (a1) edge[bend left=35] (a4);
    \path (a2) edge[bend right=35] (a5);

    \def\percents{{0,0.0, 0.16666666666666666, 0.3333333333333333, 0.5, 0.6666666666666666, 0.8333333333333334, 1.0}};
    \foreach \i in {1,...,\k}{
        \coordinate(x\i) at ($(a1)!\percents[\i]!(a\n)$);
        \node[empty vertex, "$z^{i}_{\i}$" above] (t\i) at ($(x\i)+(-.5, 1.5)$) {};
      }

    \foreach \i in {\inteval{\k+1},...,8}{
        \coordinate(x\i) at ($(a1)!\percents[\inteval{\i-1}]!(a\n)$);
        \node[empty vertex, "$z^{i}_{\inteval{\i+4}}$" above] (t\i) at ($(x\i)+(.5, 1.5)$) {};
      }

    \def\percents{{0,0,0.3333333,0.66666666666666666,1}};
    \foreach \i in {1,...,\k}{
        \node[empty vertex, "$z^{i}_{\inteval{\i+4}}$" below] (p\i) at ($(a2)!\percents[\i]!(a4)+(0, -1.5)$) {};
      }

    \foreach \j in {1,...,\k}{
        \foreach \i in {1,2,3}{
            \draw[opacity=0.2] (t\j) -- (a\i);
          }
        \foreach \i in {2,3,4}{
            \draw[opacity=0.2] (p\j) -- (a\i);
          }
        \foreach \i in {3,4,5}{
            \draw[opacity=0.2] (t\inteval{\j+4}) -- (a\i);
          }
      }
    \foreach \i in {1,...,\inteval{\n-1}}{
        \draw (a\i) -- (a\inteval{\i+1});
      }
  \end{tikzpicture}
  \caption{The construction for the proof of \Cref{thm:hardness-perfect-cancellation}.}
  \label{fig:reduction}
\end{figure}

\begin{proof}[Proof of \Cref{thm:hardness-perfect-cancellation}]
  It is easy to check whether an ordering is a perfect cancellation ordering, the recognition of perfect cancellation graphs is in \NP.  For its \NP-hardness, we show that~$(S, T)$ is a yes-instance of the betweenness problem if and only if the graph~$G$ constructed above is a perfect cancellation graph.

  For sufficiency, suppose that~$G$ is a perfect cancellation graph, and let~$\sigma: V(G)\mapsto [2 p + 14 q]$ be a perfect cancellation ordering of~$G$.
  For~$i = 1, \ldots, q$, \Cref{lem:force-edge-orientation} applied to the set~$Z^{i}_{1}$ forces that
  either~$v^{i}_{1} <_\sigma v^{i}_{2} <_\sigma v^{i}_{3}$ or~$v^{i}_{3} <_\sigma v^{i}_{2} <_\sigma v^{i}_{1}$.
  We may assume without loss of generality that
  \[
    v^{i}_{1} <_\sigma v^{i}_{2} <_\sigma v^{i}_{3},
  \]
  and the other is symmetric.
  Then \Cref{lem:force-edge-orientation} applied to sets~$Z^{i}_{0}$ and~$Z^{i}_{2}$ forces that
  \[
    v^{i}_{0} <_\sigma v^{i}_{1} <_\sigma v^{i}_{2}
    \,\text{  and } v^{i}_{2} <_\sigma v^{i}_{3} <_\sigma v^{i}_{4},
  \]
  Thus, %
  \[
    v^{i}_{0} <_\sigma v^{i}_{2} <_\sigma v^{i}_{4},
  \]
  and~$\pi = \sigma|_{S}$ is a valid ordering of~$S$ that certificates that~$(S, T)$ is a yes-instance.

  For necessity, suppose that that~$(S, T)$ is an yes-instance of the betweenness problem, and let~$\pi$ be a valid ordering of~$S$.
  We may number the elements in~$S$ such that~$\pi = \langle x_{1}, \ldots, x_{p} \rangle$.
  Since reversing triples in~$T$ does not change the instance, we may assume without loss of generality that for all~$i = 1, \ldots, q$,
  \[
    v_{0}^{i} <_\pi v_{2}^{i}<_\pi v_{4}^{i}.
  \]
  For~$i = 1, \ldots, p$, let~$c(i)$ denote the number such that~$x_{c(i)} = v_{2}^{i}$; note that
  \[
    1 < c(i) < p.
  \]
  We number the vertices in~$U$ as~$\{u_{1}, \ldots, u_{p}\}$.  %

  We construct a perfect cancellation ordering~$\sigma$ of $V(G)$ as follows.
  It starts from
  \[
    z_{1}^{1}, \ldots, z_{12}^{1}, \ldots, z_{12}^{q}, u_{1}, x_{1}, \ldots, u_{p}, x_{p}.
  \]
  For all~$i = 1, \ldots, q$, we put~$v_{1}^{i}$ in between~$x_{c(i)-1}$ and~$u_{c(i)}$ and~$v_{3}^{i}$ in between~$x_{c(i)}$ and~$u_{c(i)+1}$; vertices assigned to the same range are in an arbitrary order.
  Note that
  \[
    \sigma(x_{i}) = 12 q + 2 i + 2 |\{j\mid c(j) < i\}| + |\{j\mid c(j) = i\}|.
  \]

  Now we verify that~$\sigma$ is a perfect cancellation ordering of~$G$.
  By construction,
  \[
    v^{i}_{0} \le_\sigma x_{c(i)-1}<_\sigma v^{i}_{1} <_\sigma x_{c(i)} = v^{i}_{2} <_\sigma v^{i}_{3} <_\sigma x_{c(i)+1} \le_\sigma v^{i}_{4}.
  \]
  Thus, $N^{+}_{\sigma}(z_j^i) = N(z_j^i)$ is $\sigma$-linked for all~$i = 1, \ldots, q$ and~$j = 1, \ldots, 12$.  %
  If two vertices after~$z_{12}^{q}$ in~$\sigma$ are not adjacent, they are either~$x_{j}$ and~$x_{i}$ for~$1\le j < i\le q$, or~$v_{1}^{i}$ and~$v_{3}^{i}$ for some~$i$.  Such a pair is always separated by the vertex~$u_{i}$.
  Since~$u_{i}$ is universal in the subgraph induced by~$S\cup C\cup U$, the ordering~$\sigma$ is a perfect cancellation ordering of~$G$.
\end{proof}

\subsection{Recognition is fixed-parameter tractable}

Without loss of generality, we assume that the input graph is biconnected.
Let~$k$ be the treewidth of the input graph, and let us assume we have a \textit{nice} tree decomposition $\mathcal{T}=(T,\{X_t\}_{t\in V(T)})$ of width $k$ \added{(definition in \Cref{app:fpt-recg})}. During the bottom-up dynamic programming process, we can partition the vertex into~$P$, $C$, and~$F$, which denotes the set of forgotten vertices (past), the set of vertices in the current bag (current), and the set of vertices that we have not met (future).
By definition, there is no edge between $P$ and $F$.
If~$\sigma$ is a perfect cancellation ordering, then for any vertex~$v$, we have a path on $\sigma|_{N_{\sigma}^+(v)}$.
It is broken into several sub-paths with vertices in~$P$ removed, whose ends (except the original ends of the path) are all from~$C$.

\begin{definition}[Valid order]
  \label{def:vao}
  Let~$t$ be a node of~$T$.
  A permutation~$\sigma$ of $V_t$ is valid for~$t$ if
  \begin{enumerate}[label={(V\arabic*)},left=10pt]
    \item for each $v\in V_t\setminus X_t$, the set~$N_{\sigma}^{+}(v)$ is~$\sigma$-linked; and
    \item for each $v\in X_t$, if two consecutive vertices in~$\sigma|_{N_{\sigma}^{+}(v)}$ are not adjacent, they must be in~$X_t$.
  \end{enumerate}
\end{definition}
\added{We maintain the set of all valid orders for each node~$t$ in~$T$, using the \emph{canonical representation} which allows us to keep the solution space small (namely, with size of a function of the treewidth). More details can be found in the \Cref{app:fpt-recg}.}

\section{Concluding remarks}

The main open problem is whether there are \ptt{} graphs that are \added{not} perfect cancellation graphs.

\begin{conjecture}\label{conj:main}
  A graph is \ptt{} if and only if it is a perfect cancellation graph.
\end{conjecture}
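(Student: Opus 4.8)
The converse direction, that every perfect cancellation graph is \ptt{}, is \Cref{thm:chordal-ideal}, so the whole content of the conjecture is the forward implication: from a perfect graphic parity network for $G$ one must extract a perfect cancellation ordering. The plan is to argue by induction on $n$. One may first reduce to biconnected $G$, as is done elsewhere in the paper for both synthesis and recognition; for biconnected $G$ the goal simplifies to producing an ordering $\sigma$ in which $N^{+}_{\sigma}(v)$ is $\sigma$-linked for every $v$. Given a size-$(m+n-1)$ network for $G$, I would peel off a single vertex $w$ to be the first vertex of $\sigma$, show that deleting wire $w$ from the network leaves a perfect graphic parity network for $G-w$, apply induction to get a perfect cancellation ordering of $G-w$, and prepend $w$.

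Setting up the peel relies on the rigidity forced by perfection. By \Cref{lem:stronger} a network of size $m+n-1$ has exactly $m$ operations with binary outcome and exactly $n-1$ with non-binary outcome, and since the $m$ binary outcomes must already account for the $m$ distinct edge terms, \emph{every} binary term that occurs anywhere in the circuit is an edge of $G$. Next I would locate a wire $w$ that is never used as a control. The excerpt only observes empirically that every \ptt{} graph it has encountered admits such a network, and promoting this to a theorem is, I expect, the heart of the difficulty. Granting such a $w$: the item $w$ never leaves wire $w$, so all $d(w)$ edges at $w$ are generated along it; listing the terms carried by wire $w$ in time order as $\{w\},\{w,a_1\},\{w,a_2\},\dots$, the step from $\{w,a_i\}$ to $\{w,a_{i+1}\}$ must add the term $\{a_i,a_{i+1}\}$, which is binary and hence an edge, so the $a_i$ trace a walk in $G[N(w)]$. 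A counting argument --- wire $w$ is the target of at least $d(w)+1$ operations, and any more would drop the size of the residual network below the bound of \Cref{thm:opt-pn-size-ctrb} for $G-w$ (which is connected, as $G$ is biconnected) --- forces wire $w$ to carry exactly $d(w)+1$ operations. Hence the $a_i$ are distinct and form a Hamilton path of $G[N(w)]$, and the residual network for $G-w$ is perfect, enabling the induction.

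It remains to verify that the assembled ordering $\sigma$ --- with $\sigma(w)=1$ and $\sigma$ on $V(G)\setminus\{w\}$ an inductively obtained perfect cancellation ordering of $G-w$ --- is a perfect cancellation ordering of $G$. The condition at $w$ is precisely that $\sigma|_{N(w)}$ is a Hamilton path of $G[N(w)]$, which is what we produced, provided $\sigma$ restricted to $N(w)$ follows the order $a_1,\dots,a_d$; the conditions at the other vertices are meant to descend from the induction. I expect two genuine obstacles, and these are presumably why the conjecture is still open. The first is the existence of a never-a-control wire in \emph{some} perfect graphic parity network for $G$: one would prove this directly, or exhibit size-preserving rewriting moves on an arbitrary perfect network that relocate all operations off one wire. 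The second is a compatibility issue: the induction only supplies \emph{some} perfect cancellation ordering of $G-w$, whereas we need one that threads $N(w)$ along the prescribed Hamilton path --- and, because $G-w$ need no longer be biconnected, one under which components of $G-v$ that split off when $w$ is removed stay $\sigma$-linked. Handling this cleanly seems to require strengthening the induction hypothesis to record how vertices outside a biconnected block constrain the ordering within it, which is exactly the non-hereditary behaviour responsible for the \NP-completeness of \Cref{thm:hardness-perfect-cancellation}.
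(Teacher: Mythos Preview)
This statement is explicitly a \emph{conjecture} in the paper, not a theorem; the paper offers no proof and in fact opens the concluding section by calling it ``the main open problem.'' So there is no paper proof to compare your attempt against. The only established direction is the one you cite: \Cref{thm:chordal-ideal} gives perfect cancellation $\Rightarrow$ \ptt{}. The forward implication is left open.

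Your write-up is not a proof but a strategy sketch with honestly flagged gaps, and those gaps are real. The two obstacles you isolate are precisely the ones the paper highlights as unresolved. The existence, for every \ptt{} graph, of a perfect network with a wire that is never a control is exactly condition~\ref{condc2} in the concluding remarks, and the paper states only that this holds in all examples found so far, while exhibiting (Figure~\ref{fig:bad-examples}) perfect networks that violate it. Your peel-and-induct plan also tacitly uses that the item $w$ never leaves wire $w$ and that no term of size $\ge 3$ appears; these are conditions~\ref{condc3} and~\ref{condc1}, likewise open. Finally, the compatibility step --- forcing the inductively obtained ordering of $G-w$ to thread $N(w)$ along a prescribed Hamilton path --- is not a technicality one can patch later: it amounts to deciding a constrained version of perfect-cancellation recognition, and the unconstrained version is already \NP-complete by \Cref{thm:hardness-perfect-cancellation}. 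In short, your outline is a reasonable roadmap, aligned with the paper's own discussion, but it does not close the conjecture and the paper does not either.
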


We know that there are perfect graphic parity networks that do  not correspond to any ordering in an apparent way.
Thus, to answer the conjecture, we need to better understand perfect graphic parity networks, for which there are several open questions.  In particular, whether a \ptt{} graph always admits
\begin{enumerate}[label={(C\arabic*)},left=10pt]
  \item \label{condc1} a perfect graphic parity network in which all terms are singleton and binary;
  \item \label{condc2} a perfect graphic parity network in which some wire is not a \replaced{control}{source} of any operation; and
  \item \label{condc3} a perfect graphic parity network in which no qubit leaves its original wire.
\end{enumerate}
For all the properties, we have examples of perfect graphic parity networks violating them.  But we can find alternative perfect graphic parity networks with the desired properties.
Note that \Cref{lem:stronger} does not rule out the existence of terms of cardinality three or more, though it does imply that if there is such a term, there are more wires that remain singleton throughout.\clu{Added parity networks violating the conditions.}

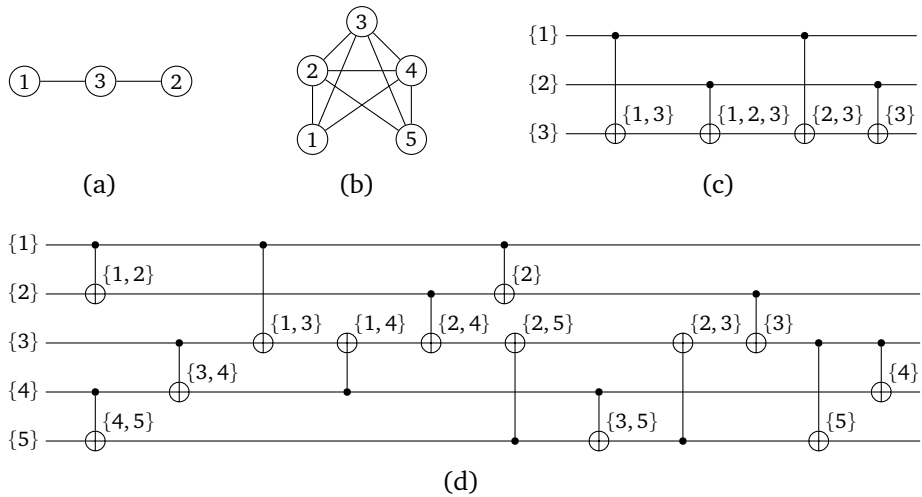
\begin{figure}[ht!]
  \centering
  \subcaptionbox{}{
    \begin{tikzpicture}[every node/.style={draw,inner sep=0pt,minimum size=4mm,circle,font=\scriptsize},scale=1]
      \node (1) at (0, 0) {$1$};
      \node (3) at (1, 0) {$3$};
      \node (2) at (2, 0) {$2$};
      \draw (1) -- (3) (3)--(2);
    \end{tikzpicture}

    \vspace{2em}
  }\hspace{1cm}
  \subcaptionbox{}{
    \begin{tikzpicture}[every node/.style={draw,inner sep=0pt,minimum size=4mm,circle,font=\scriptsize},scale=1.3]
      \node (1) at (0, 0.3) {$1$};
      \node (2) at (0, 1) {$2$};
      \node (3) at (0.5, 1.5) {$3$};
      \node (4) at (1, 1) {$4$};
      \node (5) at (1,  0.3) {$5$};
      \foreach \x/\y in {4/5, 4/3, 4/1, 4/2, 5/2, 5/3, 3/1, 3/2, 1/2}{
          \draw (\x) -- (\y);
        }
    \end{tikzpicture}
  }\hspace{1cm}
  \subcaptionbox{\label{qb:bad-examples-1}}{
    \scriptsize\begin{tikzpicture}
      \begin{yquant}[operator/separation=2mm, register/minimum height=4mm]
        qubit {$\{1\}$} x[1];
        qubit {$\{2\}$} x[+1];
        qubit {$\{3\}$} x[+1];

        \mycnot{\protect{1, 3}}{cnot x[2] | x[0]};

        \mycnot{\protect{1, 2, 3}}{cnot x[2] | x[1]};

        \mycnot{\protect{2, 3}}{cnot x[2] | x[0]};

        \mycnot{\protect{3}}{cnot x[2] | x[1]};
      \end{yquant},
    \end{tikzpicture}
    \vspace{0.5em}
  }

  \vspace{1em}

  \subcaptionbox{}{\scriptsize\begin{tikzpicture}
      \begin{yquant}[operator/separation=2mm, register/minimum height=4mm]
        qubit {$\{1\}$} x[1];
        qubit {$\{2\}$} x[+1];
        qubit {$\{3\}$} x[+1];
        qubit {$\{4\}$} x[+1];
        qubit {$\{5\}$} x[+1];
        \mycnot{\protect{4,5}}{cnot x[4] | x[3]};\mycnot{\protect{3,4}}{cnot x[3] | x[2]};\mycnot{\protect{1,2}}{cnot x[1] | x[0]};\mycnot{\protect{1,3}}{cnot x[2] | x[0]};\mycnot{\protect{1,4}}{cnot x[2] | x[3]};\mycnot{\protect{2,4}}{cnot x[2] | x[1]};\mycnot{\protect{2,5}}{cnot x[2] | x[4]};\mycnot{\protect{3,5}}{cnot x[4] | x[3]};\mycnot{\protect{2,3}}{cnot x[2] | x[4]};\mycnot{\protect{2}}{cnot x[1] | x[0]};\mycnot{\protect{3}}{cnot x[2] | x[1]};\mycnot{\protect{5}}{cnot x[4] | x[2]};\mycnot{\protect{4}}{cnot x[3] | x[2]};
      \end{yquant}
    \end{tikzpicture}
  }

  \caption{Perfect graphic parity networks that do not satisfy the conditions.\added{ {(a)} A line of length $3$. {(b)} A chordal graph. {(c)} A perfect graphic parity network for {(a)} violating \ref{condc1}. {(d)} A perfect graphic parity network for {(b)} violating \ref{condc2}, \ref{condc3}.}}
  \label{fig:bad-examples}
\end{figure}

If a graph contains a set~$U$ of~$\lfloor \frac{n}{2}\rfloor$ universal vertices, then we can make a perfect cancellation ordering by arranging the vertices in~$V(G)\setminus U$ and in~$U$ alternatively.

\begin{proposition}\label{lem:universal}
  A graph containing~$\lfloor \frac{n}{2}\rfloor$ universal vertices is a perfect cancellation graph.
\end{proposition}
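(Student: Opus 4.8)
The plan is to construct one specific ordering of $V(G)$ and check directly that it is a perfect cancellation ordering. Let $U$ be the set of \emph{all} universal vertices of $G$ (so $|U|\ge\lfloor n/2\rfloor$ by hypothesis) and let $W=V(G)\setminus U$. I would first clear away the small cases: if $|U|\le 1$ then $\lfloor n/2\rfloor\le 1$, hence $n\le 3$, and every graph on at most three vertices is chordal and therefore a perfect cancellation graph. So from now on I assume $|U|\ge 2$.

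The ordering $\sigma$ is obtained by interleaving: the arithmetic $|W|=n-|U|\le n-\lfloor n/2\rfloor=\lceil n/2\rceil\le|U|+1$ shows that $U$ has just enough vertices to separate the vertices of $W$, so I can take $\sigma=\langle w_1,u_1,w_2,u_2,\dots,w_s,\dots\rangle$, where $w_1,\dots,w_s$ is any enumeration of $W$, the $u_i$ are drawn from $U$, and the remaining vertices of $U$ (if any) are appended at the end. The only property of $\sigma$ I will use is that \emph{between any two vertices of $W$ there is a vertex of $U$}. A second observation I record, using $|U|\ge 2$: for every $v\in V(G)$ the graph $G-v$ is connected, since at least one universal vertex other than $v$ remains and dominates $G-v$; consequently the only component $C$ of $G-v$ is $V(G)\setminus\{v\}$, and the definition of a perfect cancellation ordering reduces to the single requirement that $N^+_\sigma(v)$ be $\sigma$-linked for every $v$.

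The verification is then immediate. Fix $v$ and let $a,b$ with $a<_\sigma b$ be two consecutive vertices of $\sigma|_{N^+_\sigma(v)}$; I must show $ab\in E(G)$. If $a$ or $b$ lies in $U$ this is automatic, since a universal vertex is adjacent to everything. Otherwise $a,b\in W$, so by the interleaving property there is $u\in U$ with $a<_\sigma u<_\sigma b$; since $u$ is universal we have $uv\in E(G)$, and $v<_\sigma a<_\sigma u$ shows that $u$ is a neighbour of $v$ occurring after $v$, i.e.\ $u\in N^+_\sigma(v)$, contradicting that $a$ and $b$ are consecutive in $\sigma|_{N^+_\sigma(v)}$. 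Hence the second case never occurs, $ab\in E(G)$, and $\sigma$ is a perfect cancellation ordering, so $G$ is a perfect cancellation graph.

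I do not anticipate a genuine obstacle; the only points worth watching are that the hypothesis $|U|\ge\lfloor n/2\rfloor$ is \emph{exactly} what makes the interleaving feasible, so there is no slack, and that $G-v$ may be disconnected in general — which is precisely why I peel off the case $|U|\le 1$ (equivalently $n\le 3$) at the start and work under $|U|\ge 2$ afterwards, guaranteeing $G-v$ is connected and there is only one component to inspect.
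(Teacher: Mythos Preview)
Your proof is correct and follows the same approach the paper sketches: interleave the non-universal vertices with the universal ones so that any two non-universal vertices are separated by a universal one, then observe that two consecutive elements of $N^+_\sigma(v)$ cannot both be non-universal. The paper only gives the one-line hint ``arrange the vertices in $V(G)\setminus U$ and in $U$ alternatively''; you have supplied the details (the arithmetic $|W|\le |U|+1$, the reduction to a single component via $|U|\ge 2$, and the disposal of $n\le 3$) carefully and correctly.
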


An interesting question arising from \Cref{lem:universal} is: given a graph, what is the minimum number of universal vertices we need to add to make it a perfect cancellation graph?
Even more interesting is adding edges.  Since all chordal graphs are perfect cancellation graphs, this cannot be larger than the well-studied chordal completion number \cite{cai-96-hereditary-graph-modification, kaplan-99-chordal-completion}.  Can we always turn a graph into a perfect cancellation graph by adding at most~$m$ edges?  We would not bother if it needs more than~$m$: any graph has a trivial graphic parity network of size~$2 m$.

\bibliography{bib}

\begin{thebibliography}{10}

\bibitem{amyControlledNOTComplexityControlledNOT2018}
Matthew Amy, Parsiad Azimzadeh, and Michele Mosca.
\newblock On the controlled-{{NOT}} complexity of controlled-{{NOT}}--phase circuits.
\newblock {\em Quantum Science and Technology}, 4(1):015002, September 2018.
\newblock \href {https://doi.org/10.1088/2058-9565/aad8ca} {\path{doi:10.1088/2058-9565/aad8ca}}.

\bibitem{anshuConcentrationBoundsQuantum2023c}
Anurag Anshu and Tony Metger.
\newblock Concentration {{Bounds}} for {{Quantum States}} and {{Limitations}} on the {{QAOA}} from {{Polynomial Approximations}}.
\newblock In {\em {{DROPS-IDN}}/v2/Document/10.4230/{{LIPIcs}}.{{ITCS}}.2023.5}. Schloss Dagstuhl -- Leibniz-Zentrum f{\"u}r Informatik, 2023.
\newblock \href {https://doi.org/10.4230/LIPIcs.ITCS.2023.5} {\path{doi:10.4230/LIPIcs.ITCS.2023.5}}.

\bibitem{bassoQuantumApproximateOptimization2022a}
Joao Basso, Edward Farhi, Kunal Marwaha, Benjamin Villalonga, and Leo Zhou.
\newblock The {{Quantum Approximate Optimization Algorithm}} at {{High Depth}} for {{MaxCut}} on {{Large-Girth Regular Graphs}} and the {{Sherrington-Kirkpatrick Model}}.
\newblock In {\em {{DROPS-IDN}}/v2/Document/10.4230/{{LIPIcs}}.{{TQC}}.2022.7}. Schloss Dagstuhl -- Leibniz-Zentrum f{\"u}r Informatik, 2022.
\newblock \href {https://doi.org/10.4230/LIPIcs.TQC.2022.7} {\path{doi:10.4230/LIPIcs.TQC.2022.7}}.

\bibitem{bassoPerformanceLimitationsQAOA2022a}
Joao Basso, David Gamarnik, Song Mei, and Leo Zhou.
\newblock Performance and limitations of the {{QAOA}} at constant levels on large sparse hypergraphs and spin glass models.
\newblock In {\em 2022 {{IEEE}} 63rd {{Annual Symposium}} on {{Foundations}} of {{Computer Science}} ({{FOCS}})}, pages 335--343, October 2022.
\newblock \href {https://doi.org/10.1109/FOCS54457.2022.00039} {\path{doi:10.1109/FOCS54457.2022.00039}}.

\bibitem{bravyiHighthresholdLowoverheadFaulttolerant2024}
Sergey Bravyi, Andrew~W. Cross, Jay~M. Gambetta, Dmitri Maslov, Patrick Rall, and Theodore~J. Yoder.
\newblock High-threshold and low-overhead fault-tolerant quantum memory.
\newblock {\em Nature}, 627(8005):778--782, March 2024.
\newblock \href {https://doi.org/10.1038/s41586-024-07107-7} {\path{doi:10.1038/s41586-024-07107-7}}.

\bibitem{cai-96-hereditary-graph-modification}
Leizhen Cai.
\newblock Fixed-parameter tractability of graph modification problems for hereditary properties.
\newblock {\em Information Processing Letters}, 58(4):171--176, 1996.
\newblock \href {https://doi.org/10.1016/0020-0190(96)00050-6} {\path{doi:10.1016/0020-0190(96)00050-6}}.

\bibitem{caoGenerationGenuineEntanglement2023}
Sirui Cao, Bujiao Wu, Fusheng Chen, Ming Gong, Yulin Wu, Yangsen Ye, Chen Zha, Haoran Qian, Chong Ying, Shaojun Guo, Qingling Zhu, He-Liang Huang, Youwei Zhao, Shaowei Li, Shiyu Wang, Jiale Yu, Daojin Fan, Dachao Wu, Hong Su, Hui Deng, Hao Rong, Yuan Li, Kaili Zhang, Tung-Hsun Chung, Futian Liang, Jin Lin, Yu~Xu, Lihua Sun, Cheng Guo, Na~Li, Yong-Heng Huo, Cheng-Zhi Peng, Chao-Yang Lu, Xiao Yuan, Xiaobo Zhu, and Jian-Wei Pan.
\newblock Generation of genuine entanglement up to 51 superconducting qubits.
\newblock {\em Nature}, 619(7971):738--742, July 2023.
\newblock \href {https://doi.org/10.1038/s41586-023-06195-1} {\path{doi:10.1038/s41586-023-06195-1}}.

\bibitem{chebyshev1852}
Pafnuty~Lvovich Chebyshev.
\newblock Mémoire sur les nombres premiers.
\newblock {\em Journal de mathématiques pures et appliquées}, Série 1:366--390, 1852.
\newblock Proof of the postulate: 371--382.

\bibitem{debrugiereFasterShorterSynthesis2024}
Timoth{\'e}e~Goubault {de Brugi{\`e}re} and Simon Martiel.
\newblock Faster and shorter synthesis of {{Hamiltonian}} simulation circuits, April 2024.
\newblock \href {https://arxiv.org/abs/2404.03280} {\path{arXiv:2404.03280}}.

\bibitem{farhiQuantumApproximateOptimization2014a}
Edward Farhi, Jeffrey Goldstone, and Sam Gutmann.
\newblock A {{Quantum Approximate Optimization Algorithm}}.
\newblock {\em arXiv:1411.4028 [quant-ph]}, November 2014.
\newblock \href {https://arxiv.org/abs/1411.4028} {\path{arXiv:1411.4028}}.

\bibitem{farhiQuantumComputationAdiabatic2000a}
Edward Farhi, Jeffrey Goldstone, Sam Gutmann, and Michael Sipser.
\newblock Quantum {{Computation}} by {{Adiabatic Evolution}}, January 2000.
\newblock \href {https://arxiv.org/abs/quant-ph/0001106} {\path{arXiv:quant-ph/0001106}}.

\bibitem{gheorghiuReducingCNOTCount2023}
Vlad Gheorghiu, Jiaxin Huang, Sarah~Meng Li, Michele Mosca, and Priyanka Mukhopadhyay.
\newblock Reducing the {{CNOT Count}} for {{Clifford}}+{{T Circuits}} on {{NISQ Architectures}}.
\newblock {\em IEEE Transactions on Computer-Aided Design of Integrated Circuits and Systems}, 42(6):1873--1884, June 2023.
\newblock \href {https://doi.org/10.1109/TCAD.2022.3213210} {\path{doi:10.1109/TCAD.2022.3213210}}.

\bibitem{Goubault2024}
Timoth\'{e}e Goubault~de Brugi\`{e}re and Simon Martiel.
\newblock Shallower cnot circuits on realistic quantum hardware.
\newblock {\em ACM Transactions on Quantum Computing}, November 2024.
\newblock Just Accepted.
\newblock \href {https://doi.org/10.1145/3700884} {\path{doi:10.1145/3700884}}.

\bibitem{harriganQuantumApproximateOptimization2021a}
Matthew~P. Harrigan, Kevin~J. Sung, Matthew Neeley, Kevin~J. Satzinger, Frank Arute, Kunal Arya, Juan Atalaya, Joseph~C. Bardin, Rami Barends, Sergio Boixo, Michael Broughton, Bob~B. Buckley, David~A. Buell, Brian Burkett, Nicholas Bushnell, Yu~Chen, Zijun Chen, {Ben Chiaro}, Roberto Collins, William Courtney, Sean Demura, Andrew Dunsworth, Daniel Eppens, Austin Fowler, Brooks Foxen, Craig Gidney, Marissa Giustina, Rob Graff, Steve Habegger, Alan Ho, Sabrina Hong, Trent Huang, L.~B. Ioffe, Sergei~V. Isakov, Evan Jeffrey, Zhang Jiang, Cody Jones, Dvir Kafri, Kostyantyn Kechedzhi, Julian Kelly, Seon Kim, Paul~V. Klimov, Alexander~N. Korotkov, Fedor Kostritsa, David Landhuis, Pavel Laptev, Mike Lindmark, Martin Leib, Orion Martin, John~M. Martinis, Jarrod~R. McClean, Matt McEwen, Anthony Megrant, Xiao Mi, Masoud Mohseni, Wojciech Mruczkiewicz, Josh Mutus, Ofer Naaman, Charles Neill, Florian Neukart, Murphy~Yuezhen Niu, Thomas~E. O'Brien, Bryan O'Gorman, Eric Ostby, Andre Petukhov, Harald Putterman, Chris Quintana, Pedram Roushan, Nicholas~C. Rubin, Daniel Sank, Andrea Skolik, Vadim Smelyanskiy, Doug Strain, Michael Streif, Marco Szalay, Amit Vainsencher, Theodore White, Z.~Jamie Yao, Ping Yeh, Adam Zalcman, Leo Zhou, Hartmut Neven, Dave Bacon, Erik Lucero, Edward Farhi, and Ryan Babbush.
\newblock Quantum approximate optimization of non-planar graph problems on a planar superconducting processor.
\newblock {\em Nature Physics}, 17(3):332--336, March 2021.
\newblock \href {https://doi.org/10.1038/s41567-020-01105-y} {\path{doi:10.1038/s41567-020-01105-y}}.

\bibitem{10.5555/3381089.3381102}
Jiaqing Jiang, Xiaoming Sun, Shang-Hua Teng, Bujiao Wu, Kewen Wu, and Jialin Zhang.
\newblock Optimal space-depth trade-off of cnot circuits in quantum logic synthesis.
\newblock In {\em Proceedings of the Thirty-First Annual ACM-SIAM Symposium on Discrete Algorithms}, SODA '20, pages 213--229, USA, 2020. Society for Industrial and Applied Mathematics.

\bibitem{hopcroftjohnAlgorithm447Efficient1973}
Hopcroft John and Tarjan Robert.
\newblock Algorithm 447: Efficient algorithms for graph manipulation.
\newblock {\em Communications of the ACM}, June 1973.
\newblock \href {https://doi.org/10.1145/362248.362272} {\path{doi:10.1145/362248.362272}}.

\bibitem{kallaugherExponentialQuantumSpace2023a}
John Kallaugher, Ojas Parekh, and Nadezhda Voronova.
\newblock Exponential {{Quantum Space Advantage}} for {{Approximating Maximum Directed Cut}} in the {{Streaming Model}}.
\newblock In {\em Proceedings of the 56th {{Annual ACM Symposium}} on {{Theory}} of {{Computing}}}, pages 1805--1815, Vancouver BC Canada, June 2024. ACM.
\newblock \href {https://doi.org/10.1145/3618260.3649709} {\path{doi:10.1145/3618260.3649709}}.

\bibitem{kaplan-99-chordal-completion}
Haim Kaplan, Ron Shamir, and Robert~Endre Tarjan.
\newblock Tractability of parameterized completion problems on chordal, strongly chordal, and proper interval graphs.
\newblock {\em SIAM Journal on Computing}, 28(5):1906--1922, 1999.
\newblock A preliminary version appeared in FOCS 1994.
\newblock \href {https://doi.org/10.1137/S0097539796303044} {\path{doi:10.1137/S0097539796303044}}.

\bibitem{kartesziIntroductionFiniteGeometries1976}
Ferenc K{\'a}rteszi.
\newblock {\em Introduction to {{Finite Geometries}}}.
\newblock North-Holland Publishing Company, 1976.

\bibitem{Li2023}
Zhiyuan Li, Pei Liu, Peng Zhao, Zhenyu Mi, Huikai Xu, Xuehui Liang, Tang Su, Weijie Sun, Guangming Xue, Jing-Ning Zhang, Weiyang Liu, Yirong Jin, and Haifeng Yu.
\newblock {Error per single-qubit gate below $10^{-4}$ in a superconducting qubit}.
\newblock {\em npj Quantum Information}, 9(1):111, 2023.
\newblock \href {https://doi.org/10.1038/s41534-023-00781-x} {\path{doi:10.1038/s41534-023-00781-x}}.

\bibitem{lucasIsingFormulationsMany2014}
Andrew Lucas.
\newblock Ising formulations of many {NP} problems.
\newblock {\em Frontiers in Physics}, 2, 2014.
\newblock URL: \url{http://journal.frontiersin.org/article/10.3389/fphy.2014.00005/abstract}, \href {https://doi.org/10.3389/fphy.2014.00005} {\path{doi:10.3389/fphy.2014.00005}}.

\bibitem{Maslov2022}
Dmitri Maslov and Ben Zindorf.
\newblock Depth optimization of cz, cnot, and clifford circuits.
\newblock {\em IEEE Transactions on Quantum Engineering}, 3:1--8, 2022.
\newblock \href {https://doi.org/10.1109/TQE.2022.3180900} {\path{doi:10.1109/TQE.2022.3180900}}.

\bibitem{Mottonen2004a}
Mikko M{\"{o}}tt{\"{o}}nen, Juha~J Vartiainen, Ville Bergholm, and Martti~M Salomaa.
\newblock {Quantum Circuits for General Multiqubit Gates}.
\newblock {\em Physical Review Letters}, 93(13):130502, sep 2004.
\newblock URL: \url{https://link.aps.org/doi/10.1103/PhysRevLett.93.130502}, \href {https://doi.org/10.1103/PhysRevLett.93.130502} {\path{doi:10.1103/PhysRevLett.93.130502}}.

\bibitem{namAutomatedOptimizationLarge2018}
Yunseong Nam, Neil~J. Ross, Yuan Su, Andrew~M. Childs, and Dmitri Maslov.
\newblock Automated optimization of large quantum circuits with continuous parameters.
\newblock {\em npj Quantum Information}, 4(1):1--12, May 2018.
\newblock \href {https://doi.org/10.1038/s41534-018-0072-4} {\path{doi:10.1038/s41534-018-0072-4}}.

\bibitem{niBeatingBreakevenPoint2023}
Zhongchu Ni, Sai Li, Xiaowei Deng, Yanyan Cai, Libo Zhang, Weiting Wang, Zhen-Biao Yang, Haifeng Yu, Fei Yan, Song Liu, Chang-Ling Zou, Luyan Sun, Shi-Biao Zheng, Yuan Xu, and Dapeng Yu.
\newblock Beating the break-even point with a discrete-variable-encoded logical qubit.
\newblock {\em Nature}, 616(7955):56--60, April 2023.
\newblock \href {https://doi.org/10.1038/s41586-023-05784-4} {\path{doi:10.1038/s41586-023-05784-4}}.

\bibitem{opatrnyTotalOrderingProblem1979}
Jaroslav Opatrny.
\newblock Total {{Ordering Problem}}.
\newblock {\em SIAM Journal on Computing}, 8(1):111--114, 1979.
\newblock \href {https://doi.org/10.1137/0208008} {\path{doi:10.1137/0208008}}.

\bibitem{paganoQuantumApproximateOptimization2020}
Guido Pagano, Aniruddha Bapat, Patrick Becker, Katherine~S. Collins, Arinjoy De, Paul~W. Hess, Harvey~B. Kaplan, Antonis Kyprianidis, Wen~Lin Tan, Christopher Baldwin, Lucas~T. Brady, Abhinav Deshpande, Fangli Liu, Stephen Jordan, Alexey~V. Gorshkov, and Christopher Monroe.
\newblock Quantum approximate optimization of the long-range {{Ising}} model with a trapped-ion quantum simulator.
\newblock {\em Proceedings of the National Academy of Sciences}, 117(41):25396--25401, October 2020.
\newblock \href {https://doi.org/10.1073/pnas.2006373117} {\path{doi:10.1073/pnas.2006373117}}.

\bibitem{peruzzo2014VQE}
Alberto Peruzzo, Jarrod McClean, Peter Shadbolt, Man-Hong Yung, Xiao-Qi Zhou, Peter~J Love, Al{\'a}n {Aspuru-Guzik}, and Jeremy~L. O'Brien.
\newblock A variational eigenvalue solver on a photonic quantum processor.
\newblock {\em Nature Communications}, 5, 2014.
\newblock \href {https://doi.org/10.1038/ncomms5213} {\path{doi:10.1038/ncomms5213}}.

\bibitem{roseTriangulatedGraphsElimination1970}
Donald~J Rose.
\newblock Triangulated graphs and the elimination process.
\newblock {\em Journal of Mathematical Analysis and Applications}, 32(3):597--609, December 1970.
\newblock \href {https://doi.org/10.1016/0022-247X(70)90282-9} {\path{doi:10.1016/0022-247X(70)90282-9}}.

\bibitem{shaydulinEvidenceScalingAdvantage2024}
Ruslan Shaydulin, Changhao Li, Shouvanik Chakrabarti, Matthew DeCross, Dylan Herman, Niraj Kumar, Jeffrey Larson, Danylo Lykov, Pierre Minssen, Yue Sun, Yuri Alexeev, Joan~M. Dreiling, John~P. Gaebler, Thomas~M. Gatterman, Justin~A. Gerber, Kevin Gilmore, Dan Gresh, Nathan Hewitt, Chandler~V. Horst, Shaohan Hu, Jacob Johansen, Mitchell Matheny, Tanner Mengle, Michael Mills, Steven~A. Moses, Brian Neyenhuis, Peter Siegfried, Romina Yalovetzky, and Marco Pistoia.
\newblock Evidence of scaling advantage for the quantum approximate optimization algorithm on a classically intractable problem.
\newblock {\em Science Advances}, May 2024.
\newblock \href {https://doi.org/10.1126/sciadv.adm6761} {\path{doi:10.1126/sciadv.adm6761}}.

\bibitem{suzukiGeneralTheoryFractal1991}
Masuo Suzuki.
\newblock General theory of fractal path integrals with applications to many-body theories and statistical physics.
\newblock {\em Journal of Mathematical Physics}, 32(2):400--407, February 1991.
\newblock \href {https://doi.org/10.1063/1.529425} {\path{doi:10.1063/1.529425}}.

\bibitem{Vartiainen2004}
Juha~J Vartiainen, Mikko M{\"{o}}tt{\"{o}}nen, and Martti~M Salomaa.
\newblock {Efficient Decomposition of Quantum Gates}.
\newblock {\em Physical Review Letters}, 92(17):177902, 2004.
\newblock URL: \url{https://link.aps.org/doi/10.1103/PhysRevLett.92.177902}, \href {https://doi.org/10.1103/PhysRevLett.92.177902} {\path{doi:10.1103/PhysRevLett.92.177902}}.

\bibitem{wangQuantumApproximateOptimization2018}
Zhihui Wang, Stuart Hadfield, Zhang Jiang, and Eleanor~G. Rieffel.
\newblock Quantum approximate optimization algorithm for {{MaxCut}}: {{A}} fermionic view.
\newblock {\em Physical Review A}, 97(2):022304, February 2018.
\newblock \href {https://doi.org/10.1103/PhysRevA.97.022304} {\path{doi:10.1103/PhysRevA.97.022304}}.

\bibitem{Wu2023}
Bujiao Wu, Xiaoyu He, Shuai Yang, Lifu Shou, Guojing Tian, Jialin Zhang, and Xiaoming Sun.
\newblock {Optimization of CNOT circuits on limited-connectivity architecture}.
\newblock {\em Physical Review Research}, 5(1):13065, 2023.
\newblock URL: \url{https://link.aps.org/doi/10.1103/PhysRevResearch.5.013065}, \href {https://doi.org/10.1103/PhysRevResearch.5.013065} {\path{doi:10.1103/PhysRevResearch.5.013065}}.

\bibitem{wurtzMaxCutQuantumApproximate2021}
Jonathan Wurtz and Peter Love.
\newblock {{MaxCut}} quantum approximate optimization algorithm performance guarantees for p {$>$} 1.
\newblock {\em Physical Review A}, 103(4):042612, April 2021.
\newblock \href {https://doi.org/10.1103/PhysRevA.103.042612} {\path{doi:10.1103/PhysRevA.103.042612}}.

\bibitem{wurtzFixedangleConjecturesQuantum2021}
Jonathan Wurtz and Danylo Lykov.
\newblock Fixed-angle conjectures for the quantum approximate optimization algorithm on regular {{MaxCut}} graphs.
\newblock {\em Physical Review A}, 104(5):052419, November 2021.
\newblock \href {https://doi.org/10.1103/PhysRevA.104.052419} {\path{doi:10.1103/PhysRevA.104.052419}}.

\bibitem{zhouQuantumApproximateOptimization2020}
Leo Zhou, Sheng-Tao Wang, Soonwon Choi, Hannes Pichler, and Mikhail~D. Lukin.
\newblock Quantum {{Approximate Optimization Algorithm}}: {{Performance}}, {{Mechanism}}, and {{Implementation}} on {{Near-Term Devices}}.
\newblock {\em Physical Review X}, 10(2):021067, June 2020.
\newblock \href {https://doi.org/10.1103/PhysRevX.10.021067} {\path{doi:10.1103/PhysRevX.10.021067}}.

\end{thebibliography}

\newpage

\appendix

\crefalias{section}{appendix}

\section{More quantum background}\label{app:quantum-intro}

\renewcommand{\ket}[1]{|#1\rangle}

Let~$x, b\in \mathbb{F}_2^n$, then $x\cdot b = x_1b_1\oplus x_2b_2\oplus \dots\oplus x_nb_n$. %
\begin{definition}[\cite{amyControlledNOTComplexityControlledNOT2018}]
  Given $S\subseteq \mathbb{F}_2^n$, a \emph{parity network} for $S$ is an $n$-qubit \textsc{cnot} circuit that, with initial state~$\ket{b}$,
  \begin{itemize}
    \item
          for all $x\in S$ the state $\oldket{x\cdot b}$ appears on a certain qubit in the circuit; and
    \item \replaced{the final state is ~$\oldket{b_{1},b_{2},\dots,b_{n}}$.}{the final state is ~$\oldket{b_{\pi(1)},b_{\pi(2)},\dots,b_{\pi(n)}}$ for some permutation $\pi\in \mathfrak{S}_n$.}
  \end{itemize}
\end{definition}
The second requirement is from quantum algorithms like \textsc{qaa} and \textsc{qaoa}, where the original state must be restored at the end of the circuit. \added{In applications like QAOA, the output is allowed to be a permutation of the input, i.e., the final state can be ~$\oldket{b_{\pi(1)},b_{\pi(2)},\dots,b_{\pi(n)}}$ for some permutation $\pi\in \mathfrak{S}_n$. But allowing this seems not only to fail to aid in resolving the problem but also to further complicate it, so our definition restricts the output to be exactly the same as the input.}

We now briefly explain the use of graphic parity networks in \added{optimizing the quantum circuits for }\textsc{qaa} and \textsc{qaoa}.  Recall the following formulation of the maximum cut problem:
\[
  \max_{x\in \{-1,1\}^{|V(G)|}}\sum_{(u,v)\in E(G)} \frac{1-x_ux_v}{2}.
\]
In the Ising formulation~\cite{lucasIsingFormulationsMany2014}, it is equivalent to finding the ground energy (lowest eigenvalue) of the following Hamiltonian:
\[
  H_C=\sum_{(u,v)\in E}Z_uZ_v,
\]
where $Z_u$ stands for the Pauli $Z$ operator acting on the qubit $u$.
Both \textsc{qaa} as well as \textsc{qaoa} use the quantum adiabatic evolution to calculate the ground energy of $H_C$ as follows.
\begin{enumerate}
  \item Select a Hamiltonian $H_B$ such that $H_B$ has a simple ground state\deleted{, and $H_B$ and $H_C$ commute}.
  \item Initialize the quantum state to the ground state of $H_B$.
  \item Evolve the quantum state under a time-dependent Hamiltonian that gradually changes from $H_B$ to $H_C$.
  \item The final state of the quantum system is the ground state of $H_C$, whose energy can be measured to obtain the solution to the maximum cut problem.
\end{enumerate}
\renewcommand{\ket}[1]{|#1\rangle}
In the quantum circuit model, for the evolving of quantum state under a gradually changing Hamiltonian, one common approach is Trotter–Suzuki decomposition \cite{suzukiGeneralTheoryFractal1991}, where $\exp(iH_B\theta)$ and $\exp(iH_C\theta)$ should be implemented. For the implementation of $\exp(iH_C\theta)$, the parity networks come into play. For example, the naive implementation of $\exp(iH_C\theta)$ can be done by applying the following sequence of operations for all $(u,v)\in E(G)$:
\begin{center}
  \begin{tikzpicture}
    \begin{yquant}
      qubit {$\ket{b_u}$} x[1];
      qubit {$\ket{b_v}$} x[+1];
      cnot x[1] | x[0];
      box {$R_z\left(\theta\right)$} x[1];
      cnot x[1] | x[0];
    \end{yquant},
  \end{tikzpicture}
\end{center}
where $R_z$ is the single qubit gate rotating along Pauli-$Z$ axis. Suppose the input for the circuit is $\ket{b_u,b_v}$, then the output will be $\exp(-i(b_u\oplus b_v) \theta)\ket{b_u,b_v}$, which is the desired effect of $\exp(iH_C\theta)$. So suppose we have a parity network with a small size, we can implement the operation $\exp(iH_C\theta)$ by appropriately inserting $R_z$ gates where a new binary term is generated. \added{See \cref{fig:pn-example2} for an example of how the parity network in \cref{fig:optimal-gpn} can be transformed into a circuit implementing $\exp(iH_C\theta)$.}\clu{Added how parity network transform to quantum circuits}%

\begin{figure}[ht!]
  \centering
  \begin{tikzpicture}[scale=1]
    \begin{yquant}[operator/separation=2mm]
      qubit {} x[4];

      cnot x[1] | x[0];

      box {$R_Z$} x[1];

      cnot x[2] | x[0];

      box {$R_Z$} x[2];

      cnot x[3] | x[0];

      box {$R_Z$} x[3];

      cnot x[1] | x[2];

      box {$R_Z$} x[1];

      cnot x[3] | x[2];

      box {$R_Z$} x[3];

      cnot x[2] | x[0];

      cnot x[1] | x[2];
      cnot x[3] | x[2];

    \end{yquant}
  \end{tikzpicture}
  \caption{The corresponding circuit of the parity network in \cref{fig:optimal-gpn}. It is generated by inserting a $R_z$ gate whenever and wherever a new parity term is generated.}
  \label{fig:pn-example2}

\end{figure}

A \textsc{cnot} gate can also be represented as an invertible matrix over $\mathbb{F}_2^{n \times n}$, i.e.,
\[
  {\displaystyle \textsc{cnot}(i,j)={\begin{pmatrix}1&&&&&&\\&\ddots &&&&&\\&&1&&&&\\&&&\ddots &&&\\&&1&&1&&\\&&&&&\ddots &\\&&&&&&1\end{pmatrix}}}.
\]%
Thus the transformation applied by an $n$-qubit \textsc{cnot} circuit can be written as an invertible matrix $\mathbb{M} \in \mathbb{F}_2^{n \times n}$, and the optimization of \textsc{cnot} circuit is equivalent to transforming $\mathbb{M}$ to identity using {row-addition operations, and each row-addition actually corresponds to a \textsc{cnot} gate}. %
Therefore, the optimization of \textsc{cnot} circuit is equivalent to minimizing the number of row-additions to transform an invertible matrix to identity.

\section{Proof of \cref{lem:reverse}}\label{app:lem-rev}

\begin{figure}[ht!]
  \centering\small
  \subcaptionbox{}{
    \centering
    \begin{tikzpicture}[scale=.75]
      \begin{yquant}[operator/separation=1.2mm]
        qubit {$\{1\}$} x[1];
        qubit {$\{2\}$} x[+1];
        qubit {$\{3\}$} x[+1];
        qubit {$\{4\}$} x[+1];

        \mycnot{\protect{1, 2}}{cnot x[1] | x[0]};

        \mycnot{\protect{1, 3}}{cnot x[2] | x[0]};

        \mycnot{\protect{1, 4}}{cnot x[3] | x[0]};

        \mycnot{\protect{2, 3}}{cnot x[1] | x[2]};

        \mycnot{\protect{3, 4}}{cnot x[3] | x[2]};

        \myccnot{Red4}{3}{cnot x[2] | x[0]};

        \myccnot{Red4}{2}{cnot x[1] | x[2]};
        \myccnot{Red4}{4}{cnot x[3] | x[2]};

        output {$\{1\}$} x[0];
        output {$\{2\}$} x[1];
        output {$\{3\}$} x[2];
        output {$\{4\}$} x[3];
      \end{yquant}
    \end{tikzpicture}
  }
  \hspace{0.5cm}
  \subcaptionbox{}{
    \centering
    \begin{tikzpicture}[scale=.75]
      \begin{yquant}[operator/separation=1.2mm]
        qubit {$\{1\}$} x[1];
        qubit {$\{2\}$} x[+1];
        qubit {$\{3\}$} x[+1];
        qubit {$\{4\}$} x[+1];

        \mycnot{\protect{3,4}}{cnot x[3] | x[2]};
        \mycnot{\protect{2,3}}{cnot x[1] | x[2]};
        \mycnot{\protect{1,3}}{cnot x[2] | x[0]};
        \mycnot{\protect{1, 4}}{cnot x[3] | x[2]};
        \mycnot{\protect{1, 2}}{cnot x[1] | x[2]};
        \myccnot{Red4}{\protect{4}}{cnot x[3] | x[0]};
        \myccnot{Red4}{\protect{3}}{cnot x[2] | x[0]};
        \myccnot{Red4}{\protect{2}}{cnot x[1] | x[0]};

        output {$\{1\}$} x[0];
        output {$\{2\}$} x[1];
        output {$\{3\}$} x[2];
        output {$\{4\}$} x[3];
      \end{yquant}
    \end{tikzpicture}
  }
  \caption{(a) The parity network \cref{fig:pn-example}b and (b) its inverse.}
  \label{fig:gpn-reversal}
\end{figure}

We prove by induction on the number of gates executed in the circuit. To be precise, given a parity network $C$ of $\ell$ operations and its inverse $C^{\dagger}$, for all $k=0,1,2,\dots,\ell$, we prove that if we execute the first $k$ operations in $C^{\dagger}$ and the first $\ell-k$ operations in $C$, the terms on wire $i$ in $C$ will be the same as the term on wire $i$ in $C^{\dagger}$.

\textit{Base case}: When $k=0$, the statement is trivially true, since the $C$ has finished with the term on wire $i$ in $C$ being $\{i\}$, and $C^{\dagger}$ has not started yet with the term on wire $i$ being $\{i\}$.

\textit{Inductive step}: Suppose the statement is true for $k=0,1,\dots,k_0-1$. %
We suppose the $(\ell-k_0)$-th operation in $C$ is $\rowadd{i,j}$ and thus makes the $k_0$-th operation in $C^{\dagger}$ being $\rowadd{i,j}$. By the induction hypothesis, the terms on wire $i$ in $C$ and $i$ in $C^{\dagger}$ are the same, and we denote them by $A$. Likewise, we use $B$ to denote the term on $j$. So after the $k_0$-th operation, the wire $j$ in $C^{\dagger}$ evaluates to $A\oplus B$, the same as the term on wire $j$ before the $(\ell-k_0+1)$-th operation in $C$. Therefore the statement is true for $k_0$ and the induction is complete.
\begin{center}
  \begin{tikzpicture}
    \begin{yquant}[operator/separation=2.3mm,register/minimum height=0.5cm]
      qubit {$i:A$} x[1];
      qubit {$j:A\oplus B$} x[+1];

      output {$A$} x[0];
      output {$B$} x[1];

      ["south: $(\ell-k_0+1)$st in $C$" {font=\protect\scriptsize, inner sep=0mm}]cnot x[1] | x[0];
    \end{yquant}

    \begin{scope}[xshift=6cm]
      \begin{yquant}[operator/separation=2.3mm,register/minimum height=0.5cm]
        qubit {$i:A$} x[1];
        qubit {$j:B$} x[+1];

        output {$A$} x[0];
        output {$A\oplus B$} x[1];

        ["south: $k_0$th in $C^{\dagger}$" {font=\protect\scriptsize, inner sep=0mm}]cnot x[1] | x[0];
      \end{yquant}
    \end{scope}
  \end{tikzpicture}
\end{center}
By the induction, every term generated in $C$ has also been generated in $C^{\dagger}$ (and vice versa) and \cref{lem:reverse} is proved.

\section{Proof of \cref{lem:girth-5}}

For an integer~$k\ge 2$, a \emph{finite projective plane} of order $k$ is a plane with~$k^2+k+1$ points and~$k^2+k+1$ lines such that
\begin{enumerate}[label={(P\arabic*)},left=10pt]
  \item $k+1$ points on each line;
  \item $k+1$ lines through each point;
  \item for any two distinct points, there is a unique line passing through them; and
  \item for any two distinct lines, there is a unique point on both lines.
\end{enumerate}

\begin{theorem}[\cite{kartesziIntroductionFiniteGeometries1976}]%
  For each a prime power~$k$, there exists a finite projective plane of order $k$.
\end{theorem}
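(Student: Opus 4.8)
The plan is to realize the desired plane as the Desarguesian projective plane $PG(2,q)$ coordinatized by the finite field $\mathbb{F}_q$, where $q = k$. First I would invoke the classical existence of a Galois field $\mathbb{F}_q$ with exactly $q$ elements for every prime power $q = p^r$, and then work entirely inside the three-dimensional vector space $\mathbb{F}_q^3$. The construction is purely linear-algebraic: declare the \emph{points} to be the one-dimensional linear subspaces of $\mathbb{F}_q^3$, the \emph{lines} to be the two-dimensional linear subspaces, and incidence to be containment. Everything then reduces to counting subspaces and applying the dimension formula over $\mathbb{F}_q$.

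The counting step comes first. The number of nonzero vectors in $\mathbb{F}_q^3$ is $q^3 - 1$, and each one-dimensional subspace contains exactly $q - 1$ of them, so the number of points is $(q^3 - 1)/(q - 1) = q^2 + q + 1$. By the standard projective duality sending a subspace to its annihilator, the number of two-dimensional subspaces equals the number of one-dimensional subspaces, so there are likewise $q^2 + q + 1$ lines, matching the required count.

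Next I would verify the four incidence axioms, which split into a pair established directly and a dual pair. For (P1), a two-dimensional subspace $W$ has $q^2 - 1$ nonzero vectors grouped into one-dimensional subspaces of $q - 1$ nonzero vectors each, giving $(q^2 - 1)/(q - 1) = q + 1$ points on each line. For (P3), two distinct one-dimensional subspaces $U_1, U_2$ have $U_1 + U_2$ of dimension exactly two, which is the unique two-dimensional subspace containing both. Dually, for (P2), the two-dimensional subspaces containing a fixed one-dimensional subspace $U$ correspond bijectively to the one-dimensional subspaces of the quotient $\mathbb{F}_q^3/U \cong \mathbb{F}_q^2$, of which there are $q + 1$; and for (P4), two distinct two-dimensional subspaces $W_1, W_2$ satisfy $\dim(W_1 + W_2) = 3$, so the dimension formula gives $\dim(W_1 \cap W_2) = 2 + 2 - 3 = 1$, a unique common point. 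One can either carry out all four verifications explicitly or invoke the point–line duality to deduce (P2) and (P4) from (P1) and (P3), halving the work.

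There is no deep obstacle here, as the construction is entirely classical; the only genuine external input is the existence of $\mathbb{F}_q$ for arbitrary prime powers, which rests on the theory of splitting fields and is exactly where the hypothesis that $k$ is a prime power (rather than an arbitrary integer) is used. The subtlest bookkeeping point is the duality argument justifying that there are $q^2 + q + 1$ lines and making (P2), (P4) mirror (P1), (P3); I would make sure to state the annihilator (or orthogonal-complement) bijection carefully, since this is the one place where a symmetry claim replaces a direct count.
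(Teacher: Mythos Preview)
Your proposal is correct and is the standard Desarguesian construction of $PG(2,q)$. Note, however, that the paper does not prove this theorem at all: it is stated with a citation to K\'arteszi's textbook and used as a black box in the construction of the girth-five graph for Lemma~\ref{lem:girth-5}. So there is no ``paper's own proof'' to compare against; your argument is essentially what one would find in the cited reference.
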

\added{
  We give the following lemma which is used in our construction, whose proof is deferred to the end of this section.
  \begin{lemma}\label{lem:getprime}
    For any \( t \ge 21 \), the largest prime number \( p \) satisfying
    \begin{equation}
      p^2 + p + 1 \le t \label{eq:primesize}
    \end{equation}
    also satisfies
    \[
      p^2 + p + 1 > \frac{t}{4}.
    \]
  \end{lemma}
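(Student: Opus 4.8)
The plan is to derive this from Bertrand's postulate (the Bertrand--Chebyshev theorem): for every integer $n\ge 1$ there is a prime $q$ with $n<q\le 2n$. Write $f(x)=x^2+x+1$, so that the quantity in \eqref{eq:primesize} is $f(p)$ and the claim is that $f(p)>t/4$ whenever $p$ is the \emph{largest} prime with $f(p)\le t$.

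First I would check that this $p$ is well defined: for $t\ge 21$ we have $f(2)=7\le t$, so the set of primes $p$ with $f(p)\le t$ is nonempty, and since $f$ is strictly increasing and unbounded on the positive integers this set is finite and hence has a maximum. Next, assume for contradiction that $f(p)\le t/4$, and let $q$ be the least prime exceeding $p$. Maximality of $p$ forces $f(q)>t$. Applying Bertrand's postulate with $n=p$ produces a prime in the interval $(p,2p]$, and since $q$ is the least prime above $p$ this gives $q\le 2p$. Then
\[
t<f(q)=q^2+q+1\le (2p)^2+2p+1=4p^2+2p+1<4p^2+4p+4=4\,f(p)\le t,
\]
which is absurd; therefore $f(p)>t/4$, as claimed.

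I do not expect a genuine obstacle here: the entire content is the ``doubling'' gap furnished by Bertrand's postulate, which after squaring yields exactly the factor $4$ in the conclusion. The only points to watch are the well-definedness of $p$ (guaranteed by $t\ge 21$; in fact any $t\ge 7$ would suffice for the lemma in isolation, so the constant $21$ is chosen for convenience in the later construction) and the trivial inequality $4p^2+2p+1<4p^2+4p+4$, valid for every $p\ge 1$.
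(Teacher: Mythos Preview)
Your proof is correct and follows essentially the same approach as the paper: both arguments reduce to Bertrand's postulate, exploiting that a factor-$2$ gap between consecutive primes becomes a factor-$4$ gap after applying $f(x)=x^2+x+1$. The paper introduces the auxiliary integer $x=\max\{y:f(y)\le t\}$ and applies Bertrand to $\lceil x/2\rceil$ to locate a prime in $(\lceil x/2\rceil,\,x]$, whereas you argue by contradiction and apply Bertrand directly to $p$; your version is slightly more streamlined but the mathematical content is the same.
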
}
We construct a bipartite graph out of a finite projective plane as follows; see \Cref{fig:fpp-graph} for an illustration.

\begin{figure}[ht!]
  \centering\small
  \subcaptionbox{}{
    \centering
    \begin{tikzpicture}[scale=.5]
      \draw (1, 0) arc (0:360:1);
      \foreach \i in {0, 1, 2} {
          \draw ({120*\i+90}:2) -- ({120*\i-30}:2);
        }
      \foreach[count=\i from 0] \l in {a, b, c} {
          \node[empty vertex] (u\i) at ({90-120*\i}:2) {};
          \node at ({90-120*\i}:2.5) {\l};
        }
      \foreach[count=\i from 0] \l in {e, f, g} {
          \node[empty vertex] (v\i) at ({270-120*\i}:1) {};
          \node at ({270-120*\i}:1.4) {\l};
          \draw (u\i) -- (v\i);
        }
      \node[empty vertex, label={[label distance=-4pt]above right:d}] (c) at (0, 0) {};
    \end{tikzpicture}
  }
  \hspace{1cm}
  \subcaptionbox{}{
    \centering
    \begin{tikzpicture}
      \foreach[count=\i] \l in {a, ..., g}
      \node[empty vertex, "\l"] (\l) at (\i, 1) {};
      \foreach[count=\i] \list/\l in {{a,g,b}/agb, {a,f,c}/afc, {b,e,c}/bec, {a,d,e}/ade, {b,d,f}/bdf, {c,d,g}/cdg, {e,f,g}/efg} {
          \node[filled vertex,
            "\l" below] (l\i) at (\i, 0) {};
          \foreach \x in \list \draw (l\i) -- (\x);
        }
    \end{tikzpicture}
  }
  \caption{(a) The Fano plane, a finite projective plane of order 2, and (b) the bipartite graph constructed.}
  \label{fig:fpp-graph}
\end{figure}
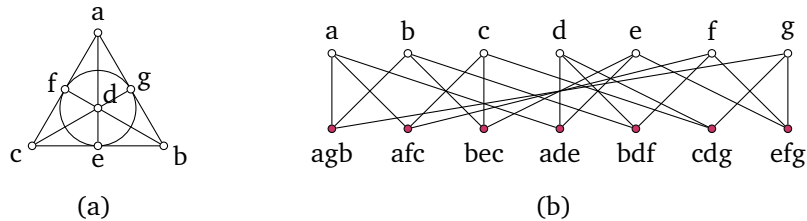

\begin{proof}
  For arbitrary~$n\ge 42$, we take~$k$ to be the largest prime power such that~$k^2+k+1 \le n/2$, and take a finite projective plane of order~$k$.
  We construct a graph~$G$ as follows.
  For each point~$p$ and each line~$\ell$ in the projective plane, we introduce a vertex, denoted by~$u_p$ and~$v_\ell$, respectively.
  We add an edge between~$u_p$ and~$v_\ell$ if and only if~$p$ is on~$\ell$.  Finally, we add~$n - 2 (k^2+k+1)$ isolated vertices to make~$n$ vertices in total.
  By (P1), (P2) and ~\cref{lem:getprime},~$m = (k^2+k+1)(k+1) = \Theta\left(n\sqrt n\right)$ \clu{The lemma is to ensure $k$ is big enough so that $m=\Omega(n\sqrt n)$.}.
  The graph cannot contain any triangle because every edge is between a point vertex and a line vertex; moreover, by (P3) and~(P4), there cannot be an induced cycle on four vertices.  Thus, the girth of~$G$ is at least five.
\end{proof}

\added{
  Finally, to prove \cref{lem:getprime}, we restate a classical result from elementary number theory:
  \begin{theorem}[Bertrand's Postulate \cite{chebyshev1852}]
    \label{thm:bertrand}
    For any integer \( n > 1 \), there exists a prime \( p \) such that \( n < p < 2n \).
  \end{theorem}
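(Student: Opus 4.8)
The plan is to give the classical elementary argument of Erdős, which the citation to Chebyshev~\cite{chebyshev1852} points to, organized around the prime factorization of the central binomial coefficient~$\binom{2n}{n}$. First I would record a lower bound: since~$\binom{2n}{n}$ is the largest of the~$2n+1$ summands in~$(1+1)^{2n}=4^n$, we have~$\binom{2n}{n}\ge 4^n/(2n+1)$. The strategy is then to assume for contradiction that no prime lies in~$(n,2n)$ and to derive a contradicting \emph{upper} bound on~$\binom{2n}{n}$ by showing that, under this assumption, its prime factors are confined to small primes that cannot account for its size.

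The technical core is a term-by-term analysis of exponents. For a prime~$p$, let~$R(p)$ be its exponent in~$\binom{2n}{n}$; by Legendre's formula,
\[
  R(p)=\sum_{i\ge 1}\left(\left\lfloor\frac{2n}{p^i}\right\rfloor-2\left\lfloor\frac{n}{p^i}\right\rfloor\right),
\]
where each summand is~$0$ or~$1$ and vanishes as soon as~$p^i>2n$, so that~$p^{R(p)}\le 2n$. From this I extract four facts: no prime~$p>2n$ divides~$\binom{2n}{n}$; the primes~$p\le\sqrt{2n}$ contribute at most~$(2n)^{\sqrt{2n}}$ in total (at most~$\sqrt{2n}$ of them, each factor~$\le 2n$); every prime~$p>\sqrt{2n}$ has~$R(p)\le 1$; and, crucially, no prime in~$(2n/3,\,n]$ divides~$\binom{2n}{n}$ when~$n\ge 3$ (such a~$p$ satisfies~$2p\le 2n<3p$, so it appears exactly twice in~$(2n)!$ and twice in~$(n!)^2$, giving exponent~$0$). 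Combining these with the contradiction hypothesis, every prime factor is at most~$2n/3$, so
\[
  \binom{2n}{n}\le (2n)^{\sqrt{2n}}\prod_{p\le 2n/3}p.
\]

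The step I expect to be the main obstacle is the primorial bound~$\prod_{p\le x}p\le 4^x$, which controls the second factor. I would prove it by strong induction on~$\lfloor x\rfloor$: the even case reduces to a smaller odd case, and for odd~$x=2m+1$ every prime in the range~$(m+1,\,2m+1]$ divides~$\binom{2m+1}{m}\le 4^m$ (since two equal copies of this coefficient appear among the summands of~$2^{2m+1}$), so that~$\prod_{p\le 2m+1}p\le 4^m\prod_{p\le m+1}p\le 4^m\cdot 4^{m+1}=4^{2m+1}$ by the inductive hypothesis. Feeding~$\prod_{p\le 2n/3}p\le 4^{2n/3}$ into the display above and comparing with the lower bound yields
\[
  4^{n/3}\le (2n+1)\,(2n)^{\sqrt{2n}},
\]
which fails for all~$n$ beyond an explicit threshold (a routine estimate gives a bound around~$n\ge 468$). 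Finally, I would dispatch the remaining finitely many small~$n$ by exhibiting the chain of primes~$2,3,5,7,13,23,43,83,163,317,631$, in which each term is smaller than twice its predecessor and the last exceeds the threshold; this certifies a prime in~$(n,2n)$ for every small~$n$, completing the proof.
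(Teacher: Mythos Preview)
Your argument is the standard Erd\H{o}s proof and is correct as sketched, but there is nothing in the paper to compare it against: the paper does not prove Bertrand's Postulate at all. It merely quotes the theorem with a citation to Chebyshev~\cite{chebyshev1852} and then invokes it as a black box in the proof of \Cref{lem:getprime}. So your write-up goes well beyond what the paper does here; if the goal is to match the paper, you should simply cite the result and move on.
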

  The proof of \cref{lem:getprime} proceeds as follows:
  \begin{proof}
    Let \( x \) be the largest positive integer satisfying \( x^2 + x + 1 \le t \). When \( t \ge 21 \), \( x \) must exist and \( x \ge 4 \). This implies \( \left\lceil \frac{x}{2} \right\rceil \ge 2 \). By \cref{thm:bertrand}, there exists a prime \( q \) such that
    \[
      \left\lceil \frac{x}{2} \right\rceil < q < 2\left\lceil \frac{x}{2} \right\rceil \le x + 1,
    \]
    i.e.,
    \[
      \left\lceil \frac{x}{2} \right\rceil < q \le x.
    \]
    Since \( p \) is the largest prime satisfying \eqref{eq:primesize},
    \[
      p \ge q > \left\lceil \frac{x}{2} \right\rceil,
    \]
    which yields
    \[
      p^2 + p + 1 \ge \left\lceil \frac{x}{2} \right\rceil^2 + \left\lceil \frac{x}{2} \right\rceil + 1 > \frac{t}{4}.
    \]
  \end{proof}
}

\section{On non-biconnected perfect cancellation graphs}%
\label{app:prof-chordal-ideal}
\begin{observation}
  \label{lem:non-biconnected}
  If $G_1, G_2$ are perpane graphs with perfect parity networks $C_1, C_2$, and they share a common vertex $v$, when the vertex of $G_1, G_2$ at $v$ is perpane, with a perfect parity network constructed by simply concatenating $C_1$ and $C_2$.
\end{observation}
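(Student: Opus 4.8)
The plan is to verify that the circuit $C$ obtained by running $C_1$ first and then $C_2$, viewed as acting on the wire set $V(G_1)\cup V(G_2)$, is a perfect graphic parity network for the graph $G$ with $V(G)=V(G_1)\cup V(G_2)$, $E(G)=E(G_1)\cup E(G_2)$, and $V(G_1)\cap V(G_2)=\{v\}$. Write $n_i=|V(G_i)|$ and $m_i=|E(G_i)|$; then $n=|V(G)|=n_1+n_2-1$ and $m=|E(G)|=m_1+m_2$, since $G_1$ and $G_2$ share no edge. The first step is the size count: because $C_i$ is perfect it has $m_i+n_i-1$ gates, so
\[
  |C|=|C_1|+|C_2|=(m_1+n_1-1)+(m_2+n_2-1)=m+n-1,
\]
which is exactly the perfect size for the connected graph $G$. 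Thus it remains only to check that $C$ is a legitimate graphic parity network for $G$.

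For this, I would regard $C_1$ as a circuit on all $n$ wires that leaves every wire in $V(G_2)\setminus\{v\}$ idle, and $C_2$ likewise leaving $V(G_1)\setminus\{v\}$ idle; this is legitimate since $C_i$ is an $n_i$-qubit circuit on precisely the wires $V(G_i)$. Starting from the all-singleton input, during the $C_1$ phase every term $\{a,b\}$ with $(a,b)\in E(G_1)$ appears, and when $C_1$ finishes every wire of $V(G_1)$ — in particular wire $v$ — is restored to its singleton term, because $C_1$ is a parity network; meanwhile the wires of $V(G_2)\setminus\{v\}$ have not been touched. Hence at the instant $C_2$ begins, the state on all $n$ wires is exactly the all-singleton input of $G$, so the $C_2$ phase behaves as it does in isolation: it produces every term $\{a,b\}$ with $(a,b)\in E(G_2)$ and restores all wires of $V(G_2)$, while leaving $V(G_1)\setminus\{v\}$ untouched. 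Therefore $C$ exhibits every term in $E(G_1)\cup E(G_2)=E(G)$ and ends with every wire carrying its original singleton, so it is a graphic parity network for $G$, and by the count above it is perfect.

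The only subtle point — and the closest thing to an obstacle — is the handover at the cut vertex $v$: one must use that $C_1$, being a parity network, restores wire $v$ before $C_2$ starts, and that $G$ has no edge between $V(G_1)\setminus\{v\}$ and $V(G_2)\setminus\{v\}$, so that no term is left unaccounted for. Having established the observation, I would finish the non-biconnected case of \Cref{thm:chordal-ideal} by iterating it along the blocks of the input graph: a connected graph is obtained from its biconnected components by repeatedly identifying a single cut vertex, so repeated concatenation — together with \Cref{lem:comp-cond} for each block — yields a perfect graphic parity network for any graph all of whose blocks admit perfect graphic parity networks.
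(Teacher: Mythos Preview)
Your proof is correct and follows the same approach as the paper: the paper's proof is just the size count $|C_1|+|C_2|=(m_1+n_1-1)+(m_2+n_2-1)=m+n-1$, which you reproduce exactly. You additionally spell out the (straightforward) verification that the concatenation is a valid graphic parity network for $G$, which the paper leaves implicit, and your outline of how the observation feeds into the non-biconnected case of \Cref{thm:chordal-ideal} matches what the paper does via \Cref{lem:suborder} and the block decomposition.
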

The proof is straightforward: $|C_1\cup C_2|=|C_1|+|C_2|=|V_1|+|E_1|-1+|V_2|+|E_2|-1=|V_1\cup V_2|+|E_1\cup E_2|-1=|V|+|E|-1$.

\Cref{lem:non-biconnected} suggests when coming accross a non-biconnected graph, we should first decompose it into biconnected components, and then synthesize a graphic parity network for each component. The final circuit is obtained by concatenating the circuits for each component. With a perfect cancellation ordering for the entire graph, we can easily obtain a perfect cancellation ordering for each component as follows:
\begin{lemma}
  \label{lem:suborder}
  Given a perfect cancellation ordering $\sigma$ of a graph $G$ and a biconnected component $G_i$ of $G$, then $\sigma':=\sigma|_{V(G_i)}$ is a perfect cancellation ordering of $G_i$.
\end{lemma}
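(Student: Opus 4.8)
The plan is to verify the definition of a perfect cancellation ordering for $G_i$ directly from that of $\sigma$, using that $G_i$ is an induced subgraph of $G$ and that it is biconnected. First I would note that, since $G_i$ is induced, $N_{G_i}(v) = N_G(v)\cap V(G_i)$ for every $v\in V(G_i)$, so $N^{+}_{\sigma'}(v) = N^{+}_{\sigma}(v)\cap V(G_i)$; moreover, adjacency and the $\sigma$-order restricted to $V(G_i)$ are the same whether read off $\sigma$ or $\sigma'$. Because $G_i$ is biconnected, the general condition over all components of $G_i - v$ collapses to ``$N^{+}_{\sigma'}(v)$ is $\sigma'$-linked for every $v\in V(G_i)$'', exactly as noted before \Cref{lem:comp-cond} (with the one-vertex and two-vertex blocks being trivial). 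Thus the task reduces to showing that $N^{+}_{\sigma}(v)\cap V(G_i)$ is $\sigma$-linked for every $v\in V(G_i)$.

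The key structural point is that $G_i - v$ is connected by biconnectedness, so $V(G_i)\setminus\{v\}$ lies entirely inside one component $C_v$ of $G - v$. Since $\sigma$ is a perfect cancellation ordering of $G$, the set $N^{+}_{\sigma}(v)\cap C_v$ is $\sigma$-linked, and $N^{+}_{\sigma}(v)\cap V(G_i)\subseteq N^{+}_{\sigma}(v)\cap C_v$. The point I expect to be the main obstacle is precisely that a subset of a $\sigma$-linked set need not be $\sigma$-linked: deleting the intermediate vertices of $\sigma|_{N^{+}_{\sigma}(v)\cap C_v}$ could separate a pair that is consecutive in $V(G_i)$ but not adjacent. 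So the heart of the argument is to rule out such ``bad'' intermediate vertices.

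To do this I would take two vertices $a <_{\sigma} b$ that are consecutive in $\sigma|_{N^{+}_{\sigma}(v)\cap V(G_i)}$ and let $w_1,\dots,w_k$ be the vertices of $N^{+}_{\sigma}(v)\cap C_v$ lying strictly between them in $\sigma$; none of the $w_j$ lies in $V(G_i)$, or else $a$ and $b$ would not be consecutive. By $\sigma$-linkedness of $N^{+}_{\sigma}(v)\cap C_v$, the vertices $a,w_1,\dots,w_k,b$ form a path in $G$. If $k\ge 1$, then appending $v$ at both ends gives a cycle $v\,a\,w_1\cdots w_k\,b\,v$ of $G$ (all these vertices are distinct and are neighbours of $v$); a cycle is biconnected, hence contained in a single biconnected component, and that component contains the edge $v a\in E(G_i)$. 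Since biconnected components partition $E(G)$, this component must be $G_i$, forcing $w_1\in V(G_i)$, a contradiction. Hence $k=0$, so $a$ and $b$ are in fact consecutive in $\sigma|_{N^{+}_{\sigma}(v)\cap C_v}$ and therefore adjacent in $G$, hence in $G_i$. This establishes that $N^{+}_{\sigma}(v)\cap V(G_i)$ is $\sigma$-linked for every $v\in V(G_i)$ (the case $|N^{+}_{\sigma'}(v)|\le 1$ being vacuous), which is exactly what \Cref{lem:suborder} asks for.
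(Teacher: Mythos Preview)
Your proof is correct, but it takes a longer route than the paper's. The paper observes directly that the inclusion $N^{+}_{\sigma}(v)\cap V(G_i)\subseteq N^{+}_{\sigma}(v)\cap C_v$ is actually an equality: every neighbour of $v$ lying in the component $C_v$ of $G-v$ that contains $G_i-v$ must already belong to $V(G_i)$, because the edge from $v$ to that neighbour lies in a block containing $v$ on the $G_i$-side of the block--cut tree, and the only such block is $G_i$ itself. With this equality in hand, $\sigma$-linkedness of $N^{+}_{\sigma'}(v)$ is immediate from the hypothesis on $\sigma$, and no further argument is needed. (The paper phrases this as a case split on whether $v$ is a cut vertex of $G$, but the content is the same.)

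Your argument, by contrast, keeps only the inclusion and then rules out ``extra'' intermediate vertices $w_j\in (N^{+}_{\sigma}(v)\cap C_v)\setminus V(G_i)$ via the cycle $v\,a\,w_1\cdots w_k\,b\,v$: since a cycle is 2-connected it sits in a single block, which must be $G_i$ because it contains the edge $va$, forcing $w_1\in V(G_i)$. This is perfectly valid and in effect re-derives the block-structure fact above in an ad hoc way. The upside is that your argument is self-contained and treats cut and non-cut vertices uniformly; the downside is that you work harder than necessary, since the obstacle you carefully dismantle (``a subset of a $\sigma$-linked set need not be $\sigma$-linked'') never actually arises here.
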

\begin{proof}
  For a non-cut vertex $v$ in $G_i$, $N(v)\subset V(G_i)$, so $N_{\sigma'}^{+}(v)=N_{\sigma}^{+}(v)$ and remains $\sigma'$-linked. For a cut vertex $v$, let $C$ be the component of $G-v$ that contains $G_i-v$, then $(N(v)\cap C)\subset G_i-v$, so $N_{\sigma}^{+}(v)\cap C=N_{\sigma'}^{+}(v)\cap C$ remains $\sigma'$-linked. Therefore, $\sigma'$ is a perfect cancellation ordering of $G_i$.
\end{proof}

Finally, \cref{thm:chordal-ideal} is a direct consequence of \cref{lem:comp-cond}, \cref{lem:non-biconnected} and \cref{lem:suborder}, since all cut vertices and biconnected components can be obtained in linear time \cite{hopcroftjohnAlgorithm447Efficient1973}.

\section{Proof of \Cref{cor:simpl}}\label{app:proof-cor:simpl}

\begin{proof}[Proof of \Cref{cor:simpl}]
  We have seen in \Cref{lem:random-correctness} that \Cref{alg:sythrandom} always correctly synthesizes a graphic parity network.  We now analyze its expected size.
  Each operation in lines 6, 14, and 18 generates a term for a new edge.  Thus, the total number of them is~$m$.
  Line~8 is executed at most once for each vertex, and hence the total number is at most~$n$.
  Let~$s_{12}$ denote the expected number of executions of line~12.
  Therefore, the expected circuit size synthesized by this algorithm is at most~$m + 2 s_{12}+ n$, which is~$m + O(n^{1.5}\sqrt{2\log n})$ because
  \[
    \begin{aligned}
      \min_{1\le t\le n}\left\{2\sum_{i< t}d_i+\frac{(n-t)n\log n}{d_{t}}\right\} \le & \min_{1\le t\le n}\left\{2td_t+\frac{(n-t)n\log n}{d_{t}}\right\}         \\
      \le                                                                             & \min_d \left\{2nd+\frac {n^2\log n}{d}  \right\} = n^{1.5}\sqrt{2\log n}.
    \end{aligned}
  \]
  Moreover, when all but a constant number $c_1$ of vertices have degrees at least $c_2n$, we have
  \[
    \min_{1\le t\le n}\left\{2\sum_{i< t}d_i+\frac{(n-t)n\log n}{d_{t}}\right\} \le  2\sum_{d_i<c_2n} d_i + \frac {(n-c_1)n\log n}{c_2n}
    \le  2c_1\added{c_2}n+ \frac 1{c_2}n\log n.
  \]
  Thus, the size is~$m + O(n\log n)$.
\end{proof}

\section{Sketch of the parameterized algorithm}\label{app:fpt-recg}

\added{First we give the definition of nice tree decomposition. A tree decomposition $\mathcal{T}=(T,\{X_t\}_{t\in V(T)})$ is \emph{nice} if it satisfies the following properties:
\begin{itemize}
  \item \( X_r = \emptyset \) and \( X_{\ell} = \emptyset \) for every leaf \( \ell \) of \( T \). In other words, all the leaves as well as the root contain empty bags.
  \item Every non-leaf node of \( T \) is of one of the following three types:
        \begin{itemize}
          \item Introduce node: a node \( t \) with exactly one child \( t' \) such that \( X_t = X_{t'} \cup \{v\} \) for some vertex \( v \notin X_{t'} \); we say that \( v \) is \emph{introduced} at \( t \).
          \item Forget node: a node \( t \) with exactly one child \( t' \) such that \( X_t = X_{t'} \setminus \{v\} \) for some vertex \( v \in X_{t'} \); we say that \( v \) is \emph{forgotten} at \( t \).
          \item Join node: a node \( t \) with two children \( t_1, t_2 \) such that \( X_t = X_{t_1} = X_{t_2} \).
        \end{itemize}
\end{itemize}}
\added{Now we introduce the definition of canonical representation which helps us to compress the solution space.} For a node $t\in V(T)$ and a permutation~$\sigma$ of~$V_t$ and a vertex~$v\in X_t$, the \emph{canonical representation} of $\sigma|_{N_{\sigma}^{+}(v)}$ in \(\sigma\) under \(X_t\) is an ordered set of endpoints \(\{ (p_1,q_1),(p_2,q_2),\dots,(p_k,q_k)\}\) such that for all \(i\),
\begin{enumerate}
  \item \(p_i, q_i \in X_t \cup \{\dashv, \vdash\}\), only \(p_1\) is allowed to be \(\dashv\), only \(q_k\) is allowed to be \(\vdash\),
  \item segments do not intersect, i.e., \(\sigma(p_i) < \sigma(q_i) < \sigma(p_{i+1})\),
  \item the subset of elements in \(N_{\sigma}^{+}(v)\) covered by this segment, i.e., \(C_i = \{v \mid v \in N_{\sigma}^{+}(v) \wedge \sigma(p_i) \le \sigma(v) \le \sigma(q_i)\}\), is $\sigma$-linked,
  \item there are no two consecutive \(X_t\) elements in \(\sigma|_{C_i}\), and
  \item all segments together cover all $N_{\sigma}^{+}(v)$, i.e., \(\bigcup_{i=1}^{k} C_i \setminus \{\dashv, \vdash\} = N_{\sigma}^{+}(v)\).
\end{enumerate}

In our algorithm, when working from the leaves to the root, we will keep track of all canonical representations of $\sigma|_{N_{\sigma}^{+}(v)}$ for $v\in X_t$. To be precise, for each node $t\in V(T)$, we will construct a set $c_t$ that consists of all pairs $(\tau,b)$ describing a valid order $\sigma$ such that:
\begin{enumerate}[label={(V\arabic*)},left=10pt]
  \item $\tau$ is the restriction of $\sigma$ to $X_t$, and
  \item for each $v$ in the current bag, $b_v$ is the canonical representation of $\sigma|_{N_{\sigma}^{+}(v)}$ under $X_t$.
\end{enumerate}
Now we discuss how $c_t$ can be constructed in a bottom-up manner, and finally the graph is a perfect cancellation graph if and only if for the root bag $r$, $c_r\not=\emptyset$. For leaf nodes, the construction is trivial.

For a\added{n} introduce node $t$ with child $t'$ such that $X_t=X_{t'}\cup \{v\}$ for some $v$, we process all canonical representations $\tau',b'$ in $c_t'$. We enumerate the position in $\tau'$ after which $v$ is being inserted, and for a vertex $u\in N(v)$ that precedes $v$, its succeeding neighbors will also add $v$. Then we check whether $v$ is being placed between two segments in $b_u'$ since only the endpoints of the segments are allowed to have new neighbors, and if all checks pass, we can construct $b_u$ by inserting $(v,v)$ to the appropriate position in $b_u'$.
\begin{center}
  \begin{tikzpicture}[scale=1]
    \node[filled vertex,fill=black, "$u$" below] (v) at (0,-1) {};
    \readlist* \colorarray {purplecolor!80,white,purplecolor!80,red,purplecolor!80,white,purplecolor!80}
    \foreach \x in {1,...,7}{
        \node[empty vertex,fill={\colorarray[\x]}] (u\x) at (1+\x,0) {};
        \path (v) edge[bend right=10, line width=0.1,draw=gray] (u\x);
      }
    \draw (u1)--(u2)--(u3) (u5)--(u6)--(u7);
  \end{tikzpicture}
\end{center}

For a forget node $t$ with child $t'$ such that $X_t=X_{t'}\setminus \{v\}$ for some $v$, we process all canonical representations $\tau',b'$ in $c_t'$. We first check whether the succeeding neighbors of $v$ are connected, and if not, since no new neighbors of $v$ will be introduced, $N_{\sigma}^{+}(v)$ will never become $\sigma$-linked, we skip this $\tau',b'$. Then we construct $\tau,b$ by deleting $v$ from $\tau'$ and $b'$. For each $u\in N(v)$ that lies in the front of $v$, we check if $v$ can be deleted safely from its succeeding neighbors. Without loss of generality, we assume there is a segment $(v,q)$ in $b_u'$, and the predecessor of $(v,q)$ is $(p',q')$, then we check whether $q'$ is connected to $v$ (the dashed edge must exist), because if not, after $v$ is forgotten, $q'$ can never be connected to $v$ so $N_{\sigma}^{+}(u)$ can never be $\sigma$-linked. If the check passes, we should replace the segments $(p',q'), (v,q)$ with a single $(p',q)$, and $b_u$ is constructed.
\begin{center}
  \begin{tikzpicture}[scale=1]
    \node[filled vertex,fill=black, "$u$" below] (v) at (0,-1) {};
    \def\labels{{"","$p'$","","$q'$","$v$","","$q$"}};
    \readlist* \colorarray {purplecolor!80, white,purplecolor!80,red,white,red}
    \foreach \x in {1,...,6}{
        \node[filled vertex,fill={\colorarray[\x]},label=\pgfmathparse{\labels[\x]}\pgfmathresult] (u\x) at (1+\x,0) {};
        \path (v) edge[bend right=10, line width=0.1,draw=gray] (u\x);
      }
    \draw (u1)--(u2)--(u3) (u4)--(u5)--(u6);
    \draw[dashed](u3) --(u4);
  \end{tikzpicture}
\end{center}

For a join node $t$ with children $t',t''$, then two valid orders in the two subtrees $V_{t'}$ and $V_{t''}$ can be merged only if between any pair of adjacent $X_t$ vertices in the order, either all vertices are from $V_{t'}$ or all vertices are from $V_{t''}$, otherwise the endpoint after which clash occurs (indicated red below) will not have a $\sigma$-linked succeeding neighborhood. On the other hand, once the condition holds, the full order can be constructed by arbitrarily merging the two orders, and $b_u$ can be constructed by simply merging $b'_u$ and $b''_u$ as well.
\begin{center}
  \begin{tikzpicture}
    \foreach \x in {1,...,5}{
        \node[empty vertex] (u\x) at (2*\x,0) {};
      }
    \node[empty vertex, fill=red] (u4) at (u4) {};
    \draw[-stealth] ($(u1)+(-1,0)$)--(u1);
    \draw[-stealth] (u5)-- ++(1,0);
    \foreach \i/\j in {1/2,3/4}{
        \path[Blue4] (u\i) edge[-stealth, bend left=20] (u\j);
      }
    \foreach \i/\j in {2/3,4/5}{
        \path[Blue4] (u\i) edge[-stealth, bend left=40] node[midway,empty vertex,fill=Blue4] (x\i) {} (u\j);
      }

    \foreach \i/\j in {2/3,3/4}{
        \path[Green4] (u\i) edge[-stealth, bend right=20] (u\j);
      }
    \foreach \i/\j in {1/2,4/5}{
        \path[Green4] (u\i) edge[-stealth, bend right=40] node[midway,empty vertex,fill=Green4] (y\i) {} (u\j);
      }
    \node[Blue4] at ($(u3)+(0,0.7)$) {\scriptsize order from $V_{t'}$};
    \node[Green4] at ($(u3)-(0,0.7)$) {\scriptsize order from $V_{t''}$};

    \path[red,dashed] (x4) edge[bend left=40] (y4);
  \end{tikzpicture}
\end{center}

\end{document}